\newtheorem{theorem}{Theorem}[section]
\newcommand{\relemma}{lemma}
\newcommand{\retheorem}{theorem}
\newcommand{\restatelemma}[1]{\renewcommand{\relemma}{rlemma}#1\renewcommand{\relemma}{lemma}}
\newcommand{\restatetheorem}[1]{\renewcommand{\retheorem}{rtheorem}#1\renewcommand{\retheorem}{theorem}}
\newtheorem{OB}{\bf Observation}
\newcounter{claim_nb}[theorem]
\newtheorem*{claim*}{Claim}
\newtheorem{lemma}[theorem]{Lemma}
\newtheorem{corollary}[theorem]{Corollary}
\theoremstyle{definition}
\newtheorem{definition}[theorem]{Definition}
\newcommand{\zC}{\mathcal C}
\newcommand{\zF}{\mathcal F}
\newcommand{\zP}{\mathcal P}
\newcommand{\ignore}[1]{}
\def\th@plain{%
  \thm@notefont{}
  \itshape 
}
\def\th@definition{%
  \thm@notefont{}
  \normalfont 
}
\setlist{noitemsep}
\newcommand{\T}{\mathcal{T}}
\newcommand{\cH}{\mathcal{H}}
\newcommand{\executeiffilenewer}[3]{%
\ifnum\pdfstrcmp{\pdffilemoddate{#1}}%
{\pdffilemoddate{#2}}>0%
{\immediate\write18{#3}}\fi%
} 
\newcommand{%
\executeiffilenewer{.svg}{.pdf}%
{inkscape -z -D --file=.svg %
--export-pdf=.pdf --export-latex}%
{\input{.pdf_tex}}}[1]{%
\executeiffilenewer{#1.svg}{#1.pdf}%
{inkscape -z -D --file=#1.svg %
--export-pdf=#1.pdf --export-latex}%
{\input{#1.pdf_tex}}}%
\newcommand{\svg}[2]{\def\svgwidth{#1}%
\executeiffilenewer{#2.svg}{#2.pdf}%
{inkscape -z -D --file=#2.svg %
--export-pdf=#2.pdf --export-latex}%
{\input{#2.pdf_tex}}}
\newcommand{\R}{\mathcal{R}}
\newcommand{\bR}{\overline{\mathcal{R}}}
\newcommand{\Rmvec}{R^{\textup{vec}}_{d,r}}
\newcommand{\Rm}{R_{z,r}}
\newcommand{\Rmx}{R_{z,r^*}}
\newcommand{\sm}{I_{k,r,z}}
\newcommand{\smsep}{I^{\textup{sep}}_{k,r,z}}
\newcommand{\classWone}{\textup{W[1]}}
\newcommand{\FPT}{\textup{FPT}\xspace}
\newcommand{\EP}{Erd\H os-P\'osa\xspace}
\newcommand{\defparproblemu}[4]{
 
\begin{center}
\noindent\fbox{
  \begin{minipage}{0.9\textwidth}
  \begin{tabular*}{\textwidth}{@{\extracolsep{\fill}}lr} #1 &
 \\ \end{tabular*}
  {\bf{Input:}} #2  \\
  {\bf{Find:}} #4
  \end{minipage}
  }
\end{center}
}
\newcommand{\gridtiling}{{\sc{Grid Tiling}}\xspace} 
\newcommand{\skewdisjoint}{{\sc{Maximum Skew Disjoint Paths}}\xspace} 
\newcommand{\maxdisjoint}{{\sc{Maximum Disjoint Paths}}\xspace} 
\newcommand{\disjointpath}{{\sc{Disjoint Paths}}\xspace} 
\newcommand{\skewdisjointx}{{\sc{Maximum Skew Disjoint Paths$^*$}}\xspace} 
\newcommand{\maxhhdisjoint}{{\sc{Maximum Disjoint $\cH$-Paths}}\xspace}
\newif\ifabstract
\newif\iffull
\DeclareMathAlphabet{\mathcal}{OMS}{cmsy}{m}{n}
\date{}
\author{
D\'aniel Marx\thanks{
	Institute for Computer Science and Control, Hungarian Academy of Sciences (MTA SZTAKI)
	\texttt{dmarx@cs.bme.hu}. Research  supported by the European Research Council (ERC)  grant 
``PARAMTIGHT: Parameterized complexity and the search for tight
complexity results,'' reference 280152 and OTKA grant NK105645.
  }
\and
Paul Wollan\thanks{Department of Computer Science, University of Rome, \texttt{wollan@di.uniroma1.it}.  Research supported by the European Research Council under the European Union's Seventh Framework Programme (FP7/2007-2013)/ERC Grant Agreement no. 279558.}
}
\begin{document}

\ifabstract
\begin{titlepage}
\fi
\title{An exact characterization of tractable demand patterns for maximum disjoint path problems\ifabstract\thanks{Extended abstract. Following the submission instructions, the full version is appended to the submission.}\fi}
\ifabstract
\def\thepage{}
\thispagestyle{empty}
\fi
\maketitle
%
\begin{abstract}
  We study the following general disjoint paths problem: given a
  supply graph $G$, a set $T\subseteq V(G)$ of terminals, a demand
  graph $H$ on the vertices $T$, and an integer $k$, the task is to
  find a set of $k$ pairwise vertex-disjoint valid paths, where we say
  that a path of the supply graph $G$ is valid if its endpoints are in
  $T$ and adjacent in the demand graph $H$.  For a class $\cH$ of
  graphs, we denote by \maxhhdisjoint the restriction of this problem
  when the demand graph $H$ is assumed to be a member of $\cH$.  We
  study the fixed-parameter tractability of this family of problems,
  parameterized by $k$.  Our main result is a complete
  characterization of the fixed-parameter tractable cases of
  \maxhhdisjoint for every hereditary class $\cH$ of graphs: it turns
  out that complexity depends on the existence of large induced
  matchings and large induced skew bicliques in the demand graph $H$
  (a skew biclique is a bipartite graph on vertices $a_1$, $\dots$,
  $a_n$, $b_1$, $\dots$, $b_n$ with $a_i$ and $b_j$ being adjacent if
  and only if $i\le j$).  Specifically, we prove the following
  classification for every hereditary class $\cH$.
\begin{itemize}
\item If $\cH$ does not contain every matching and does not contain every skew biclique, then \maxhhdisjoint is FPT.
\item If $\cH$ does not contain every matching, but contains every skew biclique, then \maxhhdisjoint is W[1]-hard, admits an FPT approximation, and the valid paths satisfy an analog of the \EP property.
\item If $\cH$ contains every matching, then \maxhhdisjoint is W[1]-hard and the valid paths do not satisfy the analog of the \EP property.
\end{itemize}

\end{abstract}

\ifabstract
\end{titlepage}
\fi
%

\clearpage

\iffull
\tableofcontents
\clearpage
\fi
\section{Introduction}
\label{sec:ntro}

Given an undirected graph $G$ and pairs of vertices $(s_1,t_1)$,
$\dots$, $(s_k,t_k)$, the \disjointpath problem asks for pairwise
vertex-disjoint paths $P_1$, $\dots$, $P_k$ such that $P_i$ has
endpoints $s_i$ and $t_i$. A celebrated result of Robertson and
Seymour \cite{MR97b:05088} (see also \cite{Kawarabayashi:2010:SPG:1806689.1806784}) states that \disjointpath can be solved in
time $f(k)n^3$ for some function $f$ depending only on $k$, that is,
there is a cubic-time algorithm for every fixed $k$. Therefore,
\disjointpath is not only polynomial-time solvable for every fixed
$k$, but fixed-parameter tractable parameterized by $k$. Recall that a
problem is {\em fixed-parameter tractable (\FPT)} parameterized by $k$
if it can be solved in time $f(k)n^{O(1)}$ for some computable
function $f$ depending only on $k$.  
\begin{theorem}[Robertson and Seymour \cite{MR97b:05088}]\label{thm:kdisjointpath}
\disjointpath can be solved in time $f(k)\cdot n^{O(1)}$.
\end{theorem}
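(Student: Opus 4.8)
The plan is to follow the Robertson--Seymour \emph{irrelevant vertex} strategy: repeatedly delete a vertex whose removal does not affect whether the $k$ prescribed pairs can be linked, until the graph has treewidth bounded by a function of $k$, and then solve the bounded-treewidth instance directly. Concretely, I would maintain an \FPT\ subroutine that, given the current graph, either outputs a tree decomposition of width $O(w(k))$ for a suitable bound $w(k)$, or outputs a large wall; as long as it returns a wall we extract an irrelevant vertex, delete it, and repeat (at most $n$ times, since $n$ strictly decreases), and once it returns a bounded-width decomposition we finish.

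\textbf{Bounded treewidth.} If $\tw(G)\le g(k)$, then \disjointpath\ is solved in time $f(k)\cdot n$ by dynamic programming over the tree decomposition: each bag stores which pairs of its vertices are currently joined by a path segment and which of the $k$ demand pairs are already completed, giving $g(k)^{O(g(k))}\cdot 2^{O(k)}$ states. (Equivalently, for fixed $k$ the existence of the linkage is expressible in monadic second-order logic, so Courcelle's theorem applies.) This case is routine.

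\textbf{Large treewidth.} The crux is to show there is a computable $w(k)$ such that $\tw(G)\ge w(k)$ forces a non-terminal vertex $v$ that is irrelevant to \emph{every} \disjointpath\ instance on the given terminal set. I would assemble this from three ingredients: (i) the Excluded Grid Theorem, so that large treewidth yields a large wall $W$, findable in time $f(k)\cdot n^{O(1)}$; (ii) the Flat Wall Theorem, which from a sufficiently large wall produces either a huge clique minor or a set $A$ of at most $a(k)$ ``apex'' vertices together with a large \emph{flat} subwall $W'$ of $G-A$, i.e.\ one drawn in a disc with the rest of $G-A$ attached only to its perimeter; and (iii) the claim that if $W'$ is large compared to $k$ and $a(k)$ then its \emph{central} vertex $v$ is irrelevant. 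For (iii): given any solution, its path segments crossing $W'$ are rerouted, using the planar structure of $W'$ and homotopy/parity considerations, to avoid the middle of $W'$ while still avoiding $A$; the engine behind this is the Unique Linkage (``vital linkage'') Theorem, stating that a linkage of order $k$ whose vertex set is forced must live in a graph of treewidth bounded by a function of $k$, so a linkage in a huge wall can always be pushed off any prescribed central region. All the graph-theoretic subroutines run in time $f(k)\cdot n^{O(1)}$, so with $O(n)$ iterations the total is $f(k)\cdot n^{O(1)}$.

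\textbf{Main obstacle.} The hard part is step (iii), and inside it the Unique Linkage Theorem: controlling how a linkage may be rerouted in the presence of the apex set $A$, whose vertices can ``jump over'' the wall, is precisely what makes the original argument span several papers of the Graph Minors series. The clique-minor alternative in the Flat Wall Theorem also requires care — one chooses the clique bound large enough that a clean flat wall can still be located, or disposes of the near-clique situation by a direct bounded-modulator argument — but this is secondary to the linkage rerouting.
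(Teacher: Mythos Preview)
The paper does not prove this statement at all: Theorem~\ref{thm:kdisjointpath} is simply quoted from Robertson and Seymour~\cite{MR97b:05088} (with the reference to~\cite{Kawarabayashi:2010:SPG:1806689.1806784} for a shorter proof) and used as a black box throughout, most notably inside the proof of Theorem~\ref{thm:mainalg} and Lemma~\ref{lem:repbound}. There is therefore nothing in the paper to compare your proposal against.

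That said, your outline is a faithful high-level summary of the Robertson--Seymour irrelevant-vertex method as it appears in the literature (and as streamlined in~\cite{Kawarabayashi:2010:SPG:1806689.1806784}): iterate ``either bounded treewidth, or large wall $\Rightarrow$ flat wall $\Rightarrow$ irrelevant vertex via Unique Linkage,'' then finish by dynamic programming. You correctly identify the Unique Linkage Theorem and the handling of apices as the genuinely hard steps; those are exactly the parts that span multiple Graph Minors papers and are the reason the result is cited rather than reproved here. If you were asked to supply a proof for this paper, the appropriate response would be to cite~\cite{MR97b:05088} and~\cite{Kawarabayashi:2010:SPG:1806689.1806784} rather than to attempt the argument you sketch.
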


The main focus of the present paper is a natural maximization version
of \disjointpath. Given an undirected graph $G$, pairs of vertices
$(s_1,t_1)$, $\dots$, $(s_m,t_m)$, and an integer $k$, the
\maxdisjoint problem asks for a set of $k$ pairwise vertex-disjoint {\em
  valid} paths, where we say that a path is valid if its endpoints are
$s_j$ and $t_j$ for some $1\le j \le m$.   We will typically refer to the graph $G$ as the \emph{supply graph} and the graph with vertex set $\{s_1, \dots, s_m, t_1, \dots, t_m\}$ and edge set $s_it_i$ for $1 \le i \le m$ as the \emph{demand graph}.  The \maxdisjoint problem remains NP-complete even with strong restrictions on the input:  it is NP-complete when restricted to problem instances with supply graph $G$ and demand graph $H$ such that $G \cup H$ is planar \cite{MP93}.  See \cite{NS09} for an in depth discussion of variants of the problem that are known to be computationally hard, as well as \cite{Schrijver:1990:PFV:574821} for surveys on the problem.

In contrast, for every fixed $k$ it is easy to see that \maxdisjoint
is polynomial-time solvable: we guess $k$ integers $1\le j_1 <j_2
<\dots <j_k \le m$, and then solve the \disjointpath instance on $G$
with pairs $(s_{j_1},t_{j_2})$, $\dots$, $(s_{j_k},t_{j_k})$ using the
algorithm of Theorem~\ref{thm:kdisjointpath}. Clearly, the
\maxdisjoint instance has a solution if and only if at least one of
the instances of \disjointpath has. As there are $m^{O(k)}$ different
ways of selecting the $k$ integers $j_1$, $\dots$, $j_k$, this results
in an $f(k)n^{O(k)}$ time algorithm. But is \maxdisjoint
fixed-parameter tractable? As we shall see later in this paper,
\maxdisjoint is \classWone-hard, which means that it is not \FPT under
standard complexity assumptions. The hardness result holds even if $G$
is a planar graph whose treewidth is bounded by a function of
$k$. This indicates that two fundamental algorithmic ideas underlying
the \disjointpath algorithm of Robertson and Seymour
\cite{MR97b:05088} cannot be used for \maxdisjoint: finding irrelevant
vertices exploiting properties of graphs embedded on surfaces (or
excluding minors) and using dynamic programming to solve
bounded-treewidth instances.

Despite the hardness of the general problem, there are easier
special cases of \maxdisjoint: classic results yield polynomial time 
algorithms even with $k$ as part of the input when the problem is
 restricted to certain types of demand graphs. Suppose that $S$ and 
 $T$ are two sets
of vertices and the set of pairs given in the input is $S\times T$
(that is, every pair $(s,t)$ with $s\in S$, $t\in T$ is listed in the
input; note that the problem definition does not require the pairs
to be disjoint). Then the valid paths are the paths connecting $S$ and
$T$, hence it can be checked in polynomial time if there are $k$ valid
paths by solving a maximum flow problem with vertex capacities.  
The demand graphs in this case are complete bipartite graphs.
 A result of Mader 
\cite{M78} generalizes this observation by giving a min-max theorem for the maximum number of 
disjoint valid paths when the demand graph is a multi-partite graph.  
Mader's theorem is existential, but a maximal set of disjoint valid paths 
can be algorithmically found in polynomial time as an application of 
Lov\'asz' matroid matching algorithm \cite{Lovasz1980208}.  In a recent paper, Hirai and Pap \cite{HP13} exactly characterized which demand graphs make a more general version of the \emph{weighted edge-disjoint paths} problem polynomial time solvable. 

It is possible to use the Robertson-Seymour algorithm for the \disjointpath problem to find instances that are \FPT parameterized by the number $k$ of paths, but NP-complete when $k$ is part of the input.  Consider for example the case when the set of pairs is
$(S_1\times T_1)\cup (S_2\times T_2)$ for pairwise disjoint subsets
$S_1,S_2,T_1,T_2\subseteq V(G)$. This case of the problem is a restatement of the node-capacitated 2-commodity flow problem and is NP-complete when $k$ is included in the input \cite{Even:1975:CTT:1382429.1382492}.  To show that this case is \FPT, we can proceed in  the following
way. First, we guess the number $0\le k_1 \le k$ of paths in the
solution that connect $S_1$ and $T_1$ (and hence $k_2=k-k_1$ paths
connect $S_2$ and $T_2$). Let us introduce $k_1$ vertices $s^1_1$,
$\dots$, $s^1_{k_1}$, all of them fully connected to $S_1$; another
$k_1$ vertices $t^1_1$, $\dots$, $t^1_{k_1}$, all of them fully
connected to $T_1$. Similarly, we introduce $k_2$ vertices $s^2_1$,
$\dots$, $s^2_{k_2}$ fully connected to $S_1$ and $k_2$ vertices
$t^2_1$, $\dots$, $t^2_{k_2}$ fully connected to $T_2$. Then the
required $k_1+k_2$ paths exist if the \disjointpath instance with
pairs $(s^1_1,t^1_1)$, $\dots$, $(s^1_{k_1},t^1_{k_1})$,
$(s^2_1,t^2_{1})$, $\dots$, $(s^2_{k_2},t^2_{k_2})$ has a
solution. Therefore, we can reduce the problem to $k+1$ instances of
\disjointpath, implying that this special case of \maxdisjoint is \FPT.

Our main goal is to understand which demand patterns make \maxdisjoint
fixed-parameter tractable. The formal setting of our investigations is
the following. First, we introduce a slightly different formulation of
\maxdisjoint. Let $G$ be the supply graph, $T\subseteq V(G)$ be a set
of terminals, and $H$ be the demand graph defined on the vertices
$T$. We say that a path in $G$ is {\em valid} if both of its
endpoints are in $T$ and they are adjacent in $H$. The task is now to
find $k$ pairwise vertex-disjoint valid paths. The examples above can be expressed by an instance where $H$ is a
biclique (complete bipartite graph) or the disjoint union of two bicliques. For a class
$\cH$ of graphs, we define \maxhhdisjoint as the special case \maxdisjoint when $H$ is restricted to be a member of $\cH$.
\iffull

\defparproblemu{\maxhhdisjoint}
{
A graph $G$, a subset $T$ of vertices, a graph $H\in \cH$ on $T$, an integer $k$.
}{}
{
A set of $k$ pairwise vertex-disjoint paths in $G$ such that each path connects some $x,y\in T$ that are adjacent in $H$.
}

\fi
For example, as we have seen, if $\cH$ is the class of all
bicliques, then \maxhhdisjoint is polynomial-time solvable and if
every graph in $\cH$ is the disjoint union of two bicliques, then
\maxhhdisjoint is fixed-parameter tractable. One can observe that the
argument can be generalized to the case when the two bicliques are not
disjoint (i.e., the demand graph $H$ graph is obtained by fully
connecting $S_1$ with $T_1$ and $S_2$ with $T_2$, where these four
sets are not necessarily disjoint), or to the case where every graph
in $\cH$ is the (not necessarily disjoint) union of $c$ bicliques for
some constant $c$, or to the case where every graph $H\in \cH$ has the
property that the vertices in $H$ have at most $c$ different
neighborhoods for some constant $c$. Therefore, there are fairly
complicated demand patterns that make the problem \FPT.

Formally, our goal is to identify every class $\cH$ for which
\maxhhdisjoint is \FPT. For technical reasons, we restrict our
attention to classes $\cH$ that are {\em hereditary,} that is, closed
under taking induced subgraphs. Intuitively, if $H'$ is an induced
subgraph of some $H\in \cH$, then adding $H'$ to $\cH$ should not make
the problem any harder: given an instance with demand pattern $H'$, we
can easily express it with demand pattern $H$ by introducing dummy
isolated terminals into the supply graph $G$ to represent the vertices
$V(H)\setminus V(H')$. Therefore, it seems justified to study only
graph classes that are closed under taking induced subgraphs. However,
there is no formal reduction showing that if every graph in $\cH'$ is
an induced subgraph of a member of $\cH$, then the fixed-parameter
tractability of the problem with $\cH$ implies the fixed-parameter
tractability of the problem with $\cH'$.  There are at least two
technical issues with the simple reduction described above: first, adding the
isolated vertices may increase the size of the instance if $H$ is much
larger than $H'$ and, second, even if we know that $H'\in \cH'$ is a subgraph
of some $H\in \cH$, finding such an $H$ may be computationally hard. Therefore, to
avoid the discussion of artificial technicalities, we consider only
hereditary classes.

\textbf{Our results.} First, we investigate a purely combinatorial
question.  A classical result of Erd\H os and P\'osa \cite{MR0175810}
states that in every undirected graph $G$, the minimum number of
vertices needed to cover every cycle in $G$ can be bounded by a
function of the maximum number of vertex-disjoint cycles. This result
motivates the following definition: we say that a set $\zC$ of graphs
has the {\em \EP property} if there is a function $f(k)$ such that
every graph $G$ has either $k$ vertex-disjoint subgraphs that belong
to $\zC$ or a set $X$ of at most $f(k)$ vertices such that $G-X$ has
no subgraph that belongs to $\zC$; the result of Erd\H os and P\'osa
\cite{MR0175810} can be stated as saying that the set of all cycles
has this property. The literature contains numerous results proving
that the \EP property holds for variants of the disjoint cycle problem
 such as disjoint long cycles \cite{BBR07}, directed cycles \cite{RRST95}, 
 cycles of length 0 mod $m$ \cite{T88}, as well 
 as characterizing when the \EP property holds for odd cycles \cite{R99, 
 T01, RR01, Kawarabayashi:2006:NDC:1294364.1294418} and cycles of non-zero length mod $m$ 
 \cite{W11}.  Further study has considered whether sets $\zC$ defined by other 
 containment relations such as minors also have the \EP property \cite{RS5, 
 DKW12}.
 
We investigate the natural analog of the \EP property in the context
of the \maxdisjoint problem: Is it true that the valid paths have the
\EP property, that is, is it true that either there are $k$ valid paths
or a set of at most $f(k)$ vertices covering every valid path? Besides
its combinatorial interest, we explore this question because the \EP
property of some objects is often correlated with good algorithmic
behavior of the corresponding packing/covering problems, especially
from the viewpoint of fixed-parameter tractability.
However, in general, the answer to this question is no. The standard
counterexample is an $n\times n$ grid graph with the vertices $s_1$,
$\dots$, $s_n$ appearing in the top row from left to right, and the
vertices $t_1$, $\dots$, $t_n$ appearing in the bottom row from right
to left. Then every $s_i-t_i$ path intersects every $s_j-t_j$ path for
$i\neq j$, but we need $n-1$ vertices to cover all such
paths. Therefore, the \EP property does not hold for valid paths in
general, but may hold for the \maxhhdisjoint problem for certain
(hereditary) classes $\cH$. For example, if $\cH$ contains only
bicliques, then Menger's Theorem states that the \EP property holds in
a tight way with $f(k)=k-1$; if $\cH$ contains only cliques, then a
classical result of Gallai \cite{MR0130188} states that the the \EP
property holds with $f(k)=2k-2$.

Let $M_r$ be the graph consisting of a
matching of size $r$ (i.e., $M_r$ has $2r$ vertices and $r$
edges). The counterexample above shows that if the hereditary class
$\cH$ contains $M_r$ for every $r\ge 1$, then the \EP property surely
does not hold. Surprisingly, this is the only obstacle: our first
result states that if $\cH$ is a hereditary class of graphs not
containing $M_r$ for every $r\ge 1$, then the valid paths in 
\maxhhdisjoint have the \EP property. Our proof is
algorithmic and gives an algorithm that either produces a set of
disjoint valid paths or a hitting set $Z$ covering every valid path.
\begin{theorem}[Excluding large induced matching implies \EP property]\label{thm:mainEP}
Let $\cH$ be a hereditary class of graphs, and assume there exists an integer $r \ge  1$ such that $M_r \notin \cH$.  There exists an algorithm which given a graph $G$, $T \subseteq V(G)$, integer $k\ge 1$, and $H\in \cH$  with $V(H) = T$, returns one of the following:
\begin{enumerate}
\item a set of $k$ pairwise disjoint valid paths, or
\item a set $Z$ of at most 
$2^{O(k+r)}$ vertices such that every valid path intersects $Z$.
\end{enumerate}
Moreover, the algorithm runs in time $2^{2^{O(k+r)}}|V(G)|^{O(1)}$.
\end{theorem}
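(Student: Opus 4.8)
The plan is to reduce, using the structure forced by excluding $M_r$, to the case where the demand graph is a single skew biclique, and to prove the statement there.

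First I would exploit that every $H\in\cH$ has induced matching number at most $r-1$ (as $\cH$ is hereditary and $M_r\notin\cH$). By a Ramsey-type structural argument on bipartite graphs of bounded induced matching number, $E(H)$ can be covered by $g(r)=2^{O(r)}$ subgraphs $H_1,\dots,H_{g(r)}$, each of which is a chain graph, i.e. an induced subgraph of some skew biclique: concretely one repeatedly picks a maximum induced matching, observes that by maximality every remaining edge is witnessed by (adjacent to an endpoint of) one of its $O(r)$ vertices, partitions the edges accordingly into $O(r)$ induced pieces in which one can peel off a layer, and iterates $O(r)$ times down to chain graphs (single edges being trivially chain graphs). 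I then run the skew-biclique algorithm on $(G,T,H_i,k)$ for each $i$: if some run returns $k$ disjoint valid $H_i$-paths, these are valid $H$-paths and we are done; otherwise each run returns a hitting set $Z_i$ of size $2^{O(k)}$ for the valid $H_i$-paths, and $Z:=\bigcup_i Z_i$ has size $2^{O(k+r)}$ and meets every valid $H$-path, since the demand edge of any such path lies in some $H_i$. Thus it suffices to solve the case of a single skew biclique in time $2^{2^{O(k)}}|V(G)|^{O(1)}$ with hitting-set bound $2^{O(k)}$.

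For a skew biclique, write the terminals as $a_1,\dots,a_n,b_1,\dots,b_n$ with $a_ib_j$ a demand edge iff $i\le j$, and set $A_\ell=\{a_1,\dots,a_\ell\}$, $B_\ell=\{b_\ell,\dots,b_n\}$. The key reformulation is that a path of $G$ is valid precisely when it connects $A_\ell$ to $B_\ell$ for some threshold $\ell\in\{1,\dots,n\}$ (if it joins $a_i$ to $b_j$ with $i\le j$, any $\ell$ with $i\le\ell\le j$ works; conversely an $A_\ell$-$B_\ell$ path joins some $a_i$ with $i\le\ell$ to some $b_j$ with $j\ge\ell$). Hence, by Menger's theorem, if for some $\ell$ the maximum number of vertex-disjoint $A_\ell$-$B_\ell$ paths is at least $k$ we output those $k$ (valid) paths; and otherwise every threshold admits an $A_\ell$-$B_\ell$ separator of size less than $k$, and any set which is simultaneously an $A_\ell$-$B_\ell$ separator for all $\ell$ meets every valid path.

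The difficulty, and the main obstacle, is that a separate separator per threshold gives a hitting set of size up to $(k-1)n$, and there are instances -- e.g. $G$ a disjoint union of $n$ internally disjoint $a_i$-$b_i$ paths -- where all threshold separators are small yet the correct output is a packing of size $n$, so the algorithm must recognize such ``spread-out'' instances too. My plan here is to work with \emph{important separators}: for each $\ell$ there are at most $4^k$ important $A_\ell$-$B_\ell$ separators of size $<k$, and I would use the fact that $(A_\ell)_\ell$ is increasing while $(B_\ell)_\ell$ is decreasing to prove, by an uncrossing/domination argument along this linearly ordered family of separation problems, that all important separators that arise are dominated by a single family $\mathcal{D}$ of size $2^{O(k)}$; then $Z=\bigcup\mathcal{D}$ is the desired hitting set. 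When, instead, a range of thresholds keeps the min cut below $k$ while the cut between its extreme sets is large, that ``staircase flexibility'' should be turned into $k$ disjoint valid paths, using that a feasible index pattern can always be taken monotone (a greedy exchange argument). Organized as a recursion of depth $O(k)$ -- each level either branching on one of $2^{O(k)}$ important separators or making packing progress -- this gives the $2^{O(k)}$ bound on $|Z|$ and, after enumerating important separators, the $2^{2^{O(k)}}|V(G)|^{O(1)}$ running time. I expect the heart of the matter to be exactly that uncrossing statement collapsing the $n$ threshold problems to a single-exponential-in-$k$ separator family, together with cleanly deciding when the instance is rich enough to return a $k$-packing; the edge-cover into skew bicliques and the threshold reformulation are comparatively routine.
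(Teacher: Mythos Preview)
Your reduction in Step~1 is incorrect. You claim that any graph $H$ with no induced $M_r$ has its edge set covered by $g(r)=2^{O(r)}$ chain graphs (induced subgraphs of skew bicliques). But chain graphs are bipartite, and the clique $K_n$ has no induced $M_2$ (any four vertices induce $K_4$, not $2K_2$), yet covering $E(K_n)$ by bipartite subgraphs requires $\lceil\log_2 n\rceil$ pieces. So for fixed $r=2$ the number of chain graphs needed is unbounded in $n$, and your decomposition simply does not exist. The sketch you give (``pick a maximal induced matching, every other edge is witnessed by one of its $O(r)$ vertices, peel off a layer, iterate $O(r)$ times'') does not produce chain graphs; at best it shows that the edges are covered by $O(r)$ neighbourhoods of single vertices, and those neighbourhoods can again be arbitrary $M_r$-free graphs, so nothing is gained by iterating.

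This is not a technicality you can patch by handling cliques separately: once you admit that dense non-bipartite structure must be dealt with on its own, you have essentially rediscovered the need for a Ramsey step inside the argument rather than as a preprocessing decomposition of $H$. That is exactly what the paper does, and its route is quite different from yours. The paper never decomposes $H$. Instead it grows, inside $G$, a set $T'\subseteq T$ that is $\zP$-free (no small separation cuts $T'$ from a truncated valid path) and on which $H[T']$ has a perfect matching; this is done by a tangle-flavoured iteration using $\zP$-tight separations and important separators (Lemmas~\ref{lem:freetest}--\ref{lem:alg1}), with a recursion on a separation of small order when one is found that splits off a valid path. Once $|T'|$ reaches $2\cdot 5^{10(k+r)}$, a Ramsey argument on the matching in $H[T']$ (Lemma~\ref{lem:ramsey}) yields either an induced $M_r$ (impossible) or a $K_{k,k}$ subgraph in $H[T']$, and the $\zP$-freeness of $T'$ then guarantees $k$ disjoint paths between the two sides of that biclique. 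So the biclique appears \emph{a posteriori}, certified by the well-linked set, rather than being carved out of $H$ in advance.

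Your skew-biclique phase also remains a sketch at the crucial point: in your own example of $n$ disjoint $a_i$--$b_i$ paths, every threshold has min-cut~$1$ \emph{and} the cut between the extreme sets $A_1=\{a_1\}$ and $B_n=\{b_n\}$ is~$0$, so the trigger you propose for switching to ``output a $k$-packing'' never fires, yet there are $n$ disjoint valid paths. The uncrossing of important separators along the chain of thresholds would have to do real work here, and you have not said what that work is.
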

By a well-known observation (cf.~\cite{marx-approx}), the
algorithm of Theorem~\ref{thm:mainEP} can be turned into an \FPT
approximation algorithm of the following form.
\begin{corollary}[Excluding large induced matching implies \FPT approximation]\label{cor:FPTapprox}
  Let $\cH$ be a set of graphs closed under taking induced subgraphs,
  and assume there is an integer $r\ge 1$ such that $M_r \notin
  \cH$. Then there is a polynomial-time algorithm that, given an
  instance of \maxhhdisjoint, finds a solution with $\Omega(\log \log
  OPT)$ disjoint valid paths, where $OPT$ is the maximum size of a set of
  pairwise disjoint valid paths.
\end{corollary}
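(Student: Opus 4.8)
Write $A_k$ for the algorithm of Theorem~\ref{thm:mainEP} run with target value $k$ (and with the integer $r$, which is a fixed constant once the hereditary class $\cH$ is fixed): $A_k$ runs in time $2^{2^{O(k+r)}}|V(G)|^{O(1)}$ and returns either $k$ pairwise disjoint valid paths or a vertex set $Z_k$ of size $2^{O(k+r)}$ meeting every valid path. The plan is to run $A_k$ for a growing range of $k$ and take the best output, exploiting two observations. First, if $A_k$ returns a hitting set $Z_k$, then any family of pairwise vertex-disjoint valid paths meets $Z_k$ in pairwise distinct vertices, so $OPT \le |Z_k| \le 2^{O(k+r)}$; equivalently, a hitting-set outcome certifies $k = \Omega(\log OPT)$ (with $r$, and the constants, absorbed into the $\Omega$). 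Second, because of the doubly exponential dependence on $k$, the algorithm $A_k$ runs in time polynomial in $n := |V(G)|$ precisely as long as $k$ is at most a threshold $k_{\max} = \Theta(\log\log n)$, where the hidden constant depends only on $r$ and on the constants in Theorem~\ref{thm:mainEP}. It is this doubly exponential running time that degrades the guarantee from the $\Omega(\log OPT)$ one would naively hope for down to $\Omega(\log\log OPT)$.

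Concretely, I would set $k_{\max}$ to be the largest integer for which $A_{k_{\max}}$ is guaranteed to run in polynomial time (so $k_{\max} = \Theta(\log\log n)$; for $n$ below a constant the whole instance is handled by brute force), run $A_1, A_2, \dots, A_{k_{\max}}$ in turn, and let $k^{\dagger}$ be the largest index for which $A_{k^{\dagger}}$ returned a family of $k^{\dagger}$ disjoint valid paths; the algorithm outputs this family. If no $A_k$ returns a path family at all, i.e.\ already $A_1$ returns a hitting set, then $OPT$ is bounded by a constant depending only on $\cH$, and in that case I would instead output an optimal solution, found in polynomial time by the brute-force $m^{O(k)}$-time enumeration sketched in the introduction (guessing a constant number of demand edges and invoking Theorem~\ref{thm:kdisjointpath}). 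Since each invocation of $A_k$ with $k \le k_{\max}$ is polynomial and there are $O(\log\log n)$ of them, the whole procedure runs in polynomial time.

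For the analysis there are two cases. If $k^{\dagger} = k_{\max}$, the algorithm outputs $\Theta(\log\log n)$ disjoint valid paths; since $OPT \le n$ we have $\log\log OPT \le \log\log n$, so this is $\Omega(\log\log OPT)$. If $k^{\dagger} < k_{\max}$, then $A_{k^{\dagger}+1}$ returned a hitting set, so by the first observation $OPT \le 2^{O(k^{\dagger}+r)}$ and hence $k^{\dagger} = \Omega(\log OPT) = \Omega(\log\log OPT)$ once $OPT$ exceeds the relevant constant — and the bounded-$OPT$ regime is exactly the one already handled by the brute-force step. Thus in all cases the output is a set of $\Omega(\log\log OPT)$ pairwise disjoint valid paths, computed in polynomial time.

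The only point that needs care is the first bookkeeping step: pinning down $k_{\max}$ so that $A_{k_{\max}}$ is provably polynomial requires tracking the constants hidden in the $2^{2^{O(k+r)}}$ bound of Theorem~\ref{thm:mainEP}, and one must keep in mind that $r$ (hence those constants) may depend on $\cH$, so the polynomial in the running time, and the constant in $\Omega(\log\log OPT)$, are allowed to depend on $\cH$. Beyond this, and the routine matter of folding the few small-$OPT$ instances into a brute-force subroutine, I expect no genuine obstacle — which is why this conversion is standard (cf.~\cite{marx-approx}).
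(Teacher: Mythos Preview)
Your proposal is correct and is exactly the standard conversion the paper alludes to (it gives no proof, only citing the ``well-known observation'' of \cite{marx-approx}): run the algorithm of Theorem~\ref{thm:mainEP} for increasing $k$ up to the $\Theta(\log\log n)$ threshold dictated by the doubly exponential running time, and use the hitting-set outcome as a certificate that $k=\Omega(\log OPT)$. Your handling of the boundary cases (small $OPT$, tracking the dependence on $r$) is appropriate and fills in details the paper omits.
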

Can we improve the algorithm of Theorem~\ref{thm:mainEP} to an exact
\FPT algorithm that either finds a set of $k$ disjoint valid paths or
correctly states that there is no such set? It seems that we need one
more property of $\cH$ for the existence of such algorithms. A {\em
  skew biclique of size $n+n$} is the bipartite graph $S_n$ on
vertices $a_1$, $\dots$, $a_n$, $b_1$, $\dots$, $b_n$ such that $a_i$
and $b_j$ are adjacent if and only if $i\le j$. Even though the
(hereditary closure of) the set $\cH$ of all skew bicliques has the
\EP property by Theorem~\ref{thm:mainEP} (as skew bicliques do not have
large induced matchings), disjoint paths problems with skew biclique
demand patterns can be hard. Our main result states that large induced
matchings and large skew bicliques are the only demand patterns that
make the \maxhhdisjoint problem hard.
\begin{theorem}[Main theorem: characterizing fixed-parameter tractability]\label{thm:main}
Let $\cH$ be a hereditary set of graphs. If there is an integer $r\ge 1$ such that $M_r,S_r\not\in \cH$, then \maxhhdisjoint is \FPT; otherwise, \maxhhdisjoint is \classWone-hard.
\end{theorem}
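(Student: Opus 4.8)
The theorem has two directions, and they rely on essentially disjoint techniques, so I would prove them separately.

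The tractable direction. Suppose there is an integer $r\ge 1$ with $M_r, S_r\notin\cH$. Since $\cH$ is hereditary and excludes the fixed induced matching $M_r$, Theorem~\ref{thm:mainEP} already gives us an algorithm that, on any instance, either outputs $k$ disjoint valid paths or a hitting set $Z$ of size $2^{O(k+r)}$ meeting every valid path. In the first case we are done, so the real work is the second case: we must decide, in \FPT\ time, whether the instance has $k$ disjoint valid paths \emph{given} a small hitting set $Z$. The plan here is the standard "bounded number of paths must pass through a small set" argument. Every valid path meets $Z$, so in a hypothetical solution of $k$ paths there is a function from the $k$ paths to the vertices of $Z$ recording (say) the first vertex of $Z$ on each path; guessing this function, and more generally guessing the pattern in which the $k$ solution paths interact with $Z$ (which pieces of which paths lie between consecutive $Z$-vertices, in what order), costs only $|Z|^{O(k)} = 2^{O(k^2 + kr)}$ branches. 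After fixing such a pattern, we have reduced to finding a bounded number of disjoint paths in $G$ with prescribed endpoints (the endpoints being either original terminals or vertices of $Z$) — i.e.\ a bounded instance of \disjointpath\ — which is \FPT\ by Theorem~\ref{thm:kdisjointpath}. The one subtlety is that the endpoints of a valid path must be \emph{adjacent in $H$}: so before invoking \disjointpath\ we also guess, for each of the $k$ paths, which edge of $H$ (equivalently, which pair of terminals) it realizes; there are at most $|V(G)|^{2k}$ such choices, but we can do better — since every valid path hits $Z$, for each solution path at least one of the branches already pins down enough structure that only the terminal end(s) not separated from the rest by $Z$ need to be guessed, and in fact the exhaustive $|V(G)|^{O(k)}$ bound is already polynomial for fixed $k$ once combined with Theorem~\ref{thm:kdisjointpath}. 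Assembling these pieces gives an exact \FPT\ algorithm; the running time is dominated by the call to Theorem~\ref{thm:mainEP} plus $2^{2^{O(k+r)}}$ invocations of the Robertson--Seymour algorithm.

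The hardness direction. Now suppose that for every $r\ge 1$, either $M_r\in\cH$ or $S_r\in\cH$. Since $\cH$ is hereditary, $M_r\in\cH$ implies $M_{r'}\in\cH$ for all $r'\le r$, and similarly for $S_r$; hence one of the two families $\{M_r\}_{r\ge1}$, $\{S_r\}_{r\ge1}$ is entirely contained in $\cH$ (if $M_r\notin\cH$ for some $r$, then $S_{r'}\in\cH$ for all $r'\ge r$, and by heredity $S_{r'}\in\cH$ for all $r'$). So it suffices to prove \classWone-hardness of \maxhhdisjoint\ in the two cases $\cH\supseteq\{M_r : r\ge 1\}$ and $\cH\supseteq\{S_r : r\ge 1\}$. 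The first case is the grid construction already sketched in the introduction: from an instance of \gridtiling\ (known \classWone-hard) one builds a supply graph on an $O(\sqrt{n})\times O(\sqrt n)$-like grid in which the demand graph is a perfect matching, and $k$ disjoint valid paths exist iff the tiling instance is solvable; I would give this reduction in detail, being careful that the demand graph used is literally some $M_r$ with $r$ bounded by a function of the \gridtiling\ parameter. The second case, $\cH$ containing every skew biclique $S_r$, is the genuinely new hardness result and I expect it to be the main obstacle: one must design a parameterized reduction (again from \gridtiling\ or a close relative) whose demand graph is a skew biclique $S_r$ — meaning the terminals split into $a_1,\dots,a_r$ and $b_1,\dots,b_r$ with $a_i\sim b_j$ iff $i\le j$ — and where selecting which valid path uses $a_i$ forces a consistent choice of which $b_j$ it can reach, the "triangular" adjacency pattern being exactly what encodes the monotone/threshold structure of the gadget. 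The delicate points are (i) ensuring the supply graph can be made to have treewidth bounded in terms of $k$ (so that, as promised in the introduction, hardness persists for bounded-treewidth instances, ruling out treewidth-based algorithms), and (ii) verifying soundness: that any $k$ disjoint valid paths, not just the "intended" ones, must line up into a solution of the source instance, which is where the skew (as opposed to complete) biclique is essential.

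Expected main difficulty. The tractable direction is, modulo careful bookkeeping, an assembly of Theorem~\ref{thm:mainEP} and Theorem~\ref{thm:kdisjointpath} via the branching-over-a-small-hitting-set technique, so I do not expect conceptual obstacles there. The hard part will be the skew-biclique lower bound: constructing a gadget whose only disjoint-path packings correspond to valid tilings while the demand graph is forced to be exactly $S_r$ and the supply graph has bounded treewidth. I would spend most of the effort designing this gadget and proving its soundness, and I would expect to reuse, as a template, whatever construction underlies the earlier claim in the paper that \maxdisjoint\ is \classWone-hard on bounded-treewidth planar instances, modifying it so that the demand pattern degenerates to the skew-biclique shape.
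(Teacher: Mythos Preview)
Your hardness direction is essentially what the paper does: argue by heredity that either every $M_r$ or every $S_r$ lies in $\cH$, then invoke the two concrete \classWone-hardness reductions (Theorems~\ref{th:matchinghardintro} and~\ref{th:skewhardintro}).

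Your tractable direction, however, has a genuine gap. After obtaining the hitting set $Z$ from Theorem~\ref{thm:mainEP}, you propose to guess the interaction pattern of the $k$ paths with $Z$ and then guess, for each path, which edge of $H$ it realizes. You write that ``the exhaustive $|V(G)|^{O(k)}$ bound is already polynomial for fixed $k$.'' That is an XP bound, not an \FPT\ bound: $n^{O(k)}$ is precisely the kind of running time we are trying to rule out, and the theorem claims $f(k,r)\cdot n^{O(1)}$. More tellingly, your argument never uses the hypothesis $S_r\notin\cH$; it relies only on $M_r\notin\cH$ (to get $Z$). If your scheme gave an \FPT\ algorithm, it would do so for the class of all skew bicliques as well, contradicting Theorem~\ref{th:skewhardintro}. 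Guessing how the paths thread through $Z$ is cheap, but guessing which terminals in a component of $G-Z$ serve as endpoints is not: a single component can contain unboundedly many terminals, and knowing that a path enters it at some $z\in Z$ does not pin down which terminal it reaches.

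The paper closes this gap by a substantially different mechanism: it repeatedly removes \emph{irrelevant terminals} until $|T|$ is bounded by a function of $k$, $r$, and $|Z|$ (Lemma~\ref{lem:irrelevantmain}), and only then enumerates the $|T|^{2k}$ candidate endpoint sets and calls Theorem~\ref{thm:kdisjointpath}. The irrelevant-terminal argument is exactly where $S_r\notin\cH$ enters: for each component $C$ of $G-Z$ one considers the separation $(A,B)$ with $A\setminus B=C$ and builds a bounded-size \emph{representative set} of partial solutions in $A$ (Lemma~\ref{lem:repbound}). The bound on the representative set comes from a Ramsey argument (Lemma~\ref{lem:noskew}) showing that a long chain of mutually non-dominating partial solutions would force a large induced matching, skew biclique, or clique in $H$; the first two are excluded by hypothesis, and large cliques are handled separately via Gallai's theorem (Lemma~\ref{lem:cliquereduce}). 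Any terminal not used by the representative set is irrelevant. This is the missing idea in your plan, and without it the algorithm is not \FPT.
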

Therefore, we have obtained a tight characterization of the
fixed-parameter tractable cases of \maxhhdisjoint. Observe that the
algorithmic part of Theorem~\ref{thm:main} covers the \FPT cases we
discussed above: if the vertices in every $H\in \cH$ have at most $c$
different neighborhoods, then $\cH$ cannot contain every matching and
every skew biclique. However, Theorem~\ref{thm:main} gives some more
general \FPT cases as well: for example, if every graph in $\cH$ is a
biclique minus a matching of arbitrary size, then clearly there are no
large induced matchings or skew bicliques in $\cH$, but the number of
different neighborhoods can be arbitrarily large.  Observe also that
Corollary~\ref{cor:FPTapprox} and Theorem~\ref{thm:main} exhibit a
large class of problems that are \classWone-hard, but admit an \FPT
approximation: if $\cH$ contains every skew biclique $S_r$, but does
not contain some matching $M_r$, then \maxhhdisjoint is such a
problem. There is only a handful of known problems with this property
(see
\cite{marx-approx,DBLP:conf/icalp/GroheG07,DBLP:conf/iwpec/ChitnisHK13}),
thus this may be of independent interest.

\textbf{Our techniques.}  The first observation in the proof of
Theorem~\ref{thm:mainEP} is that if there is a small set $Z$ of
vertices such that more than one component of $G-Z$ contains valid
paths, then we can solve the problem recursively. Therefore, we may
assume that the valid paths are quite intertwined, giving us a notion
of connectivity similar to tangles. Our first goal is to find a certain
number of pairs $(s_1,t_1)$, $\dots$, $(s_h,t_h)$ such that $s_i$ and
$t_i$ are adjacent in $H$ and the set
$\{s_1,\dots,s_h,t_1,\dots,t_h\}$ is highly connected in our notion of
connectivity. In particular, the connectivity ensures that there are
many disjoint paths between $\{s_1,\dots,s_h\}$ and
$\{t_1,\dots,t_h\}$. This is not quite what we need: all we know is
that $s_i$ and $t_i$ are adjacent in $H$, but we have no information
about the adjacency of $s_i$ and $t_j$ for $i\neq j$. This is the
point where we exploit the assumption that there are no large induced
matchings in $H$. A simple Ramsey-type argument shows that if a
graph has a large (not necessarily induced) matching, then it either
has a large induced matching or a large biclique. By assumption, there
is no large induced matchings in $H$, which means that $H$ contains a
large biclique on the vertices $\{s_1,\dots,s_h,t_1,\dots,t_h\}$. Then
by the connectivity of this set, we can realize $k$ disjoint paths
with endpoints in this biclique.

The fixed-parameter tractability part of Theorem~\ref{thm:main} is
proved the following way. First, we bootstrap the algorithm with the
approximation of Theorem~\ref{thm:mainEP}: we obtain either $k$
disjoint valid paths (in which case we are done) or a set $Z$ of bounded
size covering every valid path. In the latter case, we solve the problem
by analyzing the components of $G-Z$: as there are no  valid paths
in any component $C$ of $G-Z$, essentially what we need to understand
is how subsets of terminals in $C$ can be connected to $Z$. However,
each component of $G-Z$ can contain a large number of terminals and
there can be a large number of components of $G-Z$. First, in each
component $C$ of $G-Z$, we reduce the number of terminals so that
their number is bounded: we identify terminals that are {\em
  irrelevant,} that is, we can prove that if there is a solution, then
there is a solution not using these terminals. To identify irrelevant
terminals, we use the concept of {\em representative sets,} which were
already used in the design of \FPT algorithms, mostly for path and
matroid problems
\cite{MR87a:05097,DBLP:journals/tcs/Marx09,DBLP:journals/corr/FominLPS14,DBLP:conf/soda/FominLS14,DBLP:journals/corr/ShachnaiZ14}. While
the concept is the same as in previous work, the reason why we can
give a bound on the size of representative sets is very different: as
shown by a simple Ramsey-type argument, it is precisely the lack of
large induced matchings and skew bicliques in $\cH$ that makes the
argument work. (More precisely, we need to exclude large cliques as
well, but we have a separate argument for that.)  Our algorithm can be
seen as a generalization of the ideas in the data structure of
Monien~\cite{MR87a:05097}, but it does not use any of the more
advanced matroid-based techniques of more recent
work~\cite{DBLP:journals/tcs/Marx09,DBLP:journals/corr/FominLPS14,DBLP:conf/soda/FominLS14,DBLP:journals/corr/ShachnaiZ14}. After
reducing the number of terminals to a constant in each component of
$G-Z$, next we use elementary arguments to show that every terminal in
all but a bounded number of components is irrelevant. Thus we have a
bound on the total number of terminals and then we can use the
algorithm of Robertson and Seymour \cite{MR97b:05088} on every set of
$k$ pairs of terminals.

The hardness part of Theorem~\ref{thm:main} states \classWone-hardness for
infinitely many classes $\cH$. However, we need to prove only the following two 
concrete \classWone-hardness results: when the pattern is a matching and when the pattern is a skew biclique. We prove these hardness result in a slightly stronger form: the supply graph $G$ is restricted to be planar and we show that the problems are hard even when parameterized by both the number of paths $k$ to be found and 
the treewidth $w$ of the supply graph, that is, even an algorithm with running time $f(k,w)\cdot n^{O(1)}$ seems unlikely.
\begin{theorem}[Hardness for matchings]\label{th:matchinghardintro}
If $\cH$ contains $M_r$ for every $r\ge 1$, then \maxhhdisjoint is \classWone-hard with combined parameters $k$ and $w$ (where $w$ is the treewidth of $G$), even when restricted to instances where $G$ is planar.
\end{theorem}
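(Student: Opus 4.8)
The plan is to give a parameterized reduction from the W[1]-hard \gridtiling problem (parameterized by the grid side length~$k$), producing a \emph{planar} supply graph $G$ whose demand graph is a matching and in which both the number of requested paths and the treewidth of $G$ are bounded by a function of~$k$. Recall that a \gridtiling instance consists of an integer $n$ and sets $S_{i,j}\subseteq[n]\times[n]$ for $i,j\in[k]$, and asks for a choice $s_{i,j}\in S_{i,j}$ such that consecutive cells in a row agree in the first coordinate and consecutive cells in a column agree in the second coordinate; equivalently, one seeks values $\gamma_1,\dots,\gamma_k,\delta_1,\dots,\delta_k\in[n]$ with $(\gamma_i,\delta_j)\in S_{i,j}$ for all $i,j$. (A monotone variant, with ``$\le$'' constraints in place of equality, is also W[1]-hard and is most convenient below; equality can anyway be simulated by a pair of monotone constraints.) Given such an instance, I would build $G$ as a $k\times k$ array of \emph{cell gadgets} $C_{i,j}$ with a \emph{connector gadget} inserted on each edge of the array: a horizontal connector $D^{\mathrm h}_{i,j}$ between $C_{i,j}$ and $C_{i,j+1}$, and a vertical connector $D^{\mathrm v}_{i,j}$ between $C_{i,j}$ and $C_{i+1,j}$. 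For every element $(a,b)\in S_{i,j}$ the demand graph $H$ would contain exactly one edge $s^{a,b}_{i,j}t^{a,b}_{i,j}$, with both endpoints placed inside $C_{i,j}$; thus $H$ is a matching $M_m$ with $m=\sum_{i,j}|S_{i,j}|$, which lies in $\cH$ by hypothesis. The target number of paths is $k^2$ (so this is genuinely a selection problem among the $m\gg k^2$ demand pairs, not a \disjointpath instance).

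Each cell gadget $C_{i,j}$ would be designed so that (i) every valid path from $s^{a,b}_{i,j}$ to $t^{a,b}_{i,j}$ is forced through a single designated ``core'' vertex of $C_{i,j}$, so at most one valid path can lie in $C_{i,j}$; and (ii) such a path is forced to make four excursions, into $D^{\mathrm h}_{i,j-1}$, $D^{\mathrm h}_{i,j}$, $D^{\mathrm v}_{i-1,j}$, and $D^{\mathrm v}_{i,j}$, where the two horizontal excursions ``register'' the value $a$, the two vertical excursions register $b$, and the gadget admits only routes for which $(a,b)$ is precisely the element $(a,b)\in S_{i,j}$ naming the demand pair. Each connector $D^{\mathrm h}_{i,j}$ would be a \emph{comb}-like planar graph, a spine of length $\Theta(n)$ with teeth and a constant number of ports, through which the (at most one) path of $C_{i,j}$ enters from the left and the (at most one) path of $C_{i,j+1}$ enters from the right; it is built so that a route registering value $a$ from the left and a route registering value $a'$ from the right are internally vertex-disjoint if and only if $a\le a'$ (symmetrically for $D^{\mathrm v}_{i,j}$ and the second coordinate). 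Chaining these constraints along a row forces all its cells to register a common first coordinate $\gamma_i$, along a column a common second coordinate $\delta_j$, and (ii) forces $(\gamma_i,\delta_j)\in S_{i,j}$; conversely, any \gridtiling solution routes $k^2$ pairwise disjoint valid paths in the obvious way. The whole graph is planar, being a planar array of planar gadgets, and it is glued from $O(k^2)$ pieces each of constant treewidth and constant-size boundary arranged in a $k\times k$ grid pattern; sweeping the array column by column while internally sweeping each active piece shows $\tw(G)=O(k)$. Hence both the number of paths $k^2$ and $\tw(G)=O(k)$ are bounded by a function of~$k$, the reduction is polynomial, and since every matching lies in $\cH$, W[1]-hardness of \gridtiling transfers to \maxhhdisjoint with combined parameters $k$ and $w$ on planar instances.

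The main obstacle is engineering the connector (and, to a lesser extent, the cell) gadget to meet several competing demands at once: it must encode $n$ possible coordinate values yet have treewidth \emph{and} boundary size \emph{independent of~$n$} (this is what rules out the naive ``bundle of $n$ parallel tracks'' and forces the comb/ladder design); it must enforce the monotone consistency constraint with \emph{no cheating route}, so that a valid path can neither skip its intended excursions nor sneak from $C_{i,j}$ through a connector belonging to a non-adjacent cell, which requires carefully shielding the demand endpoints and the core vertex so that every $s^{a,b}_{i,j}$-$t^{a,b}_{i,j}$ path stays within a bounded neighbourhood of $C_{i,j}$; and it must admit a planar embedding compatible with the ``pinwheel'' pattern by which one path visits its four surrounding connectors and returns. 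The remaining points -- verifying $\tw(G)=O(k)$ for the assembled graph, and checking that reaching $k^2$ paths both requires and forces exactly one valid path per cell (two collide at the core) -- are routine but delicate bookkeeping once the gadgets are in place.
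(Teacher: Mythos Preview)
Your high-level strategy---a parameterized reduction from \gridtiling to a planar supply graph built as a $k\times k$ array of constant-treewidth gadgets, with both $k'$ and $\tw(G)$ bounded in terms of $k$---matches the paper's, but the gadget design you propose is genuinely different and considerably more intricate. In the paper, each cell gadget $G_{i,j}$ is simply a cycle of length $\Theta(n^2)$; a solution selects \emph{four} disjoint arcs on this cycle (one per quadrant, each a demand pair indexed by an element of $S_{i,j}$), and a cyclic chain of inequalities forces the four arcs to correspond to a single element $s_{i,j}\in S_{i,j}$. Consistency between adjacent cells is enforced not by path excursions but by \emph{additional} cross-cell demand pairs routed through a single connector vertex $h_{i,j}$ or $v_{i,j}$: the four arcs in each cell leave exactly one cycle vertex free on each side, and that vertex must serve as the connector path's endpoint. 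This gives $k'=4k^2+2k(k-1)$ target paths rather than $k^2$, and correctness follows from a one-line hitting-set argument (the $k'$ vertices $c_1,\dots,c_4$ of each cycle together with all $h_{i,j},v_{i,j}$ meet every valid path, so each is used exactly once). The resulting demand graph is not itself a matching, but the paper then observes that any demand graph can be turned into a matching by attaching pendant copies to each terminal and moving demand endpoints to them; this preserves planarity and treewidth and yields Theorem~\ref{th:matchinghardintro}.

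Your design instead keeps all demand edges inside a single cell and asks one path per cell to perform a pinwheel of four excursions into comb connectors, which then arbitrate between neighbouring cells. This has the minor advantage that the demand graph is a matching from the start, but the hard part---as you concede---is actually building the cell so that every $s^{a,b}_{i,j}$--$t^{a,b}_{i,j}$ path is forced through the core \emph{and} through all four connectors with the correct registration, while admitting no cheating route and remaining planar. You have not supplied these constructions, and this is where the real content lies; the paper's cycle-plus-connector-vertex design sidesteps both the pinwheel routing and the cheating-route analysis entirely. One small inconsistency: you invoke monotone \gridtiling but then write that ``chaining these constraints along a row forces all its cells to register a common first coordinate,'' which holds only for the equality version---decide which variant you reduce from and design the connector accordingly (two opposite monotone combs give equality, or simply reduce from the $\le$-variant and drop the claim of a common $\gamma_i$).
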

\begin{theorem}[Hardness for skew bicliques]\label{th:skewhardintro}
If $\cH$ contains $S_r$ for every $r\ge 1$, then \maxhhdisjoint is \classWone-hard with combined parameters $k$ and $w$ (where $w$ is the treewidth of $G$), even when restricted to instances where $G$ is planar.
\end{theorem}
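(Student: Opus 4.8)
The plan is to reduce from \gridtiling (or more precisely its planar variant), which is the standard source of W[1]-hardness proofs where both the parameter and the treewidth of the output graph can be controlled. Recall that in \gridtiling we are given a $\kappa\times\kappa$ array of sets $S_{i,j}\subseteq[n]\times[n]$ and must pick one pair $(x_{i,j},y_{i,j})\in S_{i,j}$ from each cell so that horizontally adjacent cells agree on the first coordinate and vertically adjacent cells agree on the second coordinate; this is W[1]-hard parameterized by $\kappa$. The new parameter $k$ will be a polynomial in $\kappa$, and the treewidth $w$ of the supply graph $G$ will also be bounded by a function of $\kappa$ alone. The first step is to build, for each cell $(i,j)$, a planar gadget whose job is to ``select'' an element of $S_{i,j}$ by routing a group of valid paths through it; the key technical content is that this selection must be forced to be \emph{consistent} across a row and a column, and this is exactly where the skew biclique structure is used.

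The heart of the construction is to exploit the asymmetric adjacency of $S_r$: in $S_r$ the vertex $a_i$ sees $b_j$ precisely when $i\le j$, so a valid path from the $a$-side to the $b$-side encodes an ordered pair $(i,j)$ with $i\le j$. The plan is to use one skew biclique (of size $r+r$ for $r$ polynomial in $n$ and $\kappa$) as the global demand graph, and to place the $a_i$'s and $b_j$'s as terminals at carefully chosen locations along the boundary of the cell gadgets, so that a collection of $k$ vertex-disjoint valid paths through the grid is possible if and only if each cell makes a choice from its set and the monotone comparisons enforced by ``$i\le j$'' at the gadget interfaces propagate a common value along each row and each column. Concretely, at a horizontal interface between cells $(i,j)$ and $(i,j+1)$ we arrange a pair of skew-biclique terminals so that a valid path can cross only if the value selected on the left is $\le$ the value selected on the right; a symmetric pair arranged with the roles of $a$ and $b$ reversed forces the reverse inequality, so together they force equality. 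Vertical interfaces are handled the same way on the second coordinate. The ``cheating'' where a path uses a non-adjacent terminal pair is prevented by the disjointness requirement together with a counting argument: we will allocate exactly $k$ paths so that every path is committed to a specific interface, and any deviation either disconnects the rest or violates an inequality.

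Planarity is maintained throughout because \gridtiling is already laid out on a grid and each cell gadget is planar; the gadgets are glued along shared boundary segments, and the skew biclique demand edges (which are not part of $G$) do not affect planarity of the supply graph. Treewidth is bounded because $G$ decomposes as a $\kappa\times\kappa$ arrangement of gadgets of constant-in-$\kappa$ ``width'' (the within-gadget size depends on $n$, but the gadget has a planar grid-like structure of bounded pathwidth after contracting, and more carefully one designs the gadget to have treewidth $O(1)$ or at worst $O(\log n)$; if necessary one first reduces \gridtiling to a version with $n$ bounded, or uses the standard trick of making each gadget a subdivided grid of treewidth bounded by a function of $\kappa$ only by keeping $n$ as part of the gadget but observing the overall treewidth is $O(\kappa\cdot\mathrm{poly})$ — in fact one wants $w=g(\kappa)$, which forces the gadget treewidth to not depend on $n$, achievable by encoding the element of $S_{i,j}$ in the \emph{routing pattern} rather than in the gadget size). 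The main obstacle, and the step requiring the most care, is precisely this last point: designing the cell gadget so that (a) it has treewidth bounded by a function of $\kappa$ alone, independent of $n$, while still (b) faithfully encoding a choice from a set $S_{i,j}\subseteq[n]^2$ that can have up to $n^2$ elements, and (c) transmitting the chosen first/second coordinate to its four neighbors with the monotone skew-biclique ``filter'' enforcing equality. Once the gadget is correct, the equivalence of the \gridtiling instance and the constructed \maxhhdisjoint instance is a routine (if lengthy) verification, and the reduction is clearly polynomial-time.
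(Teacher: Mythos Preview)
Your high-level plan---reduce from \gridtiling, use the $\le$ relation of the skew biclique at each interface and pair it with a reversed copy to force equality, lay the cell gadgets out in a planar $\kappa\times\kappa$ array---is exactly the route the paper takes. But the proposal is a plan, not a proof: you correctly identify ``the main obstacle'' and then do not resolve it, and in fact you mis-state what the obstacle is.

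The confusion is in your treewidth discussion. You write that the cell gadget must have ``treewidth bounded by a function of $\kappa$ alone, independent of $n$,'' and then worry about how to encode an element of $S_{i,j}\subseteq[n]^2$ without the gadget \emph{size} depending on $n$. But size is irrelevant; only treewidth matters. The paper's cell gadget has $\Theta(n^2)$ vertices and \emph{constant} treewidth (it is built from a bounded-height grid and a path-like ``selector''), and after joining $\kappa^2$ such gadgets at $O(\kappa^2)$ shared boundary vertices the whole supply graph has treewidth $O(\kappa^2)$. Your suggestions of ``first reduce to $n$ bounded'' or ``encode in the routing pattern rather than the gadget size'' are red herrings.

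What is genuinely missing is the gadget construction itself, and this is where the technical content lies. The paper builds it in three layers: a \emph{selector gadget} (Lemma~\ref{lem:selectorgadget}) that, using four forced complete paths on a path-like planar graph, pins down a choice $i\in[m]$ and exposes labels $6m+i$ and $-m-i$ on its two boundary vertices; a wrapper (Lemma~\ref{lem:generalgadget}) that attaches this selector to a $10\times 10m$ grid so that the choice $i$ routes eight partial paths to eight prescribed labels $t_{i,1},\dots,t_{i,8}$; and finally (Lemma~\ref{lem:maingadget}) the specialization $t_{(x,y)}=(B+x,B-x,B+y,B-y,B+x,B-x,B+y,B-y)$. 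The two labels $B+x$ and $B-x$ on a shared interface are what give your ``$\le$ and $\ge$'' pair, but making a \emph{single} internal choice drive all eight outputs consistently---and proving that any partial solution with the maximum number of complete paths must arise this way---is delicate and is exactly the part your outline skips. Until you supply (or at least sketch) a selector with these properties, you do not have a proof.
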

Note that Theorem~\ref{th:skewhardintro} actually
implies Theorem~\ref{th:matchinghardintro}: if $\cH$ contains the
matching $M_r$ for every $r\ge 1$, then it is easy to simulate any
demand pattern, including skew bicliques. The reduction is as
follows. First, if vertex $v$ has degree $d$ in $H$, then let us
attach $d$ degree-1 neighbors to $v$ and make them terminals. Then
replace each edge $(x,y)$ of $H$ with an edge connecting a degree-1
neighbor of $x$ and a degree-1 neighbor of $y$ not incident to any
demand edge yet. This way the new demand graph becomes a matching of
$|E(H)|$ edges. Therefore, giving a separate proof for Theorem~\ref{th:matchinghardintro} is redundant. Nevertheless, we give a self-contained \classWone-hardness proof of \maxdisjoint with no restriction on the demand pattern, which, by the reduction described above, proves Theorem~\ref{th:matchinghardintro} (but not Theorem~\ref{th:skewhardintro}). We believe that the \classWone-hardness of \maxdisjoint can be already of independent interest and the proof is much simpler and cleaner than the highly technical proof of Theorem~\ref{th:skewhardintro}.

\iffull We show that \maxdisjoint is \classWone-hard with a fairly
standard parameterized reduction. We make the reduction carefully so
that the created supply graphs $G$ have treewidth bounded by a
function of $k$ and planar. This shows that the basic algorithmic
ideas of \disjointpath exploiting bounded treewidth and planarity are
unlikely to work for the more general \maxdisjoint problem.  To ensure
planarity, we reduce from the \gridtiling problem, which is a standard
technique for planar \classWone-hardness proofs (see, e.g.,
\cite{DBLP:conf/icalp/Marx12,DBLP:conf/stacs/MarxP14,DBLP:conf/soda/ChitnisHM14}).

If $\cH$ consists of every skew biclique, then we get a variant of the
problem that we call \skewdisjoint: given pairs $(s_1,t_1)$, $\dots$,
$(s_m,t_m)$, a path is valid if it connects $s_i$ and $t_j$ for some
$i\le j$. Again, by a reduction from \gridtiling, we show that
\skewdisjoint is \classWone-hard on planar graphs of treewidth
bounded by a function of $k$.  However, this time the gadget construction is more involved,
as the dense demand pattern makes the problem less amenable to the
implementation of independent choices needed in gadgets.

Let us point out that, by Corollary~\ref{thm:main}, \skewdisjoint is
one of those apparently rare concrete problems that are
\classWone-hard, but admit an \FPT approximation.
\fi

\iffull
\textbf{An alternate formulation of the results.}  Let us discuss a
different formulation of our results, which is somewhat more limited,
but perhaps reveals more precisely the nature of the problem. Recall
that the motivation for studying hereditary classes comes from the
fact that removing a vertex $v$ from $H$ can be easily expressed by
assigning $v$ to an isolated vertex of the graph. We can consider another
operation that is easy to simulate: identifying an independent set $S$
of $H$ into a single vertex (that is, we obtain $H'$ from $H$ by
removing $S$ and introducing a new vertex $v$ that is adjacent to
every neighbor of $S$ in $V(H)\setminus S$). Given an instance of
\maxdisjoint with demand pattern $H'$, we can simulate it with demand
pattern $H$ by attaching $|S|$ new degree-1 vertices to $v$ and
assigning $S$ to these vertices in an arbitrary way. It is easy to see
that the two instances are equivalent. Therefore, intuitively, we can
say that adding every $H'$ to $\cH$ that arises from identifying an
independent set in some $\cH$ should not make the problem harder. We
still have the same technical caveats as before, such as the
difficulty of finding a suitable $H$ given $H'$, but it seems closer
to the spirit of the problem if we consider hereditary classes $\cH$
closed also under identifying independent sets. Observe that if such a class
contains arbitrarily large matchings, then every graph appears in the
class: every graph with $m$ edges can be obtained from the matching
$M_m$ by identifying independent sets in an appropriate way. Therefore, our classification can be stated in a very compact way for such classes.
\begin{theorem}[Main result, alternate formulation]\label{thm:mainalt}
Let $\cH$ be a hereditary class of graphs closed under identifying independent sets.
\begin{itemize}
\item If $\cH$ does not contain every skew biclique, then \maxhhdisjoint is \FPT.
\item If $\cH$ contains every skew biclique, but does not contain every graph, then \maxhhdisjoint is \classWone-hard, admits an \FPT-approximation, and has the \EP property.
\item If $\cH$ contains every graph, then \maxhhdisjoint is \classWone-hard
  and does not have the \EP property.
\end{itemize}
\end{theorem}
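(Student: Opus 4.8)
The plan is to deduce Theorem~\ref{thm:mainalt} from the results already stated --- Theorem~\ref{thm:mainEP}, Corollary~\ref{cor:FPTapprox}, Theorem~\ref{thm:main}, and Theorems~\ref{th:matchinghardintro}--\ref{th:skewhardintro} --- once one extra structural fact is in place: \emph{if $\cH$ is hereditary and closed under identifying independent sets, then either $M_r\in\cH$ for every $r\ge 1$, or there is an $r$ with $M_r\notin\cH$, and in the former case $\cH$ is the class of all graphs}. The last implication is the combinatorial heart of the statement: given an arbitrary graph $F$ with edge set $\{e_1,\dots,e_m\}$, one starts from the matching $M_m$ with edges $f_1,\dots,f_m$ (mapping $f_i$ to $e_i$), and for each vertex $v$ of $F$ identifies, in a single step, the set of endpoints of those $f_i$ for which $e_i$ is incident to $v$. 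At the moment we process $v$ this set is independent, since its elements are endpoints of distinct, still pairwise disjoint edges, so each step keeps us inside $\cH$; after processing all vertices of $F$ the resulting graph is exactly $F$, hence $F\in\cH$. I would state and prove this as a short lemma first.

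With that in hand the theorem is bookkeeping. If $\cH$ contains every graph, then in particular $M_r\in\cH$ for all $r$, so Theorem~\ref{th:matchinghardintro} gives \classWone-hardness, and the $n\times n$ grid construction from the introduction --- whose demand graph is the matching $M_n\in\cH$ --- shows that the \EP property fails; this is the third bullet. Otherwise $\cH$ is not the class of all graphs, so by the lemma there is an $r_0$ with $M_{r_0}\notin\cH$, and by heredity $M_r\notin\cH$ for all $r\ge r_0$; thus Theorem~\ref{thm:mainEP} shows that the valid paths of \maxhhdisjoint have the \EP property and Corollary~\ref{cor:FPTapprox} gives an \FPT-approximation. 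Now split on skew bicliques. If $\cH$ does not contain every skew biclique, fix $r_1$ with $S_{r_1}\notin\cH$; by heredity $S_r\notin\cH$ for $r\ge r_1$, so with $r^\star=\max(r_0,r_1)$ we have $M_{r^\star},S_{r^\star}\notin\cH$ and Theorem~\ref{thm:main} yields that \maxhhdisjoint is \FPT (the first bullet --- note this sub-case sits inside the ``otherwise'' branch, since a class missing some skew biclique is certainly not the class of all graphs). If instead $\cH$ contains every skew biclique but not every graph, Theorem~\ref{th:skewhardintro} gives \classWone-hardness, which together with the \EP property and the \FPT-approximation just obtained is exactly the second bullet.

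The only real obstacle is the lemma: one must check that the vertex sets being identified are genuinely independent at each step and that the end product of the identifications is precisely $F$ rather than some further quotient of it --- both follow from processing one target vertex at a time and from the $f_i$ starting out pairwise disjoint, but the argument should be written out carefully. It is also worth remarking explicitly that the three bullets are exhaustive and pairwise exclusive (a class is either all graphs, or contains all skew bicliques but not all graphs, or omits some skew biclique), so no case is left uncovered, and that the role of the ``closed under identifying independent sets'' hypothesis is exactly to exclude classes such as ``all matchings'' that would otherwise be \classWone-hard yet omit some skew biclique, leaving the clean trichotomy of Theorem~\ref{thm:mainalt}.
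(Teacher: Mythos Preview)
Your proposal is correct and follows exactly the route the paper itself takes: the paragraph immediately preceding Theorem~\ref{thm:mainalt} records the same key observation (``every graph with $m$ edges can be obtained from the matching $M_m$ by identifying independent sets in an appropriate way''), and the three bullets then follow from Theorems~\ref{thm:mainEP}, \ref{thm:main}, \ref{th:matchinghardintro}, \ref{th:skewhardintro} and Corollary~\ref{cor:FPTapprox} just as you describe. Two small points to tighten when you write it out: in your identification step you must first fix, for each edge $e_i=uv$ of $F$, which endpoint of $f_i$ is sent to $u$ and which to $v$ (your phrase ``the set of endpoints of those $f_i$'' is ambiguous on this), and to obtain a target graph $F$ with isolated vertices you should start from a slightly larger matching and finish with an application of heredity.
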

Therefore, it is actually only the skew bicliques that prevent the
problem from being \FPT, and there we have the \FPT-approximation and
the \EP property for every nontrivial restriction of the problem.

\fi

\iffull
\textbf{Notation.} We conclude the section with some notation.  We will use the notation $H \subseteq G$ to indicate that a graph $H$ is a subgraph of a graph $G$.  Given two subgraphs $H_1$ and $H_2$ of a graph $G$, the graph $H_1 \cup H_2$ has vertex set $V(H_1) \cup V(H_2)$ and edge set $E(H_1) \cup E(H_2)$.  Similarly, the graph $H_1 \cap H_2$ has vertex set $V(H_1) \cap V(H_2)$ and edge set $E(H_1)\cap E(H_2)$.  We will use $|G|$ as shorthand notation for $|V(G)|$.  A \emph{separation} in a graph $G$ is a pair $(X, Y)$ of edge-disjoint subgraphs such that $X \cup Y = G$.  The separation is trivial if $V(X) \subseteq V(Y)$ or $V(Y) \subseteq V(X)$.  The \emph{order} of the separation is $|X \cap Y|$.  We will use $G[U]$ to indicate the subgraph induced on the subset $U$ of vertices.  Occasionally, the set $U$ will contain elements not in the set $V(G)$; in this case, $G[U]$ refers to the graph $G[V(G) \cap U]$.  We will denote by $G-U$ the subgraph $G[V(G) \setminus U]$.  For a subgraph $H$ of $G$, $G-H$ denotes the subgraph $G - V(H)$.   

\fi


\iffull
\section{Excluding induced matchings: Erd\H os-P\' osa property and FPT approximation}
\label{sec:approx}

In this section, we give the proof of Theorem \ref{thm:mainEP}.  We begin with a more technical statement which will facilitate the recursive step of the algorithm.

\begin{restatable}[Excluding large induced \allowbreak matching implies \EP property]{\retheorem}{restatemainEP}
\label{thm:mainEP_revised}
Let $G$ be a graph, $T \subseteq V(G)$, $k,r\ge 1$ integers, and $H$ a graph with $V(H) = T$.  Assume that $T$ is an independent set and $\deg_G(v)= 1$ for all $v \in T$.  There exists an algorithm which takes as input $G$, $T$, $k$, $r$, and $H$ and returns one of the following:
\begin{enumerate}
\item $k$ pairwise disjoint valid paths, or
\item a set $X$ of at most $4\cdot 5^{20(k+r)}$ vertices such that every valid path intersects $X$.
\item a subset $Z \subseteq T$ with $|Z| = 2r$ such that $H[Z]$ is an induced matching.
\end{enumerate}
Moreover, the algorithm runs in time $4^{3\cdot 5^{10(k+r)}}|V(G)|^{O(1)}$.
\end{restatable}

Theorem \ref{thm:mainEP} follows easily from Theorem \ref{thm:mainEP_revised}.  

\begin{proof}[Proof (of Theorem \ref{thm:mainEP} assuming Theorem \ref{thm:mainEP_revised}).]
Let $G$, $H \in \zF$, $T$, $k$ be given.  Assume $M_r$ is not contained in $\zF$ for some positive integer $r$.  We construct an auxiliary graph $G'$ by adding a new vertex $x'$ to the graph adjacent only to $x$ for every vertex $x \in T$.  Let $T' = \{x': x \in T\}$, and let $H'$ be the copy of $H$ on $T'$.  Then $G'$ has $k$ pairwise disjoint valid paths if and only if $G$ has $k$ pairwise disjoint valid paths.  Similarly, if $Z$ is a set in $G'$ intersecting all the valid paths, then $(Z \setminus T') \cup \{x\in T: x' \in Z \cap T\}$ is a set in $V(G)$ intersecting all the valid paths in $G$.  The theorem now follows by Theorem \ref{thm:mainEP_revised} and the assumption that $H$ has no induced subgraph isomorphic to $M_r$.  
\end{proof}

The proof of Theorem \ref{thm:mainEP_revised} will occupy the remainder of the section; we outline how the proof will proceed.  Consider for a moment a more general problem.  Assume we are trying to show that the \EP property holds for a set $\zC$ of connected graphs: i.e. that there exists a function $f$ such that for every positive integer $k$ and graph $G$, either $G$ has $k$ disjoint subgraphs in $\zC$ or there exists $f(k)$ vertices intersecting every subgraph of $G$ in $\zC$.  If we consider a minimal counterexample, then there cannot exist a separation $(X, Y)$ of small order such that each of $X$ and $Y$ contain a subgraph in $\zC$.  Otherwise, by minimality, we can either find $k-1$ disjoint $\zC$-subgraphs in $X-Y$ or a set of $f(k-1)$ vertices in $X-Y$ intersecting all such subgraphs.  If we found $k-1$ subgraphs, along with the graph in $Y$, we would have $k$ subgraphs in $\zC$, contradicting our choice of counterexample.  Thus, we may assume there is hitting set $Z_X$ of size $f(k-1)$ intersecting every $\zC$-subgraph in $X-Y$.  Similarly, there exists a bounded hitting set $Z_Y$ in $Y-X$.  By our assumption that $\zC$ consists of only connected subgraphs, every subgraph of $G$ in $\zC$ must be contained in either $X$ or $Y$.  Thus, $Z_X \cup Z_Y \cup V( X \cap Y)$ is a hitting set of all $\zC$-subgraphs in $G$ of size $2f(k-1) + |X \cap Y|$.  If the function $f$ grows sufficiently quickly, this will yield a contradiction.  

The conclusion is that for every small order separation $(X, Y)$, only one of $X$ or $Y$ can contain a subgraph in $\zC$.  This defines a \emph{tangle} in the graph $G$.  Tangles are a central concept in the Robertson-Seymour theory of graph minors \cite{Robertson1991153}. We will not need the exact definitions here, as we do not use any technical tangle results.  However, this argument shows how tangles arise naturally in proving \EP type results; see \cite{W11} for another example.  The proof of Theorem \ref{thm:mainEP_revised} is not presented in terms of tangles for two reasons.  First, the tangle defined above only exists in a minimal counterexample to the theorem.  While this suffices for an existential proof of an \EP bound, we are also interested in an algorithm.  We need to consider all possible problem instances and then we will not always have such a tangle to work with.  Second, the proof does not use any technical tangle theorems; in the interest of simplicity of the presentation, we do not introduce tangles although they inform and motivate how the proof proceeds.

Now return to the specific problem at hand.  Consider a graph $G$, $k$, $r$, $T \subseteq V(G)$, and demand graph $H$ with $V(H) = T$.  A subset $X \subseteq T$ is \emph{well-linked} if for any $U, W \subseteq X$ with $|U| = |W|$, there exist $|U|$ disjoint paths from $U$ to $W$. We attempt to find a large subset $T' \subseteq T$ such that 
\begin{enumerate}
\item $H[T']$ contains a perfect matching, and
\item $T'$ is well-linked in $G$.
\end{enumerate}

The significance of the perfect matching in $H[T']$ in 1.~above is that it will allow us to apply Ramsey's theorem to find a useful subgraph of $H$.  
\begin{lemma}[Ramsey's Theorem]\label{lem:multiramsey}
  Let $c$, $r$, and $n$ be positive integers with $n\ge
  c^{rc}$. Given a $c$-coloring of the edges of an $n$-clique, we
  can find in polynomial time a monochromatic $r$-clique in the
  coloring.
\end{lemma}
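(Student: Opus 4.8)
This is the standard multicolour Ramsey bound, and the plan is simply to observe that the classical greedy proof is constructive and runs in polynomial time. First I would dispose of the degenerate case $c=1$: then the whole clique is monochromatic and there is nothing to do (and, for $c\ge 2$, note in passing that $n\ge c^{rc}\ge 2^{2r}\ge r$, so an $r$-clique even exists). So assume $c\ge 2$.

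The main step is the usual ``build a descending sequence of vertices'' argument. I would maintain a current vertex set $S$, initially all $n$ vertices, and build a sequence $v_1,v_2,\dots$ with an \emph{associated colour} $\chi(v_i)\in[c]$ for each. At step $i$, pick an arbitrary $v_i\in S$; among the $|S|-1$ edges from $v_i$ to $S\setminus\{v_i\}$, some colour class has size at least $(|S|-1)/c$ by pigeonhole; let $\chi(v_i)$ be such a colour and replace $S$ by the set of vertices of $S\setminus\{v_i\}$ joined to $v_i$ by an edge of colour $\chi(v_i)$. Since the successive sets $S$ are nested, the key invariant holds: for $i<i'$ the vertex $v_{i'}$ still lies in the set produced at step $i$, which consists of colour-$\chi(v_i)$ neighbours of $v_i$, so the edge $v_iv_{i'}$ has colour $\chi(v_i)$. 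An easy induction using $c\ge2$ shows that after $i$ steps $|S|\ge n/c^{i}-1$ (from $|S_{i+1}|\ge(|S_i|-1)/c\ge n/c^{i+1}-2/c\ge n/c^{i+1}-1$). Since $n\ge c^{rc}=c^{c}\cdot c^{c(r-1)}\ge 2\,c^{c(r-1)}$, the set $S$ is still nonempty after $c(r-1)$ steps, so the process runs for $t:=c(r-1)+1$ steps, producing $v_1,\dots,v_t$ with associated colours in $[c]$. By pigeonhole, some colour occurs as $\chi(v_i)$ for at least $\lceil t/c\rceil=r$ indices, say $\chi(v_{i_1})=\dots=\chi(v_{i_r})$ with $i_1<\dots<i_r$; by the invariant every edge $v_{i_a}v_{i_b}$ with $a<b$ has this common colour, so $\{v_{i_1},\dots,v_{i_r}\}$ is a monochromatic $r$-clique.

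For the running time: the procedure performs $t=O(cr)$ iterations, each inspecting $O(n)$ edge colours and doing $O(n)$ bookkeeping, followed by an $O(cr)$ pigeonhole search; since $c,r\le c^{rc}\le n$, this is polynomial in $n$. I do not expect any serious obstacle here; the only point that needs a little care is matching the step count $c(r-1)+1$ against the size budget $n\ge c^{rc}$ — i.e.\ verifying that $S$ does not become empty before enough associated colours have been collected — which is exactly the elementary estimate above.
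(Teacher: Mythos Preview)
Your proof is correct; it is the standard constructive greedy argument for the multicolour Ramsey bound, and the size estimate $|S_i|\ge n/c^i-1$ together with $n\ge c^{rc}\ge 2c^{c(r-1)}$ correctly guarantees that $c(r-1)+1$ vertices can be extracted before the set empties, after which pigeonhole finishes the job.

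There is nothing to compare against: the paper does not prove this lemma. It is stated as ``Ramsey's Theorem'' and used as a black box in the proofs of Lemma~\ref{lem:ramsey} and Lemma~\ref{lem:noskew}. One tiny quibble: your dismissal of the case $c=1$ (``the whole clique is monochromatic and there is nothing to do'') glosses over the fact that for $c=1$ the hypothesis $n\ge c^{rc}=1$ does not force $n\ge r$, so the lemma as literally stated is vacuous or false there; but this is a defect of the statement, not of your argument, and the paper only ever applies the lemma with $c\in\{4,5\}$.
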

From Ramsey's theorem, we show that the graph $H[T']$ which contains a perfect matching must contain either an induced subgraph which is a matching or a complete bipartite subgraph.
\begin{lemma}\label{lem:ramsey}
Let $H$ be a graph and $r \ge 1$ a positive integer.  If $H$ contains $M_{5^{10r}}$ as a subgraph, then $H$ either contains $M_r$ as an induced subgraph or $H$ contains $K_{r,r}$ as a subgraph.  Moreover, we can find the desired subgraph in polynomial time.
\end{lemma}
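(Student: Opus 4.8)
The plan is to run Ramsey's theorem (Lemma~\ref{lem:multiramsey}) on an auxiliary complete graph whose vertices index the edges of a large matching of $H$, coloring each pair by the ``interaction type'' of the two corresponding matching edges. First I would fix a matching $e_1,\dots,e_m$ of $H$ with $m=5^{10r}$, writing $e_i=a_ib_i$ with the labeling of the two endpoints chosen arbitrarily but then fixed; such a matching can be found in polynomial time by a maximum-matching computation. Since a matching of size $m$ spans $2m$ distinct vertices, for indices $i<j$ there are exactly four potential ``crossing'' edges between $\{a_i,b_i\}$ and $\{a_j,b_j\}$, which I label $aa,ab,ba,bb$ according to which endpoint of $e_i$ and which of $e_j$ they join. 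The point of always taking $i<j$ is that it breaks the symmetry between the two matching edges, so these four labels are unambiguous.

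Next I would color the edges of the complete graph on vertex set $[m]=\{1,\dots,m\}$ with five colors: for $i<j$, set the color of $\{i,j\}$ to $\emptyset$ if no crossing edge is present in $H$, and otherwise to the first of $aa,ab,ba,bb$ (in that fixed order) whose crossing edge belongs to $E(H)$. Applying Lemma~\ref{lem:multiramsey} with $c=5$ and target monochromatic clique size $2r$ is legitimate, since this requires at least $5^{5\cdot 2r}=5^{10r}$ vertices and we have exactly $m=5^{10r}$; it produces, in polynomial time, an index set $J=\{j_1<\cdots<j_{2r}\}$ all of whose pairs receive the same color $\gamma$.

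To finish, I would do a short case analysis on $\gamma$. If $\gamma=\emptyset$, then $H$ restricted to the $2r$ endpoints of $e_{j_1},\dots,e_{j_r}$ has exactly those $r$ edges and nothing else, giving an induced $M_r$. If $\gamma=aa$ (resp.\ $\gamma=bb$), then $a_{j_p}$ and $a_{j_q}$ (resp.\ $b_{j_p},b_{j_q}$) are adjacent for all $p<q$, so $\{a_{j_1},\dots,a_{j_{2r}}\}$ (resp.\ the corresponding $b$'s) induces a $K_{2r}$, which contains $K_{r,r}$ as a subgraph. If $\gamma=ab$ (the case $\gamma=ba$ being symmetric), then $a_{j_p}\sim b_{j_q}$ whenever $p<q$; taking $A=\{a_{j_1},\dots,a_{j_r}\}$ and $B=\{b_{j_{r+1}},\dots,b_{j_{2r}}\}$ — two disjoint $r$-sets in which every index on the left is strictly below every index on the right — yields all $r^2$ edges between $A$ and $B$, hence a $K_{r,r}$ subgraph of $H$. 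Each step is polynomial-time, invoking the algorithmic part of Lemma~\ref{lem:multiramsey}.

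The only mildly delicate point is the pair of ``skew'' colors $ab,ba$: a monochromatic clique there delivers not a complete bipartite graph but a half-graph (a subgraph of a skew biclique), and the key observation is that a skew biclique on $2r+2r$ vertices still contains $K_{r,r}$, obtained by pairing the ``low'' half of one side with the ``high'' half of the other. This is exactly why the target monochromatic clique size must be $2r$ rather than $r$, which in turn dictates the $5^{10r}$ bound; everything else is bookkeeping.
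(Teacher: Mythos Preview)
Your proof is correct and essentially identical to the paper's: both define a $5$-coloring of the pairs of matching edges by the type of crossing edge present (none, $aa$, $bb$, or one of the two skew types), invoke Lemma~\ref{lem:multiramsey} with target clique size $2r$, and then extract an induced $M_r$, a $K_{2r}$, or a $K_{r,r}$ from a half-graph according to the resulting color. The only cosmetic difference is the priority order among the four non-empty interaction types, which is immaterial.
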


\begin{proof}
By Lemma \ref{lem:multiramsey}, every clique with $c:=5^{10r}$ vertices such that the edges are colored one of five colors contains a clique subgraph of size $2r$ where all the edges are the same color.    

Let $H$ contain $M_c$ as a subgraph, and let $\{x_i, y_i\}$ for $1 \le i \le c$ form the edges of the matching.  Consider the clique on $c$ vertices, with the vertices labeled $1, \dots, c$.  We define a 5-coloring of the edges as follows.  For an edge of the clique $ij$ with $i < j$, we color the edge:
\begin{enumerate}
\item color 1 if no edge of $H$ has one end in $\{x_i, y_i\}$ and one end in $\{x_j, y_j\}$, 
\item color 2 if $x_i$ is adjacent $x_j$, 
\item color 3 if $y_i$ is adjacent $y_j$ and $x_i \nsim x_j$, 
\item color 4 if $x_i$ is adjacent $y_j$ and $x_i \nsim x_j$, $y_i \nsim y_j$, 
\item color 5 if $y_i$ is adjacent $x_j$ and $x_i \nsim x_j$, $y_i \nsim y_j$, $x_i \nsim y_j$
\end{enumerate}
where all adjacencies are in the graph $H$.  This defines a 5-coloring of the edges of the clique.  By our choice of $c$, there exists a subset of vertices of size $2r$ inducing a monochromatic subclique, and we can identify it in polynomial time.  Without loss of generality, we may assume that the vertices $1 \le i \le 2r$ of the clique induce such a monochromatic clique.  If the subclique has color 1, then $H$ contains an induced matching of size $2r$.  If the monochromatic clique has color 2 or 3, then $H$ contains a clique subgraph of size $2r$.  Finally, if the subclique has color 4 (respectively, 5), then the vertices $\{x_1, \dots, x_r\} \cup \{y_{r+1}, \dots, y_{2r}\}$ (respectively, $\{y_1, \dots, y_r\} \cup \{x_{r+1}, \dots, x_{2r}\}$) induce a $K_{r, r}$ subgraph of $H$.
\end{proof}

Given a large set $T'$ satisfying 1 and 2 above, the argument is fairly straightforward.
By Lemma \ref{lem:ramsey}, either $H[T']$ contains an induced matching of size $r$ or there exist two sets $U$ and $W$ in $T'$, each of size $k$, such that every vertex in $U$ is adjacent every vertex in $W$ (in $H$).  As we are assuming $T'$ is well-linked in $G$, given such a $U$ and $W$, we can find $k$ disjoint paths from $U$ to $W$ and these will necessarily be valid paths.

How can we find such a subset $T'$ of $T$?  It is easy to find such a subset $T'$ of size two --- take two vertices in $T$ which are adjacent in $H$ and connected by a path.  Thus, the difficulty will lay in showing we can find $T'$ sufficiently large to apply the desired Ramsey argument.  Note that the property of being well-linked is a standard certificate that a graph has a large tangle.  We proceed by effectively showing that we either have a tangle as in a minimal counterexample to the \EP property, or alternatively, finding a separation separating two valid paths and then recurse on the smaller graphs.  More explicitly, we replace property 2 above by:
\begin{itemize}
\item[$2'.$] there does not exist a separation $(X, Y)$ of order $<|T'|$ with $T' \subseteq V(X)$ and $Y-X$ containing a valid path.
\end{itemize}
Property $2'$ forces a similar behavior to well-linkedness in a tangle without requiring the technical properties of a tangle.  We show that either we can grow $T'$ by two vertices and satisfy $1$ and $2'$, or alternatively find a separation where we can recurse.  The problem of identifying separations which separate $T'$ from a valid path leads us to introduce the notions of $\zP$-tight separations and $\zP$-free sets.  We define these notions rigororously in Subsection \ref{sec:tight2} and present efficient algorithms for finding them.  We conclude with the proof of Theorem \ref{thm:mainEP_revised} in Subsection \ref{sec:EPfinal}.

\iffull

\subsection{$\zP$-tight separations and $\zP$-free sets}\label{sec:tight2}

In this subsection, we give an algorithm for finding what we call tight separations.  We begin with the definition.

\begin{definition}[$\zP$-tight]
Let $G$ be a graph and $T \subseteq V(G)$.  Let $\zP$ be a set of connected subgraphs in $G-T$.  
We say a separation $(U, W)$ is \emph{$\zP$-tight for $T$} if
\begin{itemize}
\item[i.] $T \subseteq V(U)$;
\item[ii.] there exists $P \in \zP$ with $P \subseteq W-U$;
\item[iii.] there does not exist a separation $(U', W')$ and element $P \in \zP$ with $|U' \cap W'| \le |U \cap W|$, $U \subsetneq U'$, and  $P \subseteq W'-U'$.
\end{itemize}
When there can be no confusion as to the set $\zP$, we will simply say a separation is \emph{tight} for $T$.
\end{definition}
Thus a separation is tight if the portion not containing $T$ is made as small as possible while not increasing the order of the separation and maintaining the property that it still contains an element of $\zP$.  Note that a tight separation may have order greater than $|T|$.  

Given a graph $G$, $T \subseteq V(G)$, and $\zP$ a non-empty set of connected subgraphs of $G-T$, there always exists a tight separation of order at most $|T|$.   To see this, let $(U, W)$ be a separation of minimum order satisfying $i$ and $ii$, and subject to that, to maximize $|V(U)| + |E(U)|$.  Such a separation always exists as the trivial separation $(T, G)$ satisfies $i$ and $ii$ with $T$ treated as the graph with vertex set $T$ and no edges. Then $(U, W)$ will be of order at most $|T|$ and satisfy $i - iii$.  The same argument shows the following observation.

\begin{OB}\label{ob1}
Let $G$ be a graph, $T \subseteq V(G)$, 
and $\zP$ a non-empty set of connected graphs in $G-T$.  
Let $(U, W)$ be a separation satisfying $i$~and $ii$ in the definition of tight for $T$.  
There exists a separation $(U', W')$ of order at most $|U \cap W|$ which is tight for $T$ and $U \subseteq U'$.  
\end{OB}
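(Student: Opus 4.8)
The plan is to run the extremal argument sketched immediately before the statement, but starting from $(U,W)$ instead of from the trivial separation. I would define
\[
\mathcal{F} = \bigl\{\, (U', W') : (U',W')\text{ is a separation of }G,\ T \subseteq V(U'),\ U \subseteq U',\ |U' \cap W'| \le |U \cap W|,\ P \subseteq W' - U'\text{ for some }P \in \zP \,\bigr\}.
\]
By hypothesis $(U,W)$ satisfies conditions $i$ and $ii$ in the definition of $\zP$-tight, so $(U,W) \in \mathcal{F}$ and $\mathcal{F} \neq \emptyset$; moreover $\mathcal{F}$ is finite since $G$ is a finite graph. Let $(U', W')$ be a member of $\mathcal{F}$ maximizing $|V(U')| + |E(U')|$.

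By the defining conditions of $\mathcal{F}$, the separation $(U',W')$ has order at most $|U \cap W|$, contains $U$ as a subgraph, and satisfies $i$ and $ii$; so it only remains to verify condition $iii$. Suppose towards a contradiction that there is a separation $(U'',W'')$ and some $P \in \zP$ with $|U'' \cap W''| \le |U' \cap W'|$, $U' \subsetneq U''$, and $P \subseteq W'' - U''$. I would check that $(U'',W'') \in \mathcal{F}$: since $U' \subseteq U''$ we get $T \subseteq V(U') \subseteq V(U'')$ and $U \subseteq U' \subseteq U''$; the order bound follows from $|U'' \cap W''| \le |U' \cap W'| \le |U \cap W|$; and $P$ witnesses the last condition. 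But the proper subgraph inclusion $U' \subsetneq U''$ forces $|V(U'')| + |E(U'')| > |V(U')| + |E(U')|$ (either the vertex set or the edge set strictly grows), contradicting the maximality of $(U',W')$. Hence $(U',W')$ is $\zP$-tight for $T$, has order at most $|U \cap W|$, and satisfies $U \subseteq U'$, as required.

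I expect no real obstacle here; the points to watch are purely bookkeeping. First, $\mathcal{F}$ must be finite (true for finite $G$) so that the maximum is attained. Second, the order condition in the definition of $\mathcal{F}$ must read ``$\le |U\cap W|$'', not ``$=$'', so that a separation of strictly smaller order arising in the contradiction step still belongs to $\mathcal{F}$. Third, one uses that a proper subgraph inclusion of the $T$-side strictly increases $|V(\cdot)|+|E(\cdot)|$. Finally, I note that applying exactly the same reasoning with $(U,W)$ replaced by the trivial separation $(T,G)$ and the bound $|U\cap W|$ by $|T|$ re-derives the preceding remark that some $\zP$-tight separation of order at most $|T|$ exists.
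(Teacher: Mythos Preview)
Your proof is correct and follows essentially the same extremal argument the paper sketches just before the observation (``The same argument shows the following observation''): pick a separation in the admissible family maximizing $|V(U')|+|E(U')|$ and check that any violator of condition~$iii$ would also lie in the family with a strictly larger value. Your version actually streamlines the paper's two-step ``minimize order, then maximize $|V(U)|+|E(U)|$'' into a single maximization over $\mathcal{F}$, which is fine because any $(U'',W'')$ violating~$iii$ has order $\le |U'\cap W'|\le |U\cap W|$ and so still belongs to $\mathcal{F}$.
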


We now turn our attention to finding a tight separation when given a graph $G$, subset $T$ of vertices, and set of connected subgraphs $\zP$.  If we were given $\zP$ as a list of subgraphs, one could use standard flow algorithms to find a minimum order separation separating the terminals $T$ from each element of $\zP$.   However, in the applications to come, we will not have any reasonable bound on the size of $\zP$ (in terms of $|V(G)|$).  Thus, we assume $\zP$ is given by an oracle and bound the runtime in the size of the terminal set $T$.  The difficulty now lays in identifying an appropriate set of separations to check as potential candidates for a tight separation.  To find such a set of separations, we use what are called important separators in a graph. 

\begin{definition}[separator]
Let $G$ be an undirected graph and let $X,Y \subseteq V(G)$ be two disjoint sets. A set $S\subseteq V(G)$ of vertices is an \emph{$X-Y$ separator} if $S$ is disjoint from $X \cup Y$ and there is no component $K$ of $G - S$ with both $V(K) \cap X \neq \emptyset$ and $V(K)  \cap Y \neq \emptyset$.
\end{definition}

\begin{definition}[important separators]
Let $X, Y \subseteq V (G)$ be disjoint sets of vertices, $S \subseteq V (G)$ be an $X - Y$ separator, and let $K$ be the union of the vertex sets of every component of $G - S$ intersecting $X$. We say that $S$ is an important $X - Y$ separator if it is inclusionwise minimal and there is no $X - Y$ separator $S'$ with $|S'| \le |S|$ such that $K' \supsetneq K$, where $K'$ is the union of every component of $G - S'$ intersecting $X$.
\end{definition}

\begin{lemma}[Finding important separators~\cite{MR}]\label{lem:impsep}
Let $X, Y \subseteq V (G)$ be disjoint sets of vertices in a graph $G$. For every $p \ge 0$, there are at most $4^p$ important $X - Y$ separators of size at most $p$. Furthermore, we can enumerate all these separators in time $4^p \cdot p \cdot (|E(G)| + |V (G)|)$.
\end{lemma}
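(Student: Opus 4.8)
The plan is to describe the bounded-depth branching procedure of Marx and Razgon~\cite{MR}. For a graph $G$ and disjoint $X,Y\subseteq V(G)$ let $\lambda(X,Y)$ denote the minimum size of an $X$--$Y$ separator; by a standard maximum-flow computation with vertex capacities, $\lambda(X,Y)$ and the residual graph of a maximum flow can be obtained in time $O(\lambda(X,Y)\cdot(|V(G)|+|E(G)|))$. For a set $R$ with $X\subseteq R$ and $R\cap Y=\emptyset$, write $\Delta(R)$ for the set of vertices outside $R$ having a neighbour in $R$; then $\Delta(R)$ separates $X$ from $Y$ in $G$, every inclusionwise-minimal $X$--$Y$ separator equals $\Delta(R)$ for $R$ its $X$-side (the set reachable from $X$ after the separator is removed), and $R\mapsto |\Delta(R)|$ is submodular. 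Among all minimum-size $X$--$Y$ separators there is a unique one whose $X$-side $R^{+}$ is inclusionwise maximal --- the minimum separator ``closest to $Y$'' --- and from the residual graph one reads off $R^{+}$ within the same time bound. The key structural fact is that every important $X$--$Y$ separator $S$ has $X$-side $R_S\supseteq R^{+}$: otherwise $R_S\cup R^{+}\supsetneq R_S$, and submodularity gives $|\Delta(R_S\cup R^{+})|\le |\Delta(R_S)|+|\Delta(R^{+})|-|\Delta(R_S\cap R^{+})|\le |S|+\lambda-\lambda=|S|$ (using $|\Delta(R_S\cap R^{+})|\ge\lambda$), so $\Delta(R_S\cup R^{+})$ is an $X$--$Y$ separator of size at most $|S|$ with strictly larger $X$-side, contradicting the maximality clause in the definition of an important separator.

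The recursion $\mathrm{Enum}(G,X,Y,p)$ goes as follows. If $\lambda(X,Y)>p$, return the empty collection; if $\lambda(X,Y)=0$, return $\{\emptyset\}$ (the unique important separator). Otherwise $|\Delta(R^{+})|=\lambda(X,Y)\ge 1$, so pick any $v\in\Delta(R^{+})$. Any important $X$--$Y$ separator $S$ with $|S|\le p$ either contains $v$ or avoids $v$; in the latter case, since $R_S\supseteq R^{+}$, the vertex $v$ has a neighbour in $R_S$ and $v\notin S$, hence $v\in R_S$ and $S$ is also an $(X\cup\{v\})$--$Y$ separator. The procedure therefore recurses in two branches: $\mathrm{Enum}(G-v,X,Y,p-1)$, adding $v$ back to every returned set, and $\mathrm{Enum}(G,X\cup\{v\},Y,p)$. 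One checks that if $S$ is an important $X$--$Y$ separator of $G$ with $v\in S$, then $S\setminus\{v\}$ is an important $X$--$Y$ separator of $G-v$ of size at most $p-1$ (the $X$-side is unchanged, and both minimality and the reach-maximality clause transfer, since a counterexample in $G-v$ becomes one in $G$ after $v$ is added back), and symmetrically that the separators of the second type are important $(X\cup\{v\})$--$Y$ separators of $G$. Running a polynomial-time importance test on all separators the recursion returns discards any that happen not to be important for $(X,Y)$, so the final output is exactly the set of important $X$--$Y$ separators of $G$ of size at most $p$.

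For the bound and the running time I argue by induction on the potential $\mu=2p-\lambda(X,Y)$, showing that the number of important $X$--$Y$ separators of size at most $p$ is at most $2^{\mu}\le 4^{p}$. When $\lambda(X,Y)>p$ this count is $0$, and when $\lambda(X,Y)=0$ it is $1\le 2^{2p}$; otherwise the two branches above are disjoint, so the count equals (number with $v\in S$) $+$ (number with $v\notin S$). The first summand is at most the number of important $X$--$Y$ separators of $G-v$ of size at most $p-1$; since deleting a vertex lowers the minimum separator size by at most one, $\lambda_{G-v}(X,Y)\ge\lambda(X,Y)-1$, so that sub-instance has potential at most $\mu-1$. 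The second summand is at most the number of important $(X\cup\{v\})$--$Y$ separators of $G$ of size at most $p$, and here is the crucial point: $\lambda(X\cup\{v\},Y)\ge\lambda(X,Y)+1$. Indeed, were there an $(X\cup\{v\})$--$Y$ separator $S'$ with $|S'|\le\lambda(X,Y)$, then $S'$ would be a minimum $X$--$Y$ separator with $v$ in its $X$-side $R_{S'}$, and uncrossing $R_{S'}$ with $R^{+}$ by submodularity gives $|\Delta(R_{S'}\cup R^{+})|\le\lambda(X,Y)$, so $R_{S'}\cup R^{+}$ is the $X$-side of a minimum separator; maximality of $R^{+}$ forces $R_{S'}\cup R^{+}=R^{+}$, i.e.\ $R_{S'}\subseteq R^{+}$, contradicting $v\in R_{S'}\setminus R^{+}$. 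Hence the second sub-instance also has potential at most $\mu-1$, and by induction the count is at most $2^{\mu-1}+2^{\mu-1}=2^{\mu}$. The recursion tree is binary with the (nonnegative) potential strictly decreasing along each edge from an initial value of at most $2p$, so it has at most $4^{p}$ leaves and $O(4^{p})$ nodes; each node performs one maximum-flow computation and residual-graph scan in time $O(p\cdot(|V(G)|+|E(G)|))$ plus polynomial-time bookkeeping, for a total of $O\!\left(4^{p}\cdot p\cdot(|E(G)|+|V(G)|)\right)$.

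The main obstacle is the uncrossing argument for $\lambda(X\cup\{v\},Y)\ge\lambda(X,Y)+1$, and in particular the realization that the branching vertex $v$ must be taken in the minimum separator \emph{closest to $Y$}: only the reach-maximality of $R^{+}$ makes the submodular uncrossing produce a contradiction, so that moving $v$ into the source side provably raises the minimum cut --- and it is precisely this guaranteed increase, played against the decrease of $p$ in the other branch, that makes the potential $2p-\lambda$ drop in both branches and yields the $4^{p}$ bound. Branching instead on the minimum separator closest to $X$ fails (a single $X$--$Y$ path of length three already shows $\lambda$ need not increase). The remaining work --- verifying that minimality and the maximality clause of ``important'' survive the deletion of $v$ and the enlargement of $X$, and the trivial clean-up step --- is routine once the $X$-sides $R_S$ are tracked explicitly.
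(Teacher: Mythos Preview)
The paper does not prove this lemma; it is quoted as a black box from the cited reference~\cite{MR}. Your write-up reproduces the standard Marx--Razgon branching argument and is correct: the submodular uncrossing showing $R_S\supseteq R^{+}$, the two-way branch on a vertex $v\in\Delta(R^{+})$, and the potential $\mu=2p-\lambda$ are exactly the ingredients of the original proof, and your identification of the crucial step (that $v$ must lie in the minimum separator \emph{closest to $Y$} so that $\lambda(X\cup\{v\},Y)$ provably increases) is on point.

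One remark on a step you labelled ``symmetrically'': the assertion that an important $X$--$Y$ separator $S$ with $v\notin S$ remains important for $(X\cup\{v\},Y)$ is not quite symmetric to the $v\in S$ case and deserves a line. The point is that $v\in R_S$ is connected to $X$ inside $G[R_S]$; since any competing $(X\cup\{v\})$--$Y$ separator $S''$ with larger side $K''\supsetneq R_S$ has $R_S\subseteq K''$ and $K''\cap S''=\emptyset$, this $v$--$X$ path survives in $G-S''$, so the component of $v$ already meets $X$ and the $X$-side of $S''$ equals $K''\supsetneq R_S$, contradicting importance of $S$. With that detail filled in, the completeness of the enumeration is fully justified.
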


As a first step to presenting an algorithm for finding a $\zP$-tight separation, we give an algorithm for testing whether a set is \emph{$\zP$-free}.    
\begin{definition}[$\zP$-free]
Let $G$ be a graph, $T \subseteq V(G)$, and $\zP$ a set of connected subgraphs of $G-T$.  The set $T$ is \emph{$\zP$-free} if there does not exist a separation $(U, W)$ of order strictly less than $|T|$ and $P \in \zP$ such that $T \subseteq U$ and $V(P) \subseteq V(W-U)$.
\end{definition}
Let $G$ be a graph, $T \subseteq V(G)$, and $\zP$ a set of connected subgraphs of $G-T$.  We will show that there is an algorithm for efficiently testing whether $T$ is $\zP$-free or not for sets $T$ of bounded size.  We will typically assume that $\zP$ is given by an oracle.  A \emph{$\zP$-oracle} is a function $f$ such that for any subgraph $H \subseteq G$, $f$ responds ``yes" if there is an element $P \in \zP$ such that $P \subseteq H$ and ``no" otherwise.  A \emph{certificate} that $T$ is not free is a separation $(X, Y)$ of order strictly less than $|T|$ such that $T \subseteq V(X)$ and there exists $P \in \zP$ with $P \subseteq Y-X$.

\defparproblemu{{\sc Test $\zP$-Free}}
{
A graph $G$, $T \subseteq V(G)$, $\zP$-oracle $f$ for a set $\zP$ of connected subgraphs of $G-T$.
}{}
{
either 
\begin{itemize}
\item confirm that $T$ is $\zP$-free or 
\item output a separation $(X, Y)$ which is a certificate that $T$ is not free; moreover, $(X, Y)$ is of minimum order among all such separations.
\end{itemize}

}

\begin{lemma}[Testing if a set is $\zP$-free]\label{lem:freetest}
There exists an algorithm solving {\sc Test $\zP$-Free} running in time $4^{|T|}|V(G)|^{O(1)}$ utilizing $O(|V(G)|4^{|T|})$ calls of the $\zP$-oracle.
\end{lemma}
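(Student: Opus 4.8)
The plan is to reduce testing $\zP$-freeness to a family of minimum vertex-cut computations that can be carried out using important separators. The first step is a reduction to single-vertex targets. Since every element of $\zP$ is a \emph{connected} subgraph of $G-T$, there exists a separation $(X,Y)$ of order $<|T|$ with $T\subseteq V(X)$ and some $P\in\zP$ with $V(P)\subseteq V(Y-X)$ if and only if there is a vertex $v\in V(G)\setminus T$ for which such a separation additionally satisfies $v\in V(Y)\setminus V(X)$: for the forward direction take $v\in V(P)$ (so $v\notin T$), and the reverse is immediate. So it suffices, for each candidate $v\in V(G)\setminus T$, to find the minimum order $\lambda(v)$ of a separation $(X,Y)$ with $T\subseteq V(X)$, $v\in V(Y)\setminus V(X)$, and the $\zP$-oracle answering ``yes'' on $G[V(K)]$, where $K$ is the component of $v$ in $G$ minus the separator $V(X)\cap V(Y)$. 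Then $T$ is $\zP$-free exactly when $\lambda(v)\ge|T|$ for every such $v$, and otherwise we output a witness separation of globally minimum order.

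The delicate point is that a minimizing separator for $\lambda(v)$ may legitimately contain vertices of $T$, because $T\subseteq V(X)$ only forbids $T$ from the far side $V(Y)\setminus V(X)$, not from $V(X)\cap V(Y)$. (On the three-vertex graph with edges $vt_1$ and $t_1t_2$, with $T=\{t_1,t_2\}$ and $\zP$ containing the one-vertex subgraph on $v$, the unique order-$1$ separation with $T\subseteq V(X)$ and $v$ on the far side has $t_1$ in the separator.) To reduce to cuts that avoid the terminal set --- the situation covered by Lemma~\ref{lem:impsep} --- I would pass to the graph $G'$ obtained from $G$ by adding, for each $t\in T$, a new pendant vertex $t'$ adjacent only to $t$, and set $T'=\{t':t\in T\}$. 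A set $S\subseteq V(G)$ with $v\notin S$ separates $v$ from $T\setminus S$ in $G$ precisely when $S$ is a $\{v\}$-$T'$ separator in $G'$ (putting $t$ into $S$ handles its pendant $t'$; leaving $t$ out forces $S$ to separate $v$ from $t'$, equivalently from $t$), and then the component $K$ of $v$ in $G'-S$ is disjoint from both $T$ and $T'$, so $V(K)\subseteq V(G)\setminus T$ and the oracle may be queried on $G[V(K)]$. Finally, $\zP$ is monotone: enlarging $K$ can only change a ``no'' answer to ``yes''. Hence, by the standard domination property of important separators (see~\cite{MR}), if \emph{some} $\{v\}$-$T'$ separator $S$ of size $<|T|$ gives an accepting $v$-component, then so does some \emph{important} $\{v\}$-$T'$ separator $S'$ with $|S'|\le|S|$, whose $v$-component contains that of $S$ and therefore the relevant $P$.

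The algorithm is then the following. Build $G'$. For each $v\in V(G)\setminus T$, use Lemma~\ref{lem:impsep} in $G'$ with $p=|T|-1$ to enumerate all important $\{v\}$-$T'$ separators $S'$ with $|S'|\le|T|-1$; for each, compute the component $K'$ of $v$ in $G'-S'$ and query the oracle on $G[V(K')]$; if the answer is ``yes'', record the separation $(X,Y)$ of $G$ with $V(Y)=V(K')\cup S'$ and $V(X)=V(G)\setminus V(K')$ (so $Y-X=G[V(K')]$, the order is $|S'|\le|T|-1$, and $T\subseteq V(X)$ since $V(K')\cap T=\emptyset$). If nothing is recorded, report ``$T$ is $\zP$-free''; otherwise output a recorded separation of minimum order. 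Every recorded separation is a genuine certificate of non-freeness; conversely, if some certificate has order $q<|T|$, then applying the discussion above to its separator (first shrinking it to a minimal $\{v\}$-$T'$ separator, which only enlarges the $v$-component) shows the loop records a certificate of order at most $q$, so the reported minimum equals $q$. For the running time, $G'$ has $O(|V(G)|)$ vertices and edges; by Lemma~\ref{lem:impsep} each $v$ produces at most $4^{|T|-1}$ important separators in time $4^{|T|}|V(G)|^{O(1)}$, and each costs one oracle call and a linear-time component computation. Summing over the at most $|V(G)|$ choices of $v$ gives total time $4^{|T|}|V(G)|^{O(1)}$ and $O(|V(G)|\,4^{|T|})$ oracle calls.

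I expect the main obstacle to be getting the combinatorial characterization exactly right: reconciling that (i) a minimum certificate may use terminals inside its separator, which the pendant construction $G'$ is designed to neutralize by recasting everything in terms of $\{v\}$-$T'$ separators that avoid the new terminals, and (ii) we may restrict to \emph{important} such separators, which needs both connectedness of the members of $\zP$ (to fix a single target $v$) and monotonicity of the oracle together with the domination property of important separators. Reconstructing $(X,Y)$ of $G$ from a pair $(S',K')$ and verifying the stated bounds is then routine.
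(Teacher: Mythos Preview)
Your proposal is correct and follows essentially the same approach as the paper: iterate over target vertices $v\in V(G)\setminus T$, pass to an auxiliary graph so that separators are allowed to contain vertices of $T$, enumerate important separators of size at most $|T|-1$ via Lemma~\ref{lem:impsep}, and use monotonicity (enlarging the $v$-component preserves a ``yes'' answer) to justify restricting to important separators. The only cosmetic difference is the auxiliary graph: the paper adds a single vertex $a'$ adjacent to all of $T$ and looks at important $y$--$a'$ separators, whereas you attach a pendant $t'$ to each $t\in T$ and look at important $\{v\}$--$T'$ separators; these two devices are equivalent for the purpose at hand.
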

\begin{proof}
Let $G$, $T$, and an oracle $f$ for the set $\zP$ be given.  Let $n = |V(G)|$ and $m = |E(G)|$.  There is a slight technical issue which we must address.  We want to proceed by calculating all separations $(X, Y)$ where $T \subseteq X$ and $X \cap Y$ is an important separator for some vertex $y \in Y$.  However, we will additionally need to consider such separations where $X \cap Y$ intersects the set $T$.  However, in the definition of important separator, we do not consider separators which intersect one of the two sets.  Thus, we define an auxiliary graph $G'$ formed by adding a new vertex $a'$ adjacent to every vertex of $T$ and consider important separators separating a vertex from $a'$ in $G'$. 

Fix a vertex $y \in V(G) \setminus T$.  Enumerate all important $y - a'$ separators in $G'$ of size at most $|T| - 1$.  For each separator $S$, let $K_S$ be the component of $G'-S$ containing $y$.  Using the $\zP$-oracle $f$, check if there exists an element $P \in \zP$ with $P \subseteq K_S$.  We do this for every $y \in V(G) \setminus T$.  By Lemma \ref{lem:impsep}, this can be done in time $O(4^{|T|} |T| (n+m) n\cdot m)$ with at most $n4^{|T|}$ calls to to the $\zP$-oracle, as desired.  

Assume, as a case, we find a vertex $y \in V(G) \setminus T$ and an important $y - a'$ separator $S$ such that the subgraph induced by $K_S$ contains an element of $\zP$.  Pick $y$ and $S$ over all such vertices and important separators to minimize $|S|$.  We return the separation $(G - V(K_S), G[V(K_S) \cup S] - E(G[S]))$ as a certificate that $T$ is not $\zP$-free. If we find no such important separator, we return that $T$ is $\zP$-free.  

To see correctness, first observe that if we return a separation, it must be the case that $T$ is not $\zP$-free.  Thus, we must only show that if $T$ is not $\zP$-free, we correctly find a minimum order separation certifying so.  Assume that $T$ is not free, and let $(X, Y)$ be a separation such that:
\begin{itemize}
\item[$i.$] $T \subseteq V(X)$ and there exists $P \in \zP$ such that $P \subseteq Y-X$.
\item[$ii.$] Subject to $i$, the size of $|X \cap Y|$ is minimized.
\item[$iii.$]  Subject to $i$ and $ii$, $|Y|$ is minimized.
\end{itemize}
Moreover, assume that the algorithm finds no important separator $S$ of order at most $|X \cap Y|$ such that $S$ separates $a'$ from an element of $\zP$.  
Note, by $iii$, we may assume that $Y - X$ is connected.  

Fix a vertex $y \in Y-X$.  We considered all important $z - a'$ separators in $G'$ for every vertex $z \in V(G) \setminus T$ and did not find an important separator of order at most $|X \cap Y|$ which separated an element of $\zP$ from $a'$.  Specifically, it cannot be the case that $X \cap Y$ is an important $y - a'$ separator. 
By our choice of $(X, Y)$ to satisfy $ii$ and the observation that $Y - X$ is connected, we have that $X \cap Y$ is a minimal (by containment) $y-a'$ separator.  We conclude that there exists an important $y - a'$ separator $S$ of order $|X \cap Y|$ such that the component of $G'-S$ containing $y$ contains all of $Y-X$.   Note that here we are again using the fact that $Y-X$ is connected.  Thus, the separator $S$ separates $a'$ from an element of $\zP$, contradicting our assumptions.  This completes the proof.
\end{proof}

We now turn our attention to the algorithm for finding a $\zP$-tight separation.

\defparproblemu{{\sc Find $\zP$-tight}}
{
A graph $G$, $T \subseteq V(G)$, $\zP$-oracle $f$ for a set $\zP$ of connected subgraphs of $G-T$.
}{}
{
A separation $(X, Y)$ of order at most $|T|$ which is $\zP$-tight for the pair $(G, T)$ and of minimum order among all such tight separations.
}

\begin{lemma}[Finding a $\zP$-tight separation]\label{lem:tightfind}
There exists an algorithm solving {\sc Find $\zP$-tight} running in time $4^{|T|}n^{O(1)}$ utilizing $O(|T|\cdot|V(G)|^24^{|T|})$ calls of the $\zP$-oracle.
\end{lemma}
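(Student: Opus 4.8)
The plan is to reduce the problem, as in the proof of Lemma~\ref{lem:freetest}, to an enumeration of important separators in the auxiliary graph $G'$ obtained from $G$ by adding a new vertex $a'$ adjacent to every vertex of $T$. The key structural fact is that any separation $(U,W)$ of $G$ with $T\subseteq V(U)$ is described by its separator $S=V(U)\cap V(W)$ together with the partition of $V(G)\setminus S$ into the part $A$ on the $U$-side and the part $B$ on the $W$-side; moreover, whenever $G[B]$ contains an element $P\in\zP$, the set $S$ is an $a'-v$ separator of $G'$ for every $v\in V(P)$ (here we use that the elements of $\zP$ lie in $G-T$). Since $\zP$-tightness asks that the $T$-side be \emph{as large as possible} for a given order, we must work with important $a'-v$ separators, which among $a'-v$ separators of a given size maximise the component containing $a'$ (i.e.\ the $T$-side) --- in contrast with Lemma~\ref{lem:freetest}, where important separators in the opposite direction sufficed.

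Concretely: for each vertex $v\in V(G)\setminus T$, use Lemma~\ref{lem:impsep} to enumerate all important $a'-v$ separators $S$ of $G'$ of size at most $|T|$ (at most $4^{|T|}$ of them, listed in time $4^{|T|}\cdot|T|\cdot(|E(G)|+|V(G)|)$). For each such $S$, let $A$ be the vertex set of the component of $G'-S$ containing $a'$ with $a'$ deleted, let $B=V(G)\setminus(A\cup S)$, and let $(U_S,W_S)$ be the separation of $G$ with $V(U_S)=A\cup S$, $V(W_S)=V(G)\setminus A$, edges distributed so that $U_S$ and $W_S$ are edge-disjoint; then $(U_S,W_S)$ has order $|S|\le|T|$ and satisfies $T\subseteq V(U_S)$, since a terminal outside $S$ is adjacent to $a'$ in $G'-S$ and hence lies in $A$. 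Querying the $\zP$-oracle on $G[B]=W_S-U_S$, test whether some $P\in\zP$ has $P\subseteq W_S-U_S$; if so, record $(U_S,W_S)$ as a \emph{candidate}. Among all recorded candidates, output one of minimum order, and subject to that one with $|V(U_S)|$ maximum. (If there is no candidate --- equivalently, if $\zP=\emptyset$, which one may also detect by a single oracle call on $G$ --- there is no $\zP$-tight separation.) The running time is dominated by the $O(|V(G)|)$ important-separator enumerations, each costing $4^{|T|}|V(G)|^{O(1)}$ time and $O(4^{|T|})$ oracle calls, which lies within the stated bounds.

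For correctness I would argue as follows. First, by the discussion preceding Observation~\ref{ob1} a $\zP$-tight separation of order at most $|T|$ exists, and its order equals the minimum order $p^*$ of any separation meeting conditions (i)--(ii) of the definition of $\zP$-tight: one inequality is immediate, and the other is Observation~\ref{ob1}. Every candidate meets (i)--(ii), hence has order at least $p^*$. Next, fix a $\zP$-tight separation $(X,Y)$ of order $p^*$; replacing $Y-X$ by the component $C$ of $G-(X\cap Y)$ containing a witness $P\in\zP$ only enlarges $X$ and keeps the separation $\zP$-tight of order $p^*$, so we may assume $Y-X=G[V(C)]$ and $X\cap Y=N_G(C)$. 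Then $N_G(C)$ is a \emph{minimal} $a'-v$ separator of $G'$ for $v\in V(P)$ (a strictly smaller separator would give a separation meeting (i)--(ii) of order $<p^*$), so the standard domination property of important separators --- a minimal separator between two sets is dominated by an important one of no larger size, with the set reachable from the source side not shrinking --- yields an important $a'-v$ separator $S$ with $|S|\le p^*$ such that $V(U_S)\supseteq V(X)$ and $W_S-U_S$ still contains an element of $\zP$. Hence there is a candidate of order $p^*$ whose $T$-side is at least $V(X)$, so our output has order exactly $p^*$; and applying the same push to any candidate that fails to be $\zP$-tight produces a candidate of order $\le p^*$ (hence $=p^*$) with a strictly larger $T$-side, contradicting the maximality in the selection. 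Thus the output is $\zP$-tight and of minimum order.

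I expect the main obstacle to be precisely the step asserting that the important $a'-v$ separator $S$ extracted from $(X,Y)$ retains an element of $\zP$ on its small side $W_S-U_S$ while satisfying $V(U_S)\supseteq V(X)$: pushing the separator toward $\zP$ can in principle slice through the witnessing path $P$, or through superfluous components sitting on the $T$-side, so making this work cleanly requires the careful choice of reference $\zP$-tight separation (a single component $C$ on the non-$T$ side, with $X\cap Y=N_G(C)$), the domination lemma for minimal separators, and --- if a single enumeration is not enough --- iterating the enumeration so that the current separation is pushed step by step until it becomes tight (each step strictly enlarging the $T$-side, hence $O(|V(G)|)$ iterations, which also accounts for the extra polynomial factor in the oracle-call bound).
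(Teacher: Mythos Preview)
Your single-enumeration approach has a genuine gap: the candidate of minimum order with $|V(U_S)|$ maximum need not be $\zP$-tight. For an important $a'$--$v$ separator $S$, your construction places every component of $G-S$ other than the $a'$-component on the $W$-side, but a tight separation is free to absorb into the $U$-side those components that contain no element of $\zP$. Concretely, take $T=\{t\}$, let $G$ be the star with centre $a$ and leaves $t,b,c,d$, and let $\zP=\{\{b\}\}$. The important $a'$--$v$ separators of size at most $|T|=1$ are $\{t\}$ (for $v=a$) and $\{a\}$ (for $v\in\{b,c,d\}$); both give candidates of order $1$, and your rule outputs the one with $V(U_S)=\{t,a\}$. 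But the separation with $V(U'')=\{t,a,c,d\}$ and $V(W'')=\{a,b\}$ has order $1$, satisfies $U_S\subsetneq U''$, and still has $\{b\}\subseteq W''-U''$, so your output violates condition~(iii). Your correctness argument breaks exactly where you flag it: domination only guarantees containment of the $a'$-\emph{components}, whereas a tight $(X,Y)$ may have $V(X)$ strictly larger than the $a'$-component of $G'-N_G(C)$ because of components of $G-N_G(C)$ disconnected from $T$.

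The paper's proof is precisely the iterative push you hint at, made concrete. It first finds a minimum-order separation $(X_1,Y_1)$ satisfying (i)--(ii) via Lemma~\ref{lem:freetest}, and then repeats the following: if $Y_k-X_k$ has several components, replace $Y_k-X_k$ by a single component containing an element of $\zP$; otherwise, for each $x\in V(X_k\cap Y_k)$ pick a neighbour $x'$ in $Y_k-X_k$ and apply Lemma~\ref{lem:freetest} \emph{inside $Y_k$} to the set $V(X_k\cap Y_k)\cup\{x'\}$. If some such set is not $\zP$-free in $Y_k$, the returned certificate combines with $X_k$ to give $(X_{k+1},Y_{k+1})$ of the same order with $V(X_{k+1})\supsetneq V(X_k)$; if all are free, a short argument shows $(X_k,Y_k)$ is tight. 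The missing ingredient in your fallback is this relativisation to $Y_k$ with respect to the current separator --- rather than re-enumerating important $a'$--$v$ separators in $G$ --- which is exactly what lets each push succeed without losing the $\zP$-witness.
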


\begin{proof}
Let $G$, $T \subseteq V(G)$, and a $\zP$-oracle $f$ for a set of connected subgraphs $\zP$ in $G$ be given.  Observe that for any $X \subseteq V(G)$, the function $f$ is a $\zP'$-oracle for the subset $\zP' \subseteq \zP$ of elements of $\zP$ contained in the subgraph $G[X]$.  

We first use the algorithm given in Lemma \ref{lem:freetest} to check if $T$ is $\zP$-free.  If $T$ is not free, let $(X_1, Y_1)$ be the separation returned by the algorithm.  If $T$ is free, let $(X_1, Y_1)$ be the trivial separation $(T, G)$ with $T$ treated as the graph with vertex set $T$ and no edges.  Let $\zP_1 = \{P \in \zP: P \subseteq Y_1\}$.  Note that $X_1 \cap Y_1$ is $\zP_1$-free in $Y_1$ by the guarantee that $(X_1, Y_1)$ is a minimum order separation separating $T$ from an element of $\zP$.

We now define inductively define separations $(X_i, Y_i)$ with the following properties.
\begin{enumerate}
\item $(X_i, Y_i)$ is a separation of $G$ of order $|X_1 \cap Y_1|$ with $V(X_{i-1}) \subsetneq V(X_i)$.
\item There exists $P \in \zP$ with $P \subseteq Y_i$.
\end{enumerate}

Given $(X_i, Y_i)$, for $i = 1, \dots, k$, we now describe how to either construct $(X_{k+1}, Y_{k+1})$ or determine that $(X_k, Y_k)$ satisfies the desired properties for the output.  

First, consider the case when $Y_k - (X_k \cap Y_k)$ has multiple connected components. Let $C$ be a component of $Y_k - (X_k \cap Y_k)$ such that $C$ contains an element of $\zP$.  Then the separation $(X_{k+1}, Y_{k+1}) = (G - C, G[V(C) \cup V(X_i\cap Y_i)] - E(G[X_i \cap Y_i]))$ satisfies 1 and 2.

Assume now that $Y_k - (X_k \cap Y_k)$ has exactly one component.  For every $x \in V(X_k \cap Y_k)$, $x$ is adjacent to a vertex of $Y_k - X_k$ by the fact that no smaller order separation separates $T$ from an element of $\zP$.  Arbitrarily fix a neighbor $x'$ of $x$ in $Y_k - X_k$.  We apply the algorithm of Lemma \ref{lem:freetest} on the graph $Y_k$, subset of vertices $V(X_k \cap Y_k) \cup \{x'\}$, and the set $\zP'= \{P \in \zP: P \subseteq Y_k - (V(X_k \cap Y_k) \cup \{x'\})\}$ of connected subgraphs.  Assume we get a separation $(X', Y')$ certifying that $V(X_k \cap Y_k) \cup \{x'\}$ is not $\zP'$-free in $Y_k$.  It must hold that $(X', Y')$ is of order exactly $|X_k \cap Y_k|$ and that there is an element $P \in \zP$ such that $P \subseteq Y'-X'$.  Thus, $(X_{k+1}, Y_{k+1}) = (X_k \cup X', Y')$ satisfies 1 and 2 above.  We arbitrarily fix $(X_{k+1}, Y_{k+1})$ among all such possibilities and continue.  

To define $(X_{k+1}, Y_{k+1})$ given $(X_k, Y_k)$ takes at $O(4^{|T|} |T|^2 (n+m) n \cdot m)$ time and at most $n\cdot |T| \cdot 4^{|T|}$ calls to to the $\zP$-oracle.  As $|V(X_{k+1})| > |V(X_k)|$, in time $O(4^{|T|} |T|^2 (n+m) n^2 \cdot m)$ with at most $n^2 \cdot |T| \cdot 4^{|T|}$ calls to to the $\zP$-oracle, we find $(X_k, Y_k)$ such that $Y_k - (X_k \cap Y_k)$ has exactly one component and for all $x \in X_k \cap Y_k$, the set $V(X_k \cap Y_k) \cup \{x'\}$ is free in $Y_k$.  Without loss of generality, we may assume that the edges of $G[V(X_k \cap Y_k)]$ are contained in $X_k$

To complete the proof, it suffices to show that $(X_k, Y_k)$ is a tight separation.  If not, there exists a separation $(U, W)$ with $X_k \subsetneq U$ and an element $P \in \zP$ with $V(P) \subseteq V(W-U)$.  Note that the order of $(U, W)$ must be the same as $(X_k, Y_k)$ and $V(X_k) \subsetneq V(U)$.  We have that $U \cap W \neq X_k \cap Y_k$, lest $Y_k - (X_k \cap Y_k)$ have multiple components.  Thus there exists a vertex $x \in X_k \cap Y_k$ which is contained in $U - W$.  The separation $(U, W)$ contradicts the fact that $V(X_k \cap Y_k) \cup \{x'\}$ is free in $Y_k$.  This completes the proof of the lemma.
\end{proof}

\subsection{Proof of Theorem \ref{thm:mainEP_revised}}\label{sec:EPfinal}

Before proceeding with the proof, we will need several technical results.  

\begin{lemma}\label{lem:tight1}
Let $G$ be a graph, $T, T' \subseteq V(G)$ with $T' \subseteq T$.  Let $\zP$ a set of connected subgraphs in $G-T$.  Assume that $T'$ is $\zP$-free and let $t = |T'|$.
Let $(U', W')$ be a $\zP$-tight separation for $T'$ of order $t$, and let $(U_1, W_1)$ and $(U_2, W_2)$ be distinct $\zP$-tight separations for $T$, each of order $t+1$.  
Then one of the following holds:
\begin{enumerate}
\item $V(U'\cap W') \cup V(U_1 \cap W_1) \cup V(U_2 \cap W_2)$ is a hitting set for $\zP$.
\item There exists $P \in \zP$ such that $P$ is contained in one of the graphs $U'$, $U_1$, or $U_2$.
\item $V(U_1) \cap V(U_2) = V(U')$.  
\end{enumerate}
\end{lemma}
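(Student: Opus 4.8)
The plan is the standard submodular ``uncrossing'' argument for tight separations. Assume that both conclusions (1) and (2) fail; I will deduce conclusion (3). Since (1) fails, some $P^\ast\in\zP$ has $V(P^\ast)$ disjoint from all three separators $V(U'\cap W')$, $V(U_1\cap W_1)$, $V(U_2\cap W_2)$. Because every element of $\zP$ lives in $G-T$, $P^\ast$ is connected, and $V(P^\ast)$ avoids each of the three separators, $V(P^\ast)$ lies entirely on one side of each of the three separations; since (2) fails it cannot lie on a ``$U$''-side, so in fact $P^\ast\subseteq W'-U'$, $P^\ast\subseteq W_1-U_1$, and $P^\ast\subseteq W_2-U_2$. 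Throughout I will also use $T'\subseteq T\subseteq V(U_1)\cap V(U_2)$.

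First I would show $V(U')\subseteq V(U_i)$ for $i=1,2$. Fix $i$ and uncross $(U',W')$ with $(U_i,W_i)$, producing the separations $(U'\cap U_i,\,W'\cup W_i)$ and $(U'\cup U_i,\,W'\cap W_i)$, whose orders sum to at most $|U'\cap W'|+|U_i\cap W_i|=t+(t+1)$ by submodularity of the order function on separations. The first of these has $T'$ on its $U$-side and, since $V(P^\ast)$ avoids $V(U')$, has $P^\ast$ on its $W$-side; by $\zP$-freeness of $T'$ its order is therefore at least $|T'|=t$, so $(U'\cup U_i,\,W'\cap W_i)$ has order at most $t+1$. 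This separation has $T$ on its $U$-side and $P^\ast\subseteq (W'\cap W_i)-(U'\cup U_i)$ on its $W$-side, so if $V(U')\not\subseteq V(U_i)$ then $U_i\subsetneq U'\cup U_i$, which violates clause (iii) of ``$\zP$-tight for $T$'' for $(U_i,W_i)$ (whose threshold is $|U_i\cap W_i|=t+1$) --- a contradiction. Hence $V(U')\subseteq V(U_1)\cap V(U_2)$.

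For the reverse inclusion, consider $(U_1\cap U_2,\,W_1\cup W_2)$; exactly as above it has $T'$ on its $U$-side and $P^\ast$ on its $W$-side, so $\zP$-freeness of $T'$ gives that its order $j$ satisfies $j\ge t$. If $j=t$, then because $V(U')\subseteq V(U_1\cap U_2)$ by the previous step, clause (iii) of ``$\zP$-tight for $T'$'' for $(U',W')$ (threshold $t$) forbids $U'$ from being a proper subgraph of $U_1\cap U_2$; augmenting the $U$-side $U_1\cap U_2$ by the edges of $U'$ (which all lie inside $V(U_1\cap U_2)$ and do not change the order) turns this into the statement $V(U')=V(U_1\cap U_2)=V(U_1)\cap V(U_2)$, i.e.\ conclusion (3). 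If instead $j\ge t+1$, submodularity forces the complementary separation $(U_1\cup U_2,\,W_1\cap W_2)$ to have order at most $2(t+1)-j\le t+1$; it has $T$ on its $U$-side and $P^\ast\subseteq (W_1\cap W_2)-(U_1\cup U_2)$ on its $W$-side, so clause (iii) applied to $(U_1,W_1)$ and to $(U_2,W_2)$ forces $U_1\cup U_2=U_1$ and $U_1\cup U_2=U_2$, i.e.\ $V(U_1)=V(U_2)$. Writing $S_i=V(U_i\cap W_i)$ with $|S_i|=t+1$, the separation $(U_1\cup U_2,W_1\cap W_2)$ then has order $|S_1\cap S_2|\in\{t,t+1\}$: if it is $t$ I finish exactly as in the case $j=t$, with $U_1\cup U_2$ in place of $U_1\cap U_2$; if it is $t+1$ then $S_1=S_2$, whence $(U_1,W_1)$ and $(U_2,W_2)$ have the same vertex sets and the same separator and differ only in how edges inside $S_1$ are allocated, but any such difference again violates clause (iii) for one of them, contradicting that the two separations are distinct.

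The hard part is the case $j\ge t+1$: the submodular inequalities alone only pin down $V(U_1)=V(U_2)$, and one has to squeeze out the remaining slack by comparing the separators $S_1,S_2$ and invoking the tightness clause once more; this is also where the bookkeeping about edges and isolated vertices inside the separators is most delicate (it is what the ``augmenting by the edges of $U'$'' step above is quietly dealing with). The cleanest fix is to normalize separations once and for all --- assign every edge among separator vertices, and every separator vertex itself, to a fixed canonical side --- so that a separation is determined by its pair of vertex sets; with that convention the degenerate sub-case collapses immediately (same vertex sets and separator forces the two separations to coincide, contradicting distinctness), and the whole argument reduces to submodularity plus three applications of the tightness axiom.
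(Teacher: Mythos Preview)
Your proof is correct and follows the same submodular uncrossing strategy as the paper. The paper's version is more compressed: after establishing $U'\subseteq U_i$ for $i=1,2$, it asserts directly that the order of $(U_1\cup U_2,W_1\cap W_2)$ is at least $t+2$ (otherwise clause~(iii) for one of the $(U_i,W_i)$ is violated, using distinctness), whence the order of $(U_1\cap U_2,W_1\cup W_2)$ is at most $t$ and clause~(iii) for $(U',W')$ finishes. Your case split on $j$ is essentially an unpacking of that one sentence; in particular your sub-case $j\ge t+1$ is exactly the degenerate situation $U_1=U_2$ with $W_1\neq W_2$ that the paper's phrasing glosses over, and your normalization remark is a clean way to make it collapse. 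A couple of minor simplifications are available: once you apply clause~(iii) to $(U_i,W_i)$ in the first step you actually obtain $U'\subseteq U_i$ as \emph{subgraphs} (not only on vertex sets), so the ``augmenting by the edges of $U'$'' manoeuvre is unnecessary; and in the sub-case $U_1=U_2$ with $S_1\neq S_2$, the separation $(U_1,W_1\cap W_2)$ already has order $|S_1\cap S_2|\le t$, so a single application of clause~(iii) for $(U',W')$ finishes without further bookkeeping.
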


\begin{proof}
We may assume there exists $P \in \zP$ which is disjoint from the set $V(U' \cap W') \cup V(U_1 \cap W_1) \cup V(U_2 \cap W_2)$.  Lest we satisfy 2, we may assume as well that $P \subseteq W_1\cap W_2 \cap W'$.  Note by construction that $P$ is disjoint from $U_1 \cup U_2 \cup U'$.

We first show that $V(U') \subseteq V(U_1 \cap U_2)$.  Fix $i \in \{1, 2\}$.    We show that $U' \subseteq U_i$.  Consider the two separations $(U' \cap U_i, W' \cup W_i)$ and $(U' \cup U_i, W' \cap W_i)$.
The sum of the orders of the separations $(U' \cap U_i, W' \cup W_i)$ and $(U' \cup U_i, W' \cap W_i)$ is equal to the sum of the orders of the two separations $(U', W')$ and $(U_i, W_i)$, namely $2t+1$.  

The separation $(U' \cap U_i, W' \cup W_i)$ has $T' \subseteq V(U' \cap U_i)$.  Moreover, as the path $P$ is disjoint from  $U_1 \cup U_2 \cup U'$, it is contained in $(W'\cup W_i) - (U' \cap U_i)$.  We conclude from the fact that $T'$ is $\zP$-free that the separation has order at least $t$.  It follows that $(U' \cup U_i, W' \cap W_i)$ is a separation of order at most $t+1$ with $P \subseteq (W' \cap W_i) - (U' \cup U_i)$.  
It follows that $U' \cup U_i = U_i$ by property $iii$~in the definition of tight for $(U_i, W_i)$ and thus $U' \subseteq U_i$ as desired.

We conclude that $V(U') \cup {T} \subseteq V(U_1) \cap V(U_2)$.  The separation $(U_1 \cup U_2, W_1 \cap W_2)$ must be of order at least $t+2$, lest we violate $iii$~for one of the separations $(U_1, W_1)$ or $(U_2, W_2)$.  Note that here we are using the fact that the separations $(U_1, W_1)$ and $(U_2, W_2)$ are distinct.  It follows that $(U_1 \cap U_2, W_1 \cup W_2)$ is a separation of order at most $t$.  Consequently, $(U_1 \cap U_2 , W_1 \cup W_2)$ is a separation of order at most $t$ with $V(U') \subseteq V(U_1 \cap U_2)$.  By $iii$ in the definition of tight for $(U', W')$, we have that $V(U') = V(U_1 \cap U_2)$, completing the proof.
\end{proof}

Let $G$ be a graph.  Let $T \subseteq V(G)$ be an independent set where $\deg_G(v) = 1$ for all $v \in T$, and let $H$ be a graph with $V(H) =T$. We define the set of \emph{truncated valid paths} to be the set $$\zP = \{P -T: \text{$P$ is a valid path.}\}$$  Note that by our assumptions on $T$, every element of $\zP$ is a path.  Note as well that given $G$, $T$, and $H$, for any set $X \subseteq V(G)$, we can test whether $G[X]$ contains an element of $\zP$ in time $O(|V(G)| + |E(G)| + |E(H)|)$.  

\begin{lemma}[Growing a $\zP$-free \allowbreak set with a perfect matching]\label{lem:alg1}
Let $G$ be a graph.  Let $T \subseteq V(G)$ be an independent set where $\deg_G(v) = 1$ for all $v \in T$, and let $H$ be a graph with $V(H) =T$.  Let $\zP$ be the set of truncated valid paths.  Let $T'\subseteq T$ be a subset such that 
\begin{itemize}
\item[$i.$] $H[T']$ contains a perfect matching and 
\item[$ii.$] $T'$ is $\zP$-free.
\end{itemize}
There exists an algorithm which takes as input $G$, $T$, $H$, and $T'$ and produces in output one of the following:
\begin{enumerate}
\item a subset $Z$ of at most $|T'|(|T'|+3)$ vertices intersecting every valid path in $G$;
\item a separation $(X, Y)$ of $G$ of order at most $|T'| + 2$ such that both $X$ and $Y$ contain an element of $\zP$;
\item a subset $\bar{T}$ such that $T' \subseteq \bar{T} \subseteq T$, $|\bar{T}| = |T'| + 2$, $H[\bar{T}]$ contains a perfect matching, and $\bar{T}$ is $\zP$-free.
\end{enumerate}
The algorithm runs in time $4^{|T'|}|V(G)|^{O(1)}$.
\end{lemma}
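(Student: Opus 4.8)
Write $t := |T'|$ and let $\zP$ be the set of truncated valid paths; since for any $X\subseteq V(G)$ we can test in linear time whether $G[X]$ contains a member of $\zP$, an efficient $\zP$-oracle $f$ is available, which is all that Lemmas~\ref{lem:freetest} and~\ref{lem:tightfind} require. Let $N:=N_G(T')$, a set of $|N|\le t$ non-terminals (each terminal has $G$-degree $1$ and $T$ is independent). If $G-N$ contains no valid path, then every valid path passes through $N$ --- a valid path with an endpoint $v\in T'$ has $v$'s unique $G$-neighbour, a vertex of $N$, as its second vertex --- so we return $Z:=N$, of size $\le t\le t(t+3)$ (output~1). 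Otherwise pick a valid path $Q$ in $G-N$; both endpoints $a,b$ of $Q$ lie outside $T'$ (else $Q$ would use a vertex of $N$), $a\ne b$, and $ab\in E(H)$, so $\bar T:=T'\cup\{a,b\}$ has $|\bar T|=t+2$ and $H[\bar T]$ contains a perfect matching (that of $H[T']$ together with $ab$).

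Run the algorithm of Lemma~\ref{lem:tightfind} on $(G,T',f)$ to obtain a $\zP$-tight separation $(U',W')$ for $T'$ of minimum order; since $T'$ is $\zP$-free, no separation of order $<t$ separates $T'$ from a member of $\zP$, so $(U',W')$ has order exactly $t$, and we fix $P_0\in\zP$ with $P_0\subseteq W'-U'$ (in particular $V(P_0)\cap V(U')=\emptyset$). Now apply Lemma~\ref{lem:freetest} to $\bar T$: if $\bar T$ is $\zP$-free we return $\bar T$ (output~3); otherwise it returns a minimum-order certificate $(X,Y)$ --- a separation of order $\le t+1$ with $\bar T\subseteq V(X)$ and some $P_1\in\zP$ with $P_1\subseteq Y-X$.

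It remains to convert $(X,Y)$ and $(U',W')$ into output~1 or output~2. By submodularity of the order function, the uncrossed separations $(U'\cap X,\,W'\cup Y)$ and $(U'\cup X,\,W'\cap Y)$ have orders summing to at most $2t+1$; the first keeps $T'$ in its left side and $P_0$ in its right side, so by $\zP$-freeness of $T'$ it has order $\ge t$, whence $(U'\cup X,\,W'\cap Y)$ has order $\le t+1\le t+2$, still with $\bar T$ in its left side. To certify members of $\zP$ on both sides of this (or a companion) separation we invoke Lemma~\ref{lem:tight1}, with its ``$T$'' taken to be the intermediate $(t{+}1)$-set $T'\cup\{a\}$ and with $(U',W')$ together with two suitable order-$(t{+}1)$ $\zP$-tight separations of $T'\cup\{a\}$ playing the roles of $(U',W')$, $(U_1,W_1)$, $(U_2,W_2)$. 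Its conclusion~(1) supplies a hitting set for $\zP$ assembled from the boundaries of the $O(t)$ separations that occur, of total size at most $t(t+3)$ --- output~1; conclusion~(2) supplies a member of $\zP$ lying entirely inside one of those separations, which together with $P_1$ on the opposite side realizes the order-$\le t+2$ separation of output~2; and conclusion~(3), $V(U_1)\cap V(U_2)=V(U')$, pins down the two sides of $(U'\cup X,\,W'\cap Y)$ (or its corner partner) so that it carries members of $\zP$ on both sides, again output~2. The running time is dominated by the call to Lemma~\ref{lem:tightfind} and $O(1)$ further calls to Lemmas~\ref{lem:freetest} and~\ref{lem:tightfind} on sets of size $\le t+2$, i.e.\ $4^{t+2}|V(G)|^{O(1)}=4^{|T'|}|V(G)|^{O(1)}$, plus polynomial bookkeeping.

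The main obstacle is this final step: arranging the two or three separations so that, in every configuration, exactly one of the three prescribed outputs is produced, while keeping the hitting set of output~1 within the sharp bound $|T'|(|T'|+3)$. The delicate part is tracking, for each separation that arises, on which side $P_0$, $P_1$ and the new terminals $a,b$ fall, and choosing the intermediate $(t{+}1)$-set and its tight separations so that the hypotheses of Lemma~\ref{lem:tight1} are satisfied; Lemma~\ref{lem:tight1} is precisely the device that packages this bookkeeping.
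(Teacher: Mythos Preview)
Your proposal has the right shape---use a $\zP$-tight separation for $T'$, test $\bar T$ for $\zP$-freeness, and invoke Lemma~\ref{lem:tight1}---but the final step, which you yourself flag as ``the main obstacle,'' is not actually carried out, and the order in which you set things up makes it unclear how it \emph{could} be carried out.

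The key difference from the paper's argument is \emph{when} you choose $\bar T$. You pick $\bar T=T'\cup\{a,b\}$ from an arbitrary valid path in $G-N_G(T')$ \emph{before} computing any tight separations, and then hope that Lemma~\ref{lem:tight1} will sort out the cases. But you never construct the two distinct order-$(t{+}1)$ tight separations $(U_1,W_1),(U_2,W_2)$ for $T'\cup\{a\}$ that the lemma needs, and your treatment of its outcome~3 (``pins down the two sides \dots\ so that it carries members of $\zP$ on both sides'') is an assertion, not an argument. There is no reason the certificate $(X,Y)$ should be tight for $T'\cup\{a\}$, and no mechanism in your setup forces outcome~3 to be impossible or to yield output~2.

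The paper reverses the order. After finding the tight separation $(X,Y)$ of order $t$ for $T'$, it computes, for \emph{each} boundary vertex $x_i\in V(X\cap Y)$, a neighbour $x_i'\in Y-X$ and a tight separation $(U_i,W_i)$ of order $t{+}1$ for $V(X\cap Y)\cup\{x_i'\}$ inside $Y$. The union $Z=\bigcup_i V(U_i\cap W_i)\cup V(X\cap Y)\cup T'$ has size at most $t(t{+}1)+2t=t(t{+}3)$; if it hits every valid path we are in output~1. Otherwise the valid path $\bar P$ \emph{avoiding $Z$} supplies the two new terminals. Now if $\bar T$ is not $\zP$-free, its tight separation $(C,D)$ has order $t{+}1$, and one shows $V(X)\subseteq V(C)$, so some $x_i\in V(C)\setminus V(D)$ (since $\bar P$ uses two vertices of $C\cap D$ and avoids all $x_j$). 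Applying Lemma~\ref{lem:tight1} to $(X,Y)$, $(U_i,W_i)$, and $(C,D)$ then gives outcome~3, $V(C)\cap V(U_i)=V(X)$---a \emph{contradiction}, since $x_i'\in V(C)\cap V(U_i)$ but $x_i'\notin V(X)$. Thus $\bar T$ is forced to be $\zP$-free. The point is that choosing $\bar P$ to avoid $Z$ is precisely what makes outcome~3 of Lemma~\ref{lem:tight1} contradictory rather than merely inconclusive; your early, unconstrained choice of $\bar T$ loses this leverage.
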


\begin{proof}
Let $G$, $T$, $H$, and $T'$ be given.  Let $|V(G)| = n$, $|E(G)| = m$, $|E(H)| = m'$, and $|T'| = t$.  By our assumptions on $T$, $P-T$ is a (non-empty) path for every valid path $P$.  Thus, any set of vertices intersecting every element of $\zP$ also intersects every valid path.

We first find a separation $(X, Y)$ which is tight for $T'$ of minimal order.  By assumption, $(X, Y)$ has order $t$.  Lemma \ref{lem:tightfind} allows us to do this in time $4^t n^{O(1)}$.  Note, we are using here that we can test for elements of $\zP$ in time $O(n +m+m')$.  Without loss of generality, we may assume that $E(G[V(X \cap Y)])$ is contained in $E(X)$.

We can determine in time $O(n+m + m')$ if the separation $(X, Y)$ of $G$ contains an element of $\zP$ in $X$ as well as $Y$.  If so, we return the separation $(X, Y)$ satisfying 2.   
Thus, we may assume that all elements of $\zP$ intersect a vertex of $Y-X$ and at least one element of $\zP$ is contained in $Y - X$.

Let the vertices of $X \cap Y$ be $\{x_1, x_2, \dots, x_t\}$.  Each vertex $x_i$ has a neighbor in $Y - X$, lest $(X, Y - x_i)$ form a separation of order $t-1$ violating our assumption that $T'$ is free.  Arbitrarily fix $x_i'$ to be a neighbor of $x_i$ in $Y-X$ for all $i = 1, \dots, t$.  Let $\zP_i'$ be the set of elements of $\zP$ contained in $Y-(V(X) \cup x_i')$.  For each $i = 1, \dots, t$, we find a $\zP_i'$-tight separation $(U_i, W_i)$ of minimal order in $Y$ for $V(X\cap Y) \cup \{x_i'\}$ using the algorithm of Lemma \ref{lem:tightfind}.  We can do this in time $4^t n^{O(1)}$.  Note that $(U_i, W_i)$ has order $t+1$ for all $i$.

Let $Z : = \bigcup_1^t (U_i \cap W_i) \cup (X \cap Y) \cup T'$.  Then $|Z| \le t(t+1) + 2t = t(t+3)$.  If $Z$ intersects every valid path, we return $Z$ to satisfy 1.  We can check this in time $O(n+m+m')$, and therefore proceed assuming that there exists a valid path $\bar{P}$ which is disjoint from $Z$.  Fix such a path $\bar{P}$ for the remainder of the proof; let $P = \bar{P} - T$ and let $\bar{T} = T' \cup \{V(\bar{P}) \cap T\}$.  Given that the endpoints of $\bar{P}$ are $H$-adjacent, it follows that $H[\bar{T}]$ contains a perfect matching.  We test in time $4^t n^{O(1)}$ if $\bar{T}$ is $\zP$-free in $G$.  If it is, we return $\bar{T}$ satisfying 3.  

Note that $\bar{P}$ is disjoint from $X \cap Y$.  As no element of $\zP$ is contained in $X$, it follows that $\bar{P}$ is contained $Y - X$.  Specifically, the endpoints of $\bar{P}$, the vertices $\bar{T} \setminus T'$, are contained in $V(Y) \setminus V(X)$.  

Assume, to reach a contradiction, that $\bar{T}$ is not $\zP$-free.   In time $4^t n^{O(1)}$, we find a tight separation $(C, D)$ which is tight for $\bar{T}$ of minimum order.  As $T' \subseteq \bar{T}$, the separation $(C, D)$ is tight for $T'$ as well.  It follows that $(C, D)$ has order either $t$ or $t+1$.  We check in time $O(m+n +m')$ if $C$ contains an element of $\zP$.  If it does, we return $(C, D)$ as a separation satisfying 2.  Thus, we may assume that no element of $\zP$ is contained in $C$.

We check in time $O(n+m + m')$ whether $(C \cap D) \cup (X \cap Y)$ intersects every element of $\zP$.  If so, we return a set satisfying 1.  Thus, we may assume that there exists an element $P' \in \zP$ which is contained in $(D - C) \cap (Y-X)$.  Consider the separations $(C \cap X, D \cup Y)$ and $(C \cup X, D \cap Y)$.  The first is a separation separating $T'$ from an element of $\zP$; thus it must have order at least $t = |X \cap Y|$.  We conclude that the order of $(C \cup X, D \cap Y)$ must be at most the order of $(C, D)$.  If $V(X) \nsubseteq V(C)$, we get a contradiction to the tightness of $(C, D)$.  Thus, $V(X) \subseteq V(C)$.  It follows that $(C, D)$ is a separation of order $t+1$ by the tightness of $(X, Y)$. 

As the path $P$ is not contained in $C$, it follows that $\bar{P}$ must contain at least two vertices in $C \cap D$.  As $\bar{P}$ is disjoint from $\{x_1, \dots, x_t\}$ and $\{x_1, \dots, x_t\} \subseteq V(X) \subseteq V(C)$, there exists an index $i$ such that $x_i \in V(C) \setminus V(D)$.  Thus, $x_i'$ is also an element of $V(C)$.  We apply Lemma \ref{lem:tight1} to the separations $(X, Y)$, $(U_i, W_i)$ and $(C, D)$ and conclude that $V(C \cap U_i) = X$, however we have just seen that $x_i' \in V(C \cap U_i)$, a contradiction.

This contradiction shows that $\bar{T}$ is $\zP$-free, completing the proof of correctness for the algorithm.  The total runtime is $4^{t} n^{O(1)}$, as desired.
\end{proof}

We are now ready to proceed with the proof of Theorem \ref{thm:mainEP_revised}.

\restatetheorem{\restatemainEP*}
\begin{proof}
Let $G$, $T$, $H$, $k$, and $r$ be given.  Let $n = |V(G)|$, $m = |E(G)|$, and $m' = |E(H)|$.   Let $\zP$ be the set of truncated valid paths.  

Beginning with $T' = \emptyset$, we reiterate the algorithm from Lemma \ref{lem:alg1} up to $ 5^{10(k+r)}$ times to find one of the following:
\begin{enumerate}
\item a subset $Z$ of at most $4 \cdot  5^{20(k+r)}$ vertices intersecting every path in $G$ whose endpoints are $H$-adjacent;
\item a separation $(X, Y)$ of $G$ of order at most $2 \cdot  5^{10(k+r)}$ such that both $X$ and $Y$ contain an element of $\zP$;
\item a subset $T'$ such that $T' \subseteq T$, $|T'| = 2 \cdot 5^{10(k+r)}$, $H[T']$ contains a perfect matching, and $T'$ is $\zP$-free.
\end{enumerate}
At each iteration of the algorithm from Lemma \ref{lem:alg1}, note that $|T'| \le 2 \cdot 5^{10(k+r)} -2$, so that if we ever find the hitting set $Z$ in outcome 1 of Lemma \ref{lem:alg1}, $|Z| \le (2 \cdot 5^{10(k+r)} -2)(2 \cdot 5^{10(k+r)} +1)\le  4 \cdot  5^{20(k+r)}$, as desired.  

Given the runtime of the algorithm from Lemma \ref{lem:alg1}, we find one of the outcomes 1-3 above in time $4^{2\cdot 5^{10(k+r)}} n^{O(1)}\cdot 5^{10(k+r)}$.
If we find the hitting set in outcome 1, we return $Z$ and the algorithm terminates. Thus, we may assume we find either outcome 2 or 3.

Assume, as a case, we find a separation $(X, Y)$ of $G$ satisfying outcome 2.  We find valid paths $P_X$ and $P_Y$ such that $P_X - T$ (resp. $P_Y - T$) is contained in $X$ (resp. $Y$).  This can be done in time $O(n+m+m')$.  As the endpoints of $P_X$ and $P_Y$ have degree one, we may assume that $P_X \subseteq X$ and $P_Y \subseteq Y$.  Let $T_X = V(X - Y) \cap T$ and $T_Y = V(Y - X) \cap T$.  We find the induced subgraphs $H[T_X]$ and $H[T_Y]$ in time $O(m')$.  We recursively run the algorithm on $X-Y$, $H_X$, $T_X$, $k-1$ and on $Y-X$, $H_Y$, $T_Y$, $k-1$.   The runtime of the recursive calls is $4^{3 \cdot 5^{10((k-1)+r)}}[|V(X - Y)|^{O(1)} + |V(Y-X)|^{O(1)}]$. If we find $k-1$ valid paths in $X-Y$ or $Y-X$, we return these paths along with either the path $P_X$ or $P_Y$ and the algorithm terminates.  If we find sets $Z_X$ and $Z_Y$ hitting all the valid paths in the respective subgraphs, we return the set $Z_X \cup Z_Y \cup V(X \cap Y)$. Note that 
\begin{eqnarray*}
|Z_X \cup Z_Y \cup V(X \cap Y)| & \le &  2 \left( 4 \cdot 5^{20(k-1+r)}\right)+ 2 \cdot 5^{10(k+r)} \\ &\le & 4 \cdot 5^{20(k+r)}
\end{eqnarray*}
as desired.  Finally, if we find a subset of $T_X$ or $T_Y$ inducing a matching of size $r$, we return the subset to satisfy outcome 3.  

We conclude that we find the set $T'$ satisfying outcome 3 in the repeated iterations of the algorithm of Lemma \ref{lem:alg1}.  By Lemma \ref{lem:ramsey}, in polynomial time we can either find a subset of $T'$ inducing a matching of size $t$ in $H$, or find subsets $B_1, B_2 \subseteq T'$, $B_1 \cap B_2 = \emptyset$, and $|B_1| = |B_2| = k$ such that every vertex in $B_1$ is $H$-adjacent to every vertex in $B_2$.  If we find an induced matching of size $t$ in $H$, we return that subgraph; thus we may assume we have subsets $B_1$ and $B_2$ as above. 

We attempt to find $k$ disjoint paths linking $B_1$ and $B_2$.  If such paths exist, then we have found $k$ disjoint valid paths as desired.  Thus, we may assume there exists a separation of order at most $k-1$ separating the sets $B_1$ and $B_2$.  Assume we include all the vertices of $T' \setminus (B_1 \cup B_2)$ in the separator, and we conclude that there exists a separation $(X', Y')$ of order at most $|T' \setminus (B_1 \cup B_2)| + k-1 = |T'| - k - 1$ with $B_1 \subseteq V(X')$ and $B_2 \subseteq V(Y')$ and $T' \setminus (B_1 \cup B_2) \subseteq X' \cap Y'$.  Moreover, we can find the separation $(X', Y')$ in polynomial time.  We check in time $O(n+m+m')$ whether $X' - (B_1 \cup V(X' \cap Y'))$ or $Y' - (B_2 \cup V(X' \cap Y'))$ contain an element of $\zP$.  If not, we return $V(X' \cap Y') \cup B_1 \cup B_2$ as a set of at most $|T'| + k-1$ vertices intersecting all valid paths.  Otherwise, without loss of generality, assume $Y' - (B_2 \cup V(X' \cap Y'))$ contains an element of $\zP$.  For the moment, let $B_2$ also denote the subgraph with vertex set $B_2$ and no edges.  The separation $(X' \cup B_2, Y')$ is a separation of order at most $|T'|-1$ with $T' \subseteq V( X'\cup B_2)$ separating $T'$ from an element of $\zP$, contrary to our assumptions on $T'$.   
\end{proof}

\fi


\else
\input{approx_short.tex}
\fi

\section{Excluding induced matchings and skew bicliques: the exact FPT algorithm}
\label{sec:fptalg}

\renewcommand{\relemma}{lemma}

The goal of this section is to prove the algorithmic part of
Theorem~\ref{thm:main}: an FPT algorithm for \maxhhdisjoint if $\cH$
does not contain arbitrarily large induced matchings and skew
bicliques.  We state the algorithm in a robust way: even if the demand
graph $H$ contains large induced matchings and skew bicliques, the
algorithm works, but either returns a correct answer or returns a
large induced matching or a skew biclique of the demand graph $H$.

\begin{theorem}[Main algorithm]\label{thm:mainalg}
There is an algorithm that, given an instance $(G,T,H,k)$ of \maxdisjoint and an integer $r$, in time $f(k,r)\cdot n^{O(1)}$ either
\begin{itemize}
\item finds $k$ pairwise vertex-disjoint valid paths,
\item correctly states that there is no set of $k$ pairwise vertex-disjoint valid paths,
\item returns an induced matching of size $r$ in $H$, or
\item returns an induced skew biclique of size $r+r$ in $H$.
\end{itemize}
\end{theorem}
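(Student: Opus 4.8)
The plan is to follow the roadmap sketched in the introduction, in four stages: bootstrap with the approximation of Theorem~\ref{thm:mainEP}, then reduce the number of terminals inside each component of $G-Z$, then reduce the number of relevant components, and finally call the algorithm of Theorem~\ref{thm:kdisjointpath} on a bounded number of instances.

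\emph{Bootstrapping and the structure of $G-Z$.} After attaching to each $t\in T$ a new pendant vertex that takes over the role of $t$ (this changes neither the answer nor the induced subgraphs of $H$), I invoke the robust algorithm of Theorem~\ref{thm:mainEP_revised} with the given $k$ and $r$. It returns either $k$ pairwise disjoint valid paths (we output them), an induced matching $M_r$ in $H$ (we output it), or a set $Z$ with $|Z|\le z:=4\cdot 5^{20(k+r)}$ meeting every valid path. In the last case let $C_1,\dots,C_q$ be the components of $G-Z$ and put $T_i=T\cap V(C_i)$, $B_i=N_G(C_i)\cap Z$, so $|B_i|\le z$. Two observations drive the rest: every valid path meets $Z$, so a path joining two $H$-adjacent terminals of the same $C_i$ inside $C_i$ would avoid $Z$; hence each $T_i$ is an independent set of $H$. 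Moreover, any set of $k$ disjoint valid paths has at most $2k$ terminal endpoints and meets $Z$ in at most $z$ vertices, so once we bound the total number of terminals by some $f(k,r)$ we can finish by enumerating all $f(k,r)^{O(k)}$ ways to choose $2k$ terminals and pair them into $k$ edges of $H$, running the algorithm of Theorem~\ref{thm:kdisjointpath} on each; this is complete because the reductions below will be solution-preserving.

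\emph{Reducing the terminals inside one component.} Fix $C_i$. Using Menger's theorem I compute a minimum $T_i$--$B_i$ vertex cut $K_i$ in $C_i$, of size $\lambda_i\le z$, and partition $T_i$ into \emph{routing classes} according to the linkage pattern of a terminal to $K_i$ inside $C_i$; there are at most a function of $z$ such classes, and within a class any terminal can substitute for any other in the routing part of a solution since $\lambda_i\le z$ absorbs the interference of the at most $2k$ solution paths. It remains to cap the number of terminals retained in one class $\mathcal C$. The only reason not to delete $t\in\mathcal C$ is that some solution uses $t$ with an outside partner $t'\in T\setminus V(C_i)$, $tt'\in E(H)$, and no retained terminal of $\mathcal C$ is $H$-adjacent to $t'$; as a solution touches $C_i$ with at most $2k$ paths, what must be preserved is, for every $W\subseteq T\setminus V(C_i)$ with $|W|\le 2k$, the family of traces $\{N_H(t)\cap W:t\in\mathcal C\}$ together with enough multiplicities. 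I keep a bounded-size \emph{representative set} $\mathcal C^\ast\subseteq\mathcal C$ with this property, in the style of the data structure of Monien~\cite{MR87a:05097}. The size bound comes from a Ramsey-type analysis (repeated use of Lemma~\ref{lem:multiramsey}, as in Lemma~\ref{lem:ramsey}) of the bipartite graph $F$ induced by $H$ between $\mathcal C$ and $T\setminus V(C_i)$: either $F$ contains, with no edge between the two prescribed sides beyond those shown, a large copy of $M_{r'}$ or of $S_{r'}$, or else a bounded-size $\mathcal C^\ast$ already works. In the first two cases the $\mathcal C$-side is independent in $H$; one further application of Ramsey to $H$ restricted to the other side yields either $r$ pairwise non-adjacent vertices there — producing an induced $M_r$ or an induced $S_r$ in $H$, which we output — or an $r$-clique of $H$, a case handled by a separate routing argument that nonetheless lets us shrink $\mathcal C$.

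\emph{Reducing the number of components, and finishing.} After the previous stage every $C_i$ carries at most $g(k,r)$ terminals. Since a solution meets $G-Z$ in at most $2k$ terminal endpoints and makes at most $z$ crossings of $Z$, it ``uses'' only a bounded number of components; combining this with the fact that two components behaving identically towards $Z$ and towards the (now bounded per component) demand structure are interchangeable, an elementary argument deletes every terminal outside some set of $h(k,r)$ components. The surviving terminals, together with the at most $z$ terminals of $Z$, number at most $f(k,r)$, and we finish by the enumeration described above; if no run of the algorithm of Theorem~\ref{thm:kdisjointpath} succeeds, we correctly report that no $k$ disjoint valid paths exist. The total running time is $f(k,r)\cdot n^{O(1)}$ as all of the above is polynomial given a bounded amount of branching and the use of Theorems~\ref{thm:mainEP_revised} and~\ref{thm:kdisjointpath}.

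\emph{Main obstacle.} The heart of the proof is the per-component reduction: proving that $\mathcal C^\ast$ can be taken of size bounded in $k$ and $r$ (equivalently, that the ``otherwise'' case of the Ramsey dichotomy above is genuine), and, in the failure case, extracting a truly \emph{induced} matching or skew biclique of $H$ — so that both of its sides are independent in $H$, which is exactly what forces the extra Ramsey step on the outside side — while disposing of the clique case, in which nothing may be returned, by a separate argument. The bootstrap, the structural observations about $G-Z$, the component-count reduction, and the final applications of Theorem~\ref{thm:kdisjointpath} are routine once these ingredients are in place.
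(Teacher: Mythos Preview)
Your four–stage plan (bootstrap via Theorem~\ref{thm:mainEP_revised}, reduce terminals per component, reduce the number of components, enumerate and call Theorem~\ref{thm:kdisjointpath}) matches the paper, and you correctly locate the hard part and the role of the Ramsey dichotomy and the clique case. The gap is in how you carry out the per–component reduction.

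Your mechanism classifies \emph{single terminals} by a ``routing class'' relative to a Menger cut $K_i$ and then keeps a per–class representative set based on one–terminal traces $N_H(t)\cap W$. But knowing that $t$ and $t^*$ individually have the same reachability pattern to $K_i$ does not let you swap $t$ for $t^*$ in a solution: a solution uses up to $|K_i|\le z$ terminals of $T_i$ simultaneously, and after the swap you need the \emph{multi}–path routing (the Disjoint Paths problem inside $C_i$) to still succeed. A simple picture: $K_i=\{v_1,v_2\}$, every $t_j$ reaches both $v_1$ and $v_2$ (so all in one ``routing class''), yet only certain pairs $(t_j,t_{j'})$ can be routed to $(v_1,v_2)$ disjointly. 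Your sentence ``$\lambda_i\le z$ absorbs the interference'' does not address this; there is no bound on equivalence classes of single terminals that makes the swap sound. The paper explicitly explains why this level of granularity fails (the discussion around ``the number of equivalence classes cannot be bounded'') and fixes it by working at the level of whole \emph{partial solutions}: classify partial solutions by a bounded set of types $(S_0,J,M)$ recording what happens at the separator, and for each type keep a representative set of \emph{inner vectors} $(a_1,\dots,a_d)\in V(H)^d$ of terminals. Realizability of a given type with a given inner vector is tested by one call to the algorithm of Theorem~\ref{thm:kdisjointpath}; the Ramsey argument (Lemma~\ref{lem:noskew}) is applied along a long root–to–leaf branch of a search tree of compatible vector pairs (Lemma~\ref{lem:vectorrepbound}), not to a single bipartite graph of one–terminal adjacencies. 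The clique case you allude to is handled concretely by Gallai's theorem (Lemma~\ref{lem:cliquereduce}), which either produces $k$ disjoint valid paths or identifies an irrelevant terminal; it is not merely ``a separate routing argument that lets us shrink $\mathcal C$.''

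In short: replace ``routing classes of terminals $+$ one–terminal trace representatives'' by ``types of partial solutions at the separator $+$ representative sets of $d$–tuples checked via Theorem~\ref{thm:kdisjointpath}''; this is where the proof actually needs the \disjointpath algorithm a second time, and it is what makes the swap provably correct.
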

We do not estimate the function $f(k,r)$ of Theorem~\ref{thm:mainalg}
here, but as the algorithm eventually depends on the \disjointpath
algorithm (Theorem~\ref{thm:kdisjointpath}), it is a tower of some
number of exponentials.

Similarly to Section~\ref{sec:approx}, by attaching a new degree-1
vertex to every terminal and moving the endpoints of the demand edges
to these vertices, we may assume that $T$ is an independent set of
degree-1 vertices.  As an opening step, we invoke the algorithm of \iffull\ Theorem~\ref{thm:mainEP_revised} from Section~\ref{sec:approx}.\else
Theorem~\ref{thm:mainEP}.\fi\
 If it
returns a solution with $k$ pairwise disjoint valid paths, then we
are done. Otherwise, the algorithm returns a hitting set $Z$ of size
$2^{O(k+r)}$ that covers every valid path, that is, for any
connected component $C$ of $G-Z$, no two vertices of $V(C)$ are
adjacent in $H$.  As every terminal is degree-1, we may assume that
$Z$ is disjoint from $T$: if some terminal $t$ is in $Z$, then we may
replace it with its unique neighbor.  In this section, we assume that
such a set $Z$ is available and use the structural information given
by $Z$ to solve the problem.

If the number of terminals can be bounded by a function of $k$, then
we can enumerate every sequence $(s_1,t_1)$, $\dots$, $(s_k,t_k)$ of $k$ pairs of terminals such that $s_i$ and $t_i$ are adjacent in $H$ and invoke
the algorithm of Theorem~\ref{thm:kdisjointpath} for each such sequence. Therefore, our goal is to
reduce number of terminals to a constant depending only on $k$. The
main tool for this reduction is the notion of irrelevant terminals.

Given an instance $(G,T,H,k)$ of \maxdisjoint, we say that a terminal
$t\in T$ is {\em irrelevant} if $(G,T,H,k)$ is a yes-instance if and
only $(G,T\setminus \{t\},H,k)$. Note that, formally, if $(G,T,H,k)$
is a no-instance, then every terminal is irrelevant. In a
yes-instance, if there are more than $2k$ terminals, then some
terminal is surely irrelevant.  However, the main question is whether
we can identify provably irrelevant terminals in a reasonable running
time. The main technical result of the section is showing that if we
have a bounded-size hitting set $Z$ of the valid paths and there are
many terminals, then we can identify an irrelevant terminal in FPT
time. Therefore, we can remove that vertex from the set of terminals
and repeat the process until the number of terminals becomes bounded
by a constant depending only on $k$. We formulate the following result
in such a way that the algorithm either finds an irrelevant terminal,
a solution with $k$ disjoint paths, or one of the forbidden induced
subgraphs in $H$ (induced matchings or skew bicliques).

\begin{restatable}[Irrelevant terminal from a hitting set]{\relemma}{restateirrelevantmain}\label{lem:irrelevantmain}
  For every $k$, $r$ and $z$, there is a constant $\sm$ such that the
  following holds. Let $(G,T,H,k)$ be an instance of \maxdisjoint, let $r$ be an integer, and
  let $Z\subseteq V(G)\setminus T$ be a set of at most $z$ vertices such that $G-Z$ does not contain a valid path. If $|T|> \sm$, then in time $f(k,r,z)\cdot n^{O(1)}$, we can either
\begin{itemize}
\item find an irrelevant terminal $x\in T$,
\item a set of $k$ pairwise disjoint valid paths,
\item return an induced matching of size $r$ in $H$, or 
\item return an induced skew biclique of size $r+r$ in $H$.
\end{itemize}
\end{restatable}
Theorem~\ref{thm:mainalg} follows easily from Lemma~\ref{lem:irrelevantmain} and Theorem~\ref{thm:kdisjointpath}.

\iffull
\begin{proof}[Proof (of Theorem~\ref{thm:mainalg})]
  Let us modify first the instance such that $T$ is an independent set
  of degree-1 vertices in $G$. Let us invoke the algorithm of
  Theorem~\ref{thm:mainEP_revised}. If it returns a solution with $k$
  vertices or an induced matching of size $r$ in $H$, then we are
  done. Otherwise, we get a set $Z$ of vertices that covers every
  valid path. We may assume that $Z$ is disjoint from $T$, as we
  can replace any terminal in $Z$ with its unique neighbor: the
  resulting set still has the property that covers every valid
  paths.

  If $|T|>\sm$, then we invoke the algorithm of
  Lemma~\ref{lem:irrelevantmain}. If it returns an induced matching or
  a skew biclique in $H$, then we are done. Otherwise, if it returns
  an irrelevant terminal $v$, then we remove $v$ from the set of
  terminals, that is, we continue with the instance $(G,T\setminus
  \{v\},H-\{v\},k)$. We repeat this steps as long as $|T|>\sm$
  holds. If $|T|\le \sm$, then we enumerate every sequence
  $(s_1,t_1)$, $\dots$, $(s_k,t_k)$ of pairs of vertices from $T$ such
  that $s_i$ and $t_i$ are adjacent in $H$. There are at most
  $|T|^{2k}\le \sm^{2k}$ such sequences, which is number that can be
  bounded by a function of $k$, $z$, and $r$ only. For each such sequence, we
  use Theorem~\ref{thm:kdisjointpath} to find disjoint paths
  connecting these pairs of vertices. It is clear that the
  \maxdisjoint instance has a solution if and only if the algorithm
  of Theorem~\ref{thm:kdisjointpath} returns $k$ disjoint paths for at
  least one of these sequences.
\end{proof}
\fi

\iffull The proof of Lemma~\ref{lem:irrelevantmain} appears in Sections
\ref{sec:handling-cliques}--\ref{sec:boundcomponents}.
\fi
 Let us review
here the main ideas of the proof.

\textbf{Handling large cliques in $H$.}  As a first step, we show how
to find an irrelevant terminal given a large clique $K$ of $H$\iffull\ (Section~\ref{sec:handling-cliques})\fi. The special case of the disjoint paths problem when the demand pattern is a clique is a well-understood problem and we can use standard
polynomial-time algorithms to find $k$ disjoint paths with endpoints
in $K$. If there are $k$ such paths, then they form a solution of the
instance. Otherwise, a classical result of Gallai~\cite{MR0130188} shows that
there is a small set $S$ of vertices that cover every path with both
endpoints in $K$, or in other words, every connected component of
$G-S$ contains at most one vertex of $K$. Then we use this information to identify a 
vertex of $K$ that is an irrelevant terminal.

\begin{restatable}[Irrelevant terminal from a clique]{\relemma}{restatecliquereduce}\label{lem:cliquereduce}
There is polynomial-time algorithm that, given an instance of \maxdisjoint and a clique $K$ of $H$ having size $10k^2$, either
\begin{itemize}
\item returns a set of $k$ pairwise disjoint valid paths, or
\item returns an irrelevant terminal $t$.
\end{itemize}
\end{restatable}

\textbf{Separations and representative sets.}
Given a component $C$ of $G-Z$, we can define a separation $(A,B)$ with $C=V(A)\setminus V(B)$, which has the property that no two vertices of $A$ are adjacent in $H$. The main technical part of the proof is showing that if $A$ contains many terminals in such a separation, then we can find an irrelevant vertex.
\begin{restatable}[Irrelevant terminal from a separation]{\relemma}{restateirrelevantsep}\label{lem:irrelevantsep}
  For every $k$, $r$ and $z$, there is a constant $\smsep$ such that the
  following holds. Let $(A,B)$ be a separation of order at most $z$
with $|T\cap V(A)|>\smsep$, $V(A)\cap V(B)$ disjoint from $T$, and $H$ has no edge in $V(A)$. In time $f(k,r,z)\cdot n^{O(1)}$, we can either
\begin{itemize}
\item find an irrelevant terminal $x\in T\cap V(A)$,
\item return a set of $k$ pairwise disjoint valid paths,
\item return an induced matching of size $r$ in $H$, or 
\item return an induced skew biclique of size $r+r$ in $H$.
\end{itemize}
\end{restatable}

Given a solution and a separation $(A,B)$, let us focus on the part of
the solution inside $A$. An obvious and standard way of approaching
the problem would be to define an equivalence relation on these
partial solutions, where two partial solutions are equivalent if any
way of extending one of them to a full solution with edges in $B$ is
also a valid extension of the other partial solution. Let us enumerate
one partial solution from each equivalence class. If a terminal $t$ in
$A$ is not used by any of the enumerated partial solutions, then it is
irrelevant: if a partial solution is using $t$, then there is an
equivalent partial solution not using $t$, hence the solution can be
modified not to use $t$. If the number of equivalence classes is
bounded by a constant, then this gives a way of finding an irrelevant
terminal if the number of terminals in $A$ is larger than a constant.

Unfortunately, in our problem, the number of equivalence classes
cannot be bounded by any function of $k$ and the size of the
separation. A partial solution contains paths connecting a subset $T'$
of terminals in $A$ to the separator $V(A)\cap V(B)$, and the
equivalence class of the partial solution depends on what exactly this
subset is, as it determines which terminals can complete these paths
to valid paths of the solution. Therefore, the number of different
types a partial solution can have cannot be bounded by a function of
$k$ and the order $z$ of the separation only: it depends also on the
number $|T|$ of terminals and can be as large as $\Omega(|T|^z)$. For
example, let $G$ be a star with center $v$ and $2n$ leaves $a_1$,
$\dots$, $a_n$, $b_1$, $\dots$, $b_n$. Let $A=G[\{v,a_1,\dots,a_n\}]$
and $B=G[\{v,b_1,\dots,b_n\}]$. Now $(A,B)$ is a separation of order
1. Let $T=V(G)\setminus \{v\}$ and let $H$ be the matching with edges
$a_1b_1$, $\dots$, $a_nb_n$. Let $k=1$.  Now for $1\le i \le n$, the
partial solutions consisting of the single edge $a_iv$ are in
different equivalence classes: the edge $vb_i$ extends $a_iv$ to a
solution, but it does not extend $a_jv$ for any $j\neq i$. Therefore,
there are $n$ equivalence classes of partial solutions. By taking
disjoint unions of such stars, the reader may modify this example for
larger $k=z$ such that the number of terminals is $2kn$ and the number
equivalence classes is $n^k$.

One may hope that by excluding large induced matchings and large skew
bicliques, the number of equivalence classes can be bounded by a
constant.  Let us point out by a simple example that this is not the case. 
Let us modify the example in the previous paragraph such that $H$ is now a complete bipartite graph minus
the edges $a_ib_i$ for $1\le i \le n$. Observe that $H$ does not
contain large induced matchings and large induced skew bicliques.
Again, for $1\le i \le n$, the partial solutions consisting of
the single edge $a_iv$ are in different equivalence classes: the edge
$vb_i$ extends $a_jv$ to a solution for any $j\neq i$, but it {\em
  does not} extend $a_iv$. Therefore, again we have $n$ equivalence
classes of partial solutions.

We get around this problem using the idea of representative sets. We
show that, even though the valid partial solutions in a small
separation may form an unbounded number of equivalence classes, they
have a bounded-size subset that is representative in the sense that if
any partial solution can be extended to a correct solution, then one
of the partial solutions in the representative set can also be
extended to a correct solution. Continuing our example from the
previous paragraph, even though there are $n$ incomparable partial
solutions, there is a representative set consisting of only two
partial solutions, the edge $a_1v$ and the edge $a_2v$. Indeed, if a
solution contains the edge $vb_1$, then the part of the solution in
$A$ can be replaced with $a_2v$; if a solution contains the edge
$vb_i$ for $1<i \le n$, then the part of the solution in $B$ can be
replaced by $a_1v$. We show how to find a representative set of
partial solutions of bounded size. Then any terminal in $A$ that is
not used by any of these partial solutions can be considered to be
irrelevant. The bound and the algorithm relies heavily on the
assumption that the graph $H$ does not contain large cliques, large
induced matchings, skew bicliques graphs; or more precisely, the
algorithm either works correctly, or returns one such graph. If we
find a clique, then we can invoke Lemma~\ref{lem:cliquereduce}. By the
specification of Lemma~\ref{lem:irrelevantmain}, the induced matchings
or skew biclique can be returned.  The concept of representative sets
has been used in the design of FPT algorithms
\cite{MR87a:05097,DBLP:journals/corr/FominLPS14,DBLP:conf/soda/FominLS14,DBLP:journals/tcs/Marx09,DBLP:journals/tcs/Marx06a},
but our application does not follow from any of the earlier technical
statements; in particular, the fact that this approach works precisely
when there are no larges cliques, bicliques, or matchings is quite
specific to our problem.

On a high level, the proof of Lemma~\ref{lem:irrelevantsep} goes the
following way. Consider those paths of the solution that cross the
separator and have one endpoint in $A$ and one in $B$. The endpoints
in $A$ form a vector $\mathbf{a}$ and the endpoints in $B$ form a
vector $\mathbf{b}$. These two vectors are compatible in the sense
that the $j$-th coordinate of $\mathbf{a}$ is adjacent in $H$ with the $j$-th
coordinate of $\mathbf{b}$. The partial solution connects the vertices
in $\mathbf{a}$ to the separator $V(A)\cap V(B)$. If we want to
replace the partial solution with another partial solution that connects a different set $\mathbf{a}'$ of vertices to the separator, then we
have to make sure that the new vector $\mathbf{a}'$ is also compatible with the vector $\mathbf{b}$. Therefore,
if we classify the partial solution according to the vector of
terminals connected to the separator, then we have to find a
representative subset of these vectors in the sense that if some
vector $\mathbf{a}$ is compatible with some vector $\mathbf{b}$, then
the representative subset also contains a vector $\mathbf{a}'$
compatible with $\mathbf{b}$. \iffull In
Section~\ref{sec:representative-sets-vectors},
\else
First\fi\ we consider this
abstract problem on vectors, and show (assuming that $H$ has no large
induced matching or induced skew biclique) how we can find a
bounded-size representative set of vectors. Of course, our problem is
more complicated than just matching these vectors, for example, a path
in the solution can cross the separator several times. \iffull 
In Section~\ref{sec:repr-sets-disj},\else Then\fi\ we
address these issues by classifying the partial solutions into a bounded
number of types according (mostly) to what happens at the separator.
\iffull We conclude the proof of Lemma~\ref{lem:irrelevantsep} in
Section~\ref{sec:repr-sets-disj}.\fi

\textbf{Reducing the number of components.}  Finally, after we reduced
the number of terminals in each component of $G-Z$ with repeated
applications of Lemma~\ref{lem:irrelevantsep}, our goal is to reduce
the number of components of $G-Z$ that contain terminals. \iffull In
Section~\ref{sec:boundcomponents}, we\else We\fi\ show that this can be done quite
easily by a simple marking procedure. The proof relies on the fact
that the number of terminals is bounded in each component. Thus it does
not seem to be easy to do the reduction of the number of components
before the reduction of the number terminals in the components.
\iffull
\subsection{Handling cliques}
\label{sec:handling-cliques}

In this section, we discuss how to find an irrelevant vertex if we
have a large clique in the demand graph $H$
(Lemma~\ref{lem:cliquereduce} above).  The reason why we are treating
this special case separately is that a combinatorial argument of the
following section (Lemma~\ref{lem:noskew}) works only if we can assume
that there are no large induced matchings, skew bicliques, and cliques
in the demand graph $H$. By the specification of
Theorem~\ref{thm:mainalg}, if we encounter large induced matchings or
skew bicliques, then we may stop, but there is no reason why large
cliques cannot appear in the demand graph $H$. Therefore, we need some
argument to handle large cliques, and this is what we provide in this
section.  Note that even if we have a procedure handling large
cliques, we cannot say the we apply it exhaustively on every
sufficiently large clique of $H$ and after that it can be assumed that $H$ has
no large cliques: finding a clique of size $k$ is
\classWone-hard. Instead, what we do is whenever the algorithm
described in the following section fails because it finds a a large
clique, then we invoke this procedure.

The following result was proved by Gallai~\cite{MR0130188} in a combinatorial form, the algorithmic version is folklore:
\begin{theorem}[Gallai~\cite{MR0130188}]\label{thm:mway}
  Given an undirected graph $G$, a set $A\subseteq V(G)$ of vertices, and an integer $k$, we can find in polynomial time either
\begin{itemize}
\item a set of $k$ pairwise vertex-disjoint paths with endpoints in $A$, or
\item a set $S$ of at most $2k-2$ vertices such that every component of $G- S$ contains at most one vertex of $A\setminus S$.
\end{itemize}
\end{theorem}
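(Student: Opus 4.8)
We may assume every path under consideration has at least one edge. The plan is to first strengthen the first alternative to the requirement of finding $k$ pairwise vertex-disjoint \emph{$A$-paths}, an $A$-path being a path with both endpoints in $A$ and no internal vertex in $A$. This reduction is sound: an $A$-path has its endpoints in $A$, so $k$ disjoint $A$-paths certainly give $k$ disjoint paths with endpoints in $A$; and if a vertex set $S$ meets every $A$-path of $G$, then no component of $G-S$ can contain two distinct vertices of $A\setminus S$, since a path joining them inside that component would contain an $A$-path disjoint from $S$. Hence it suffices to find either $k$ pairwise vertex-disjoint $A$-paths, or a set $S$ of at most $2k-2$ vertices meeting every $A$-path.

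The engine is the classical theorem of Gallai on $A$-paths \cite{MR0130188}: if $\nu$ denotes the maximum number of pairwise vertex-disjoint $A$-paths, then some set of at most $2\nu$ vertices meets every $A$-path. I would invoke this in its algorithmic form: a maximum packing of $A$-paths, and a hitting set of size at most $2\nu$, can both be computed in polynomial time. The standard route for the algorithmic version is the reduction of $A$-path packing to maximum matching via the classical gadget of Gallai, which replaces each terminal $a\in A$ by a bounded-size structure on its neighbourhood encoding ``$a$ is used as an endpoint of at most one path''; a maximum matching of the resulting auxiliary graph decodes to a maximum $A$-path packing, and its Gallai--Edmonds decomposition decodes to a vertex set of size at most $2\nu$ meeting every $A$-path. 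Everything here is polynomial via Edmonds' blossom algorithm.

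With this in hand the algorithm is immediate. Compute $\nu$ and a maximum packing $\mathcal{P}$ of $A$-paths. If $\nu\ge k$, return any $k$ members of $\mathcal{P}$, which are $k$ pairwise vertex-disjoint paths with endpoints in $A$. If $\nu\le k-1$, compute a set $S$ of at most $2\nu\le 2k-2$ vertices meeting every $A$-path; then $G-S$ contains no $A$-path, so every component of $G-S$ contains at most one vertex of $A\setminus S$, and $S$ is a valid answer for the second alternative. Both branches run in polynomial time.

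The whole difficulty is concentrated in Gallai's theorem and its algorithmic realisation. Were I to prove it from scratch rather than cite it, the crux would be an exchange (alternating-walk) argument: starting from a maximum packing $\mathcal{P}$, one shows that the $2\nu$ endpoints of the paths of $\mathcal{P}$ already hit every $A$-path, since otherwise an $A$-path avoiding them could be spliced with the members of $\mathcal{P}$ along an alternating walk into a packing of size $\nu+1$. Making this precise — controlling how the extra path threads through the interiors of the packed paths — is essentially the correctness proof of the blossom algorithm, which is why it is cleanest to route the statement through matching.
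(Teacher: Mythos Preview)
The paper does not prove this theorem at all: it is stated as a known result of Gallai, with the remark that ``the algorithmic version is folklore,'' and then used as a black box. So there is no paper proof to compare against; your proposal is already more detailed than what the paper offers. Your main line --- reduce to $A$-paths, invoke Gallai's bound $\tau\le 2\nu$ on $A$-path packing versus covering, and realise it algorithmically via the classical reduction to matching and Edmonds' algorithm --- is exactly the standard ``folklore'' argument the paper is alluding to, and it is correct.

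One warning about your final heuristic paragraph: the claim that ``the $2\nu$ endpoints of the paths of $\mathcal{P}$ already hit every $A$-path'' is false in general. Take a star with centre $z\notin A$ and leaves $a,b,c,d\in A$; then $\nu=1$ (say the packing is $\{a\text{--}z\text{--}b\}$), but the $A$-path $c\text{--}z\text{--}d$ avoids both endpoints $a,b$. The correct hitting set here is $\{z\}$, and in general the Gallai cover may need to use internal vertices of the packed paths, not just their endpoints. Since you explicitly frame that paragraph as an informal sketch and your actual argument cites the theorem rather than relying on this claim, it does not break the proof --- but you should drop or correct that sentence.
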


Using Theorem~\ref{thm:mway} on a sufficiently large clique $K$ of $H$, we may either find $k$ valid paths forming a solution or we can identify a terminal of $K$ that can be always avoided in a solution.

\restatelemma{\restatecliquereduce*}
\begin{proof}
  By Theorem~\ref{thm:mway} applied to graph $G$, vertices $K$, and
  integer $k$, we can find in polynomial time either $k$ pairwise
  vertex-disjoint paths with endpoints in $K$, or a set $S$ of size at
  most $2k-2$ such that every component of $G- S$ contains at
  most one vertex of $K$. In the former case, we return this set of
  $k$ paths as a valid solution. In the later case, let $S'\subseteq
  S$ contain a vertex $v\in S$ if there are at least $5k+1$ components
  of $G- S$ that are adjacent to $v$ and intersect $K$ (in exactly one terminal).  This means
  that there are at most $5k|S\setminus S'|\le 5k|S|\le 5k\cdot (2k-2)=10(k^2-k)$ components $C$
  of $G- S$ with $C\cap K\neq\emptyset$ and $N(C)\not\subseteq
  S'$. As $|K|\ge 10k^2$ and hence $|K-S|\ge 10k^2-(2k-2)> 10(k^2-k)+k$, there exists $k$ components $C_1$, $\dots$, $C_k$ with
  $|C_i\cap K|=1$ and $N(C_i)\subseteq S'$.  If each of these
  $k$ components fully contains a valid path, then picking a valid
  path from each of them gives a solution that we can
  return. Otherwise, there is a component $C$ of $G- S$ with
  $C\cap K=\{t\}$, $N(S)\subseteq S'$, and not containing a valid
  path.

  We claim that removing $t$ from the set of terminals gives an
  equivalent instance. That is, we show that any solution containing a
  path $P$ with endpoint $t$ can be modified in such a way that it
  does not use $t$. By the choice of $C$, there is no valid path in
  $C$, hence we know that $P$ is not contained fully in $C$. Let $v$
  be the vertex of $N(C)\subseteq S'$ that is closest to $t$ on
  $P$. As $v\in S'$, there are at least $5k+1$ components of
  $G\setminus S'$ intersecting $K$ and adjacent to $v$. At most $k-1$
  of them can contain fully a path of the solution (different from
  $P$) and at most $2|S|\le 4k$ of them can contain a path going
  intersecting $|S|$ (observe that a path containing $x$ vertices of
  $|S|$ can intersect at most $x+1\le 2x$ components). Therefore,
  there are two such a components $C^1, C^2$ disjoint from every path
  of the solution; let $C^1\cap K=\{t_1\}$ and $C^2\cap
  K=\{t_2\}$. Now the path $P$ can be replaced by a path connecting
  $t_1$ and $t_2$ via $v$. This proves the claim that removing $t$
  from the set of terminals gives an equivalent instance.
\end{proof}

\subsection{Representative sets for vectors of vertices}
\label{sec:representative-sets-vectors}

In this section, we prove a statement about representative sets in an abstract setting of compatible vectors (Lemma~\ref{lem:vectorrepbound} below). In Section~\ref{sec:repr-sets-disj}, we use this result to prove a bound on the size of representative sets of partial solutions, which will allow us to find irrelevant terminals if a component of $G-Z$ contains too many terminals. 
\begin{definition}\label{def:repvector}
  Let $H$ be an undirected graph and let $d$ be a positive integer. We
  say that two $d$-tuples $(a_1,\dots,a_d),(b_1,\dots,b_d)\in V(H)^d$
  are {\em compatible} if $a_i$ and $b_i$ are adjacent in $H$ for every $1\le
  i \le d$. Let $\R\subseteq V(H)^d$ be a set of $d$-tuples. We say
  that $\R'\subseteq \R$ is a {\em representative} subset of $\R$ if
  for every compatible pair $\mathbf{a}\in \R$ and $\mathbf{b}\in
  V(H)^d$, there is an $\mathbf{a}'\in \R'$ such that $\mathbf{a}'$ and
$\mathbf{b}$ are compatible.
\end{definition}

Note that we do not require that the coordinates of a vector
$(a_1,\dots,a_d)$ be all distinct, and $(a_1,\dots,a_d)$ and
$(b_1,\dots, b_d)$ can be compatible even if $a_i=b_j$ for some $i\neq
j$ (but $a_i=b_i$ is clearly impossible, as no vertex of $H$ is
adjacent to itself).

We need the following simple Ramsey argument, whose proof is very
similar to the proof of Lemma~\ref{lem:ramsey} in
Section~\ref{sec:approx}.
\begin{lemma}\label{lem:noskew}
  Let $r$ and $n$ be positive integers with $n\ge
  4^{4r}$.
Let $H$ be a graph and $a_1$, $\dots$, $a_n$, $b_1$, $\dots$, $b_n$ be distinct vertices such that
\begin{itemize}
\item $a_i$ and $b_i$ are adjacent for $1\le i \le n$, and
\item $a_i$ and $b_j$ are not adjacent for $1\le i < j \le n$.
\end{itemize}
Then in polynomial time we can find either
\begin{itemize}
\item an induced matching of size $r$ in $H$,
\item an induced skew biclique on $r+r$ vertices in $H$, or
\item a clique of size $r$ in $H$.
\end{itemize}
\end{lemma}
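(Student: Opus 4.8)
The plan is to imitate the proof of Lemma~\ref{lem:ramsey}: I would introduce a bounded-size edge colouring of the complete graph on the index set $\{1,\dots,n\}$, apply the multicolour Ramsey bound of Lemma~\ref{lem:multiramsey} with $c=4$ (which is exactly why the hypothesis is $n \ge 4^{4r} = 4^{r\cdot 4}$), and then read off one of the three desired structures from the colour class of the monochromatic $r$-clique that results.

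Concretely, for an edge $\{i,j\}$ with $i<j$ I would inspect the edges of $H$ between the two pairs $\{a_i,b_i\}$ and $\{a_j,b_j\}$. By hypothesis $a_ib_j\notin E(H)$ for $i<j$, so only the three pairs $a_ia_j$, $b_ib_j$, $a_jb_i$ can be edges, and I colour $\{i,j\}$ by: colour $1$ if $a_i\sim a_j$; otherwise colour $2$ if $b_i\sim b_j$; otherwise colour $3$ if $a_j\sim b_i$; otherwise colour $4$ (no $H$-edge between the two pairs at all). Lemma~\ref{lem:multiramsey} then produces, in polynomial time, indices $p_1<\dots<p_r$ spanning a monochromatic clique.

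A short case analysis finishes the argument. If the colour is $1$ then $\{a_{p_1},\dots,a_{p_r}\}$ is an $r$-clique of $H$, and symmetrically colour $2$ gives the $r$-clique $\{b_{p_1},\dots,b_{p_r}\}$. If the colour is $4$, then among the $2r$ vertices $a_{p_1},\dots,a_{p_r},b_{p_1},\dots,b_{p_r}$ the only $H$-edges are the $r$ diagonal edges $a_{p_\ell}b_{p_\ell}$ (every cross-pair between distinct indices is a non-edge by the colour-$4$ rule together with $a_{p_\ell}b_{p_m}\notin E(H)$ for $\ell<m$), so these vertices induce a copy of $M_r$. The only case needing a moment of care is colour $3$: there $a_{p_\ell}\nsim a_{p_m}$ and $b_{p_\ell}\nsim b_{p_m}$ for $\ell\ne m$, while $a_{p_\ell}\sim b_{p_m}$ holds precisely when $\ell\ge m$ --- it holds for $\ell=m$ by hypothesis, fails for $\ell<m$ since $p_\ell<p_m$, and holds for $\ell>m$ by applying the colour-$3$ rule to the edge $\{p_m,p_\ell\}$. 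Reindexing by $a'_i:=a_{p_{r+1-i}}$ and $b'_j:=b_{p_{r+1-j}}$ turns this into $a'_i\sim b'_j\iff i\le j$, i.e.\ an induced skew biclique $S_r$. Every step runs in polynomial time, so the lemma follows; the only (mild) obstacle is lining up the orientation of the skew biclique correctly in this last case, which the reindexing handles.
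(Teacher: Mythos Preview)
Your proof is correct and follows essentially the same approach as the paper: a $4$-colouring of the edges of the index clique followed by Lemma~\ref{lem:multiramsey}, with the four colour classes yielding an $a$-clique, a $b$-clique, an induced skew biclique, or an induced matching. The only differences are that you permute the colour labels relative to the paper and that you spell out the reindexing in the skew-biclique case explicitly, which the paper leaves implicit.
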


\begin{proof}
By Lemma \ref{lem:multiramsey}, every clique with $4^{4r}$ vertices such that the edges are colored by one of four colors contains a clique subgraph of size $r$ where all the edges are the same color.    

 Consider the clique on $n$ vertices, with the vertices labeled $1, \dots, n$.  We define a 4-coloring of the edges as follows.  For an edge of the clique $ij$ with $i < j$, we color the edge:
\begin{enumerate}
\item color 1 if no edge of $H$ has one end in $\{a_i, b_i\}$ and one end in $\{a_j, b_j\}$, 
\item color 2 if $a_i$ is adjacent $a_j$, 
\item color 3 if $b_i$ is adjacent $b_j$ and $a_i \nsim a_j$, 
\item color 4 if $b_i$ is adjacent $a_j$ and $a_i \nsim a_j$, $b_i \nsim b_j$
\end{enumerate}
where all adjacencies are in the graph $H$.
Note that this covers every possibility, as we know by assumption that $a_i\nsim b_j$ for $i<j$. Therefore, this defines a 4-coloring of the edges of the clique.  By our choice of $n$, there exists a subset of vertices of size $r$ inducing a monochromatic subclique, and we can identify it in polynomial time.  Without loss of generality, we may assume that the vertices $1 \le i \le r$ of the clique induce such a monochromatic clique.  If the subclique has color 1, then $H$ contains an induced matching of size $r$.  If the monochromatic clique has color 2 or 3, then $H$ contains a clique subgraph of size $r$.  Finally, if the subclique has color 4, then the graph contains an induced skew biclique on $r+r$ vertices.
\end{proof}

The following lemma states that (assuming there is no large induced
matching, skew biclique, or clique in $H$) every set of vectors has a
bounded-size representative subset.
\begin{lemma}[Representative set bound]\label{lem:vectorrepbound0}
  Let $H$ be an undirected graph, $r$ and $d$ positive integers, and
  $\R\subseteq V(H)^d$ a set of $d$-tuples. Suppose that there is no
  induced matching of size $r$, induced skew biclique of size $r+r$,
  or clique of size $r$ in $H$.  Then there is a representative subset
  $\R'\subseteq \R$ of size at most $\Rmvec:=(d+1)^{d(4^{4r})}$.
\end{lemma}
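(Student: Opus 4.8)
\textbf{Proof plan.} The plan is to fix an inclusion‑minimal representative subset $\R'\subseteq\R$ and bound its size directly; the point is that minimality forces each chosen tuple to carry a ``private'' witness, and these witnesses pack a skew‑biclique‑like pattern into a single coordinate, where Lemma~\ref{lem:noskew} can be applied.

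Since $\R$ is trivially a representative subset of itself and $V(H)^d$ is finite, an inclusion‑minimal representative subset $\R'=\{\mathbf{a}^{(1)},\dots,\mathbf{a}^{(m)}\}\subseteq\R$ exists, and it suffices to prove $m\le\Rmvec$. By minimality, for each $j$ the set $\R'\setminus\{\mathbf{a}^{(j)}\}$ is not representative, so there is a tuple $\mathbf{b}^{(j)}\in V(H)^d$ compatible with some element of $\R$ but with no element of $\R'\setminus\{\mathbf{a}^{(j)}\}$; as $\R'$ is representative, $\mathbf{b}^{(j)}$ must then be compatible with $\mathbf{a}^{(j)}$ and with no other $\mathbf{a}^{(i)}$. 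Thus $a^{(j)}_c$ is adjacent to $b^{(j)}_c$ for every coordinate $c$, while for every $i\neq j$ there is a coordinate witnessing that $\mathbf{a}^{(i)}$ and $\mathbf{b}^{(j)}$ are not compatible. Next, assuming for contradiction that $m$ is large, I would apply Ramsey's theorem (Lemma~\ref{lem:multiramsey}): order $[m]$ arbitrarily and colour the pair $\{i,j\}$ with $i<j$ by some coordinate $c\in[d]$ in which $\mathbf{a}^{(i)}$ and $\mathbf{b}^{(j)}$ are incompatible. If $m$ is large enough, this produces a monochromatic clique $j_1<\dots<j_\rho$ of colour $c^*$; writing $\alpha_p:=a^{(j_p)}_{c^*}$ and $\beta_p:=b^{(j_p)}_{c^*}$, we get $\alpha_p\sim\beta_p$ for all $p$ and $\alpha_p\not\sim\beta_q$ for all $p<q$. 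A short check shows the $\alpha_p$ are pairwise distinct, the $\beta_p$ are pairwise distinct, and $\alpha_p\neq\beta_p$; the only possible collisions are $\alpha_p=\beta_q$ with $p\neq q$, and since the $\alpha$'s (resp.\ $\beta$'s) are distinct these collisions induce a graph on $[\rho]$ of maximum degree at most $2$, so we may pass to an independent set $S$ with $|S|\ge\rho/3$ for which $\{\alpha_p:p\in S\}\cup\{\beta_p:p\in S\}$ consists of $2|S|$ distinct vertices still satisfying the two adjacency conditions. Choosing $\rho$ so that $|S|\ge 4^{4r}$, this is exactly the configuration forbidden by Lemma~\ref{lem:noskew}, which then produces an induced matching of size $r$, an induced skew biclique of size $r+r$, or a clique of size $r$ in $H$ — contradicting the hypothesis. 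Hence $m$ is bounded; a careful accounting of the Ramsey threshold together with the distinctness pruning (and, for $d\le 1$, a direct and simpler version of the same argument) yields the stated bound $\Rmvec=(d+1)^{d(4^{4r})}$.

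The main obstacle is conceptual rather than computational: one has to recognise that an inclusion‑minimal representative set cannot be ``spread out'', because the private witnesses $\mathbf{b}^{(j)}$ are precisely what allows Ramsey's theorem to surface a long skew‑adjacency chain inside one coordinate, at which point Lemma~\ref{lem:noskew} closes the argument. The remaining work — arranging the colouring so that the monochromatic clique yields the adjacency pattern in the exact orientation needed for Lemma~\ref{lem:noskew}, and cleaning up the unavoidable $\alpha_p=\beta_q$ collisions to get genuinely distinct vertices — is routine but must be carried out carefully to land on the claimed constant. Note finally that the statement asserts only the \emph{existence} of $\R'$, so none of these steps needs to be made algorithmic here.
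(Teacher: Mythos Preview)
Your approach is correct and genuinely different from the paper's. The paper does not take an inclusion-minimal representative set; instead it builds a rooted $d$-ary search tree whose nodes are compatible pairs $(\mathbf{a},\mathbf{b})$, argues that a root-to-leaf path of length $d\cdot 4^{4r}$ would (by pigeonhole on the child index) yield $4^{4r}$ pairs failing in a common coordinate, and then invokes Lemma~\ref{lem:noskew} on that coordinate. Your argument replaces the tree by a one-shot Ramsey step on the private witnesses of a minimal $\R'$, which is conceptually cleaner for the purely existential statement. The paper's construction, on the other hand, is inherently algorithmic and is what is actually needed downstream (Lemma~\ref{lem:vectorrepbound}); your minimality argument does not obviously yield an efficient procedure.

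One quantitative point: with your coloring, Lemma~\ref{lem:multiramsey} requires $m\ge d^{\rho d}$ to force a monochromatic $\rho$-clique, and your distinctness cleanup costs a further factor $3$ in $\rho$, giving $m\ge d^{3d\cdot 4^{4r}}$ rather than the stated $(d+1)^{d\cdot 4^{4r}}$. So ``a careful accounting yields the stated bound'' is not quite right as written. You can recover the exact bound by observing that the cleanup is unnecessary: once you have $\alpha_p\sim\beta_p$ and $\alpha_p\not\sim\beta_q$ for $p<q$, the $\alpha_p$'s are automatically distinct, the $\beta_p$'s are automatically distinct, and a direct run through the four-colouring in the proof of Lemma~\ref{lem:noskew} shows that any remaining coincidences $\alpha_p=\beta_q$ are incompatible with each of the four monochromatic outcomes. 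With $\rho=4^{4r}$ this gives $d^{d\cdot 4^{4r}}\le (d+1)^{d\cdot 4^{4r}}$ for $d\ge 2$, and the case $d=1$ is immediate.
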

We prove an algorithmic version of
Lemma~\ref{lem:vectorrepbound0}. The straightforward algorithmic
statement would be to say that, given a set $\R$ of vectors, a
bounded-size representative set can be found. However, we would like
to find small representative sets efficiently also for large,
implicitly given sets $\R$ that would be too time consuming to
enumerate explicitly. Therefore, we state the algorithmic version of
Lemma~\ref{lem:vectorrepbound0} in a way that $\R$ is given by a query
procedure that, given sets $A_1,\dots,A_d\subseteq V(H)$, returns a
vector $\mathbf{a}\in (A_1\times\dots \times A_d)\cap \R$, if such a
vector exists. 

\begin{lemma}[Representative set bound, algorithmic version]\label{lem:vectorrepbound}
  Let $H$ be an undirected graph, $r$ and $d$ positive integers, and
  $\R\subseteq V(H)^d$ a set of $d$-tuples.  Suppose that the set $\R$
  is given via a query procedure that, given sets $A_1, \dots,
  A_d\subseteq V(H)$, returns an $\mathbf{a}\in
  (A_1\times \dots \times A_d)\cap \R$, or states that no such vector
  $\mathbf{a}$ exists.  There is an algorithm whose running time is
  polynomial in $n$, in $\Rmvec:=(d+1)^{d(4^{4r})}$, and in the
  running time of the query procedure, and finds either
\begin{itemize}
\item a representative subset $\R'\subseteq \R$ of size at most $\Rmvec$,
\item an induced matching of size $r$ in $H$,
\item an induced skew biclique on $r+r$ vertices in $H$, or
\item a clique of size $r$ in $H$.
\end{itemize}
\end{lemma}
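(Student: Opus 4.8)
The plan is to prove this by an explicit greedy construction of a representative set, and to observe that every step of that construction can be carried out using only the query procedure; running the same algorithm with the trivial query procedure that enumerates $\R$ will in passing also give the non‑algorithmic bound of Lemma~\ref{lem:vectorrepbound0}. So, starting from $\R'=\emptyset$, I would repeatedly look for a vector $\mathbf a\in\R$ that is not yet \emph{spanned} by the current $\R'$: a vector $\mathbf a\in\R$ for which there is a $\mathbf b\in V(H)^d$ compatible with $\mathbf a$ but with no vector of $\R'$ compatible with $\mathbf b$ (such a $\mathbf b$ is a \emph{witness}). I add $\mathbf a$ to $\R'$, record the witness $\mathbf b^{\mathbf a}$, and repeat; when no such $\mathbf a$ exists, $\R'$ is representative in the sense of Definition~\ref{def:repvector} and I output it.

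The reason the construction stops quickly is a multicolour Ramsey argument. Let $\mathbf a^1,\dots,\mathbf a^m$ be the vectors inserted, in order, with witnesses $\mathbf b^1,\dots,\mathbf b^m$. For $p<q$ the vector $\mathbf a^p$ already belonged to $\R'$ when $\mathbf a^q$ was inserted, so $\mathbf a^p$ is not compatible with $\mathbf b^q$: there is a coordinate $c(p,q)\in[d]$ with $a^p_{c(p,q)}\not\sim b^q_{c(p,q)}$, while $a^q_c\sim b^q_c$ for every coordinate $c$. Viewing $c(\cdot,\cdot)$ as a $d$‑colouring of the pairs of $\{1,\dots,m\}$, Lemma~\ref{lem:multiramsey} extracts, once $m$ is large enough, an increasing sequence $p_1<\dots<p_n$ with $n=4^{4r}$ and a single coordinate $c^*$ with $c(p_s,p_t)=c^*$ for all $s<t$ (after discarding a bounded fraction of the indices I can also ensure that the $2n$ vertices $a^{p_s}_{c^*},\,b^{p_s}_{c^*}$ are pairwise distinct). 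Setting $x_s:=a^{p_s}_{c^*}$ and $y_s:=b^{p_s}_{c^*}$ gives $x_s\sim y_s$ for all $s$ and $x_s\not\sim y_t$ for $s<t$, which is exactly the hypothesis of Lemma~\ref{lem:noskew}; that lemma then produces in polynomial time an induced matching of size $r$, an induced skew biclique on $r+r$ vertices, or a clique of size $r$ in $H$, and the algorithm outputs it. A careful count of the Ramsey blow‑up shows this is forced as soon as $m>\Rmvec$, so the construction either halts with $|\R'|\le\Rmvec$ or returns one of the three forbidden subgraphs.

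To implement the search for an unspanned vector with only the query procedure, note that ``$\mathbf b$ is not compatible with $\mathbf a'$'' means $\mathbf b$ and $\mathbf a'$ disagree at some coordinate, so I guess, for every $\mathbf a'\in\R'$, a coordinate $g(\mathbf a')\in[d]$ at which $\mathbf b$ escapes $\mathbf a'$. For a fixed $g$, the $i$‑th coordinate of the sought pair must satisfy $b_i\in N(a_i)$ and $b_i\notin N(a'_i)$ for every $\mathbf a'$ with $g(\mathbf a')=i$; such a $b_i$ exists precisely when $a_i$ lies in the polynomial‑time computable set $A_i^g:=\{v\in V(H):N(v)\not\subseteq\bigcup_{\mathbf a':\,g(\mathbf a')=i}N(a'_i)\}$. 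I call the query procedure on $(A_1^g,\dots,A_d^g)$; if it returns a vector $\mathbf a\in\R$, I read off a legal witness coordinate by coordinate, insert $\mathbf a$, and continue, and if no $g$ returns a vector, then $\R'$ is representative. Vectors having a coordinate of $H$‑degree $0$ (never compatible with anything) or repeated coordinates are dispatched by trivial special cases.

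I expect the main obstacle to be twofold. On the combinatorial side, the delicate point is arranging the (possibly nested) Ramsey applications together with the clean‑up that makes the chosen vertices distinct so that the threshold for forcing a forbidden subgraph is exactly $\Rmvec=(d+1)^{d\cdot 4^{4r}}$ rather than some larger polynomial in it; this needs one to exploit the slack in Lemma~\ref{lem:noskew} and Lemma~\ref{lem:ramsey} rather than plug in the generic multicolour Ramsey number. On the algorithmic side, the naive enumeration over all escape functions $g$ costs $d^{|\R'|}$, which is not polynomial in $\Rmvec$; the work is to replace it by a bounded schedule of oracle calls — for instance, guessing the coordinates of the witness $\mathbf b$ one at a time while collapsing partial guesses that leave the same set of $\R'$‑vectors unescaped — so that the whole procedure runs in time polynomial in $n$, in $\Rmvec$, and in the running time of the query procedure, while still being guaranteed to return either a representative subset of size at most $\Rmvec$ or one of the forbidden induced subgraphs of $H$.
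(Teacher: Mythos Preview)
Your combinatorial core is sound and close in spirit to the paper's: both arguments eventually produce a long sequence of pairs $(\mathbf a^s,\mathbf b^s)$ with $a^s_{c^*}\sim b^s_{c^*}$ and $a^s_{c^*}\not\sim b^t_{c^*}$ for $s<t$ at a single coordinate $c^*$, then invoke Lemma~\ref{lem:noskew}. The genuine gap is algorithmic, and you already put your finger on it: with a flat list $\R'$, locating an unspanned vector requires, in your scheme, enumerating an escape function $g\colon\R'\to[d]$, which costs $d^{|\R'|}$ oracle calls --- not polynomial in $\Rmvec$. Your proposed fix (``guess the coordinates of $\mathbf b$ one at a time, collapsing partial guesses that leave the same set of $\R'$-vectors unescaped'') does not obviously help: the state you are collapsing on is a subset of $\R'$, and there can be $2^{|\R'|}$ such subsets, so you have only traded one exponential for another.

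The paper avoids this not by a smarter search over escape functions but by \emph{relaxing what the witness has to escape}. It maintains a rooted $d$-ary tree whose internal nodes are labelled by pairs $(\mathbf a,\mathbf b)$; a search procedure for any $\mathbf b'$ walks down from the root, at each node taking the child indexed by the \emph{first} coordinate where $\mathbf a$ and $\mathbf b'$ fail to be adjacent. A new pair is inserted only at an empty leaf $u$, and the witness $\mathbf b'$ is required to escape merely the $\mathbf a$'s along the root--$u$ path, not all of $\R'$. Crucially, the root--$u$ path \emph{is} the escape function for those ancestors, so the constraints on $\mathbf b'$ are already product sets $B_1\times\cdots\times B_d$, and a single oracle call on $A_i=N(B_i)$ decides whether $u$ can be filled. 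Representativeness then follows because the search procedure for any $\mathbf b'$ must succeed somewhere in the final tree. The size bound comes from the tree's height: a root-to-leaf path of length $d\cdot 4^{4r}$ repeats some child index $4^{4r}$ times, and the pairs at those nodes feed directly into Lemma~\ref{lem:noskew} --- no separate multicolour Ramsey pass on all of $\R'$ is needed, which is also why the bound lands exactly at $(d+1)^{d\cdot 4^{4r}}$. This tree organisation (in the style of Monien~\cite{MR87a:05097}) is the missing idea in your proposal.
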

\begin{proof}
  The algorithm builds a rooted tree where each node is either empty
  or contains a compatible pair $(\mathbf{a},\mathbf{b})$ with
  $\mathbf{a}\in \R$ and $\mathbf{b}\in V(H)^d$. Empty nodes have no
  children and each nonempty node has exactly $d$ ordered children.
  Initially, we start with a tree consisting of a single empty node.

  For a vector $\mathbf{b'}\in V(H)^d$, we define the following search
  procedure on the tree.  We start the procedure at the root node.  If
  the current node is empty, then we say that the procedure fails at
  this empty node.  Otherwise, let $(\mathbf{a},\mathbf{b})$ be the current
  node. If $\mathbf{a}$ and $\mathbf{b}'$ are compatible, then we
  declare the search to be successful. Otherwise, let $1\le j \le d$
  be the first coordinate such that the $j$-th coordinates of
  $\mathbf{a}$ and $\mathbf{b}'$ are not adjacent. Then we continue
  the search at the $j$-th child of the current node.

  Given an empty node $u$ of the tree, we show how to check whether
  there are $d$-tuples $\mathbf{a}'=(a'_1,\dots, a'_d)\in \R$ and
  $\mathbf{b}'=(b'_1,\dots,b'_d)\in V(H)^d$ such that $\mathbf{a}'$
  and $\mathbf{b}'$ are compatible and $\mathbf{b}'$ fails at
  $u$. Consider the path from the root of the tree to the empty node
  $u$. Let $(\mathbf{a},\mathbf{b})$ be a nonempty node on this path
  such that the path continues with the $j$-th child of this nonempty
  node.  Then the $j$-th coordinate of $\mathbf{a}$ is {\em not}
  adjacent to the $j$-th coordinate of $\mathbf{b}'$, while for every
  $1\le j' < j$, the $j'$-th coordinate of $\mathbf{a}$ is adjacent to
  the $j'$-th coordinate of $\mathbf{b}'$. These requirements together
  give a subset $B_j\subseteq V(H)$ of potential values for the
  $j$-th coordinate of $\mathbf{b}'$. Now a vector $\mathbf{b'}\in
  V(H)^d$ fails at $u$ if and only if $\mathbf{b'}\in
  B_1\times\dots\times B_d$.  Therefore, we need to find a vector
  $\mathbf{a}'\in \R$ that is compatible with at least one vector in
  $B_1\times\dots\times B_d$. Let $A_j$ contain every vertex of $H$
  that has at least one neighbor in $B_j$. Observe that a vector
  $\mathbf{a}'\in V(H)^d$ is compatible with at least one vector in
  $B_1\times \dots \times B_d$ if and only if $\mathbf{a}'\in
  A_1\times\dots\times A_d$. Therefore, we can use the query procedure
  to check the existence of such a vector
  $\mathbf{a}'=(a'_1,\dots,a'_d)$ and then we can construct
  $\mathbf{b}'=(b'_1,\dots,b'_d)\in B_1\times \dots\times B_d$ by
  letting $b'_j$ be an arbitrary neighbor of $a'_j$ in $B_j$.

  We consider every empty node $u$ (in arbitrary order) and use the
  method described in the previous paragraph to find an
  $\mathbf{a}'\in \R$ and a $d$-tuple $\mathbf{b}'$ compatible with
  $\mathbf{a}'$ that fails at $u$. If there is such a pair $(\mathbf{a}',\mathbf{b}')$,
  then we replace $u$ with $(\mathbf{a}',\mathbf{b}')$ and add $d$
  empty children to this node. We repeat this step until no such
  $\mathbf{b}'$ can be found for any empty node $u$.  At this point,
  let us define the set $\R'=\{\mathbf{a} \mid
  \text{$(\mathbf{a},\mathbf{b})$ appears in a nonempty node}\}$, that
  is, $\R'$ contains the first part of every pair appearing in
  the tree. Clearly, we have $\R'\subseteq \R$ from the way new
  nonempty nodes are introduced into the tree. Moreover, we claim that
  $\R'$ is a representative subset of $\R$. Indeed, for every
  adjacent pair $\mathbf{a}'\in \R$ and $\mathbf{b}'\in V(H)^d$, the
  search procedure for $\mathbf{b}'$ cannot fail at any empty node $u$
  (otherwise we would have extended the tree at $u$) and therefore the
  tree contains a pair $(\mathbf{a},\mathbf{b})$ such that
  $\mathbf{a}\in \R'$ is compatible with $\mathbf{b}'$.

  We prove that if the height $h$ of the tree reaches $d(4^{4r})$,
  then we can find an induced matching, a skew biclique, or a clique
  of the specified size and we can stop the algorithm. Otherwise, if
  the algorithm terminates without stopping this way, then every path
  from the root to a leaf contains less than $d(4^{4r})$ nonempty
  nodes, and hence the number of nonempty nodes is at most
  $\sum_{i=0}^{d(4^{4r})-1}d^i\le 
  (d+1)^{d(4^{4r})}=\Rmvec$. Therefore, as showed in the previous
  paragraph, we obtain a representative subset $\R'$ of size at most
  $\Rmvec$.

  Consider a path from the root to a leaf with at least $d(4^{4r})$
  nonempty nodes. Then there is a $1\le j \le d$ such that it is true
  for at least $n=4^{4r}$ nodes on the path that the path
  continues with the $j$-th child of the node. Let
  $(\mathbf{a}_1,\mathbf{b}_1)$, $\dots$,
  $(\mathbf{a}_n,\mathbf{b}_n)$ be $n$ such nodes, ordered as they
  appear on the path from the root to the leaf. Let $a_i$ and $b_i$ be
  the $j$-th coordinate of $\mathbf{a}_i$ and $\mathbf{b}_i$,
  respectively. As the pair $(\mathbf{a}_i,\mathbf{b}_i)$ is
  compatible, we have that $a_i$ and $b_i$ are adjacent. Furthermore,
  consider the execution of the search procedure when $\mathbf{b}_i$
  failed and the node $(\mathbf{a}_i,\mathbf{b}_i)$ was added to the
  tree. Note that the tree is extended only by replacing leaf nodes,
  thus the ancestors of $(\mathbf{a}_i,\mathbf{b}_i)$ did not change
  after they were added to the tree. Therefore, for every $1\le i' <
  i$, the search procedure for $\mathbf{b}_{i}$ encountered the node
  $(\mathbf{a}_{i'},\mathbf{b}_{i'})$ and then continued the search
  with the $j$-th child of this node. This means that the $j$-th
  coordinate of $\mathbf{b}_{i}$ is not adjacent to the $j$-th
  coordinate of $\mathbf{a}_{i'}$. That is, we get that $a_{i'}$ is
  not adjacent to $b_{i}$ for every $1\le i' < i \le n$. Therefore,
  the conditions of Lemma~\ref{lem:noskew} hold, and we can use it to
  return an induced matching, a skew biclique, or a clique.
\end{proof}

\subsection{Representative sets for disjoint paths}\label{sec:repr-sets-disj}
We can describe a solution as a subgraph $P$ of $G$ that is the union of $k$ pairwise-disjoint valid paths. A {\em partial solution} is any subgraph of $G$ that is the union of disjoint paths (possibly more than $k$ or possibly with endpoints not in $T$).
Given a solution $P$ and a separation $(A,B)$ of $G$,
the {\em partial solution} of $P$ at $(A,B)$ is the
subgraph $\Pi$ of $P$ induced by $V(A)$.
To define representative sets of partial solutions, we need to define first what it means to replace a partial solution with another:
\begin{definition}
Let $P$ be a solution, let $(A,B)$ be a separation of $G$, and let $\Pi$ be a partial solution at $(A,B)$. We say that $\Pi$ is {\em replacable} at $(A,B)$
in $P$ if the subgraph $P'=(P-E(G[V(A)]))\cup \Pi$ is a valid solution. 
In this case, we say that $P'$ is obtained by {\em replacing} $\Pi$ into $P$ at $(A,B)$.
\end{definition}

\begin{definition}
  Let $\R$ be a set of partial solutions at $(A,B)$. We say that $\R$ is
  {\em representative} if for every solution $P$, there is a $\Pi\in
  \R$ that is replacable into $P$ at $(A,B)$. We say that a subset
  $\R'\subseteq \R$ {\em represents} $\R$ if for every solution $P$
  whose partial solution at $(A,B)$ is in $\R$, there is a $\Pi\in
  \R'$ that is replacable into $P$ at $(A,B)$.
\end{definition}

The main result of the section is the following:
\begin{lemma}[Irrelevant terminal or clique from a separation]\label{lem:repbound}
  For every $k$ and $z$, there is a constant $\Rm$ such that the
  following holds. Let $(A,B)$ be a separation of order $z$ such that
  $V(A)\cap V(B)$ is disjoint from $T$ and $H$ has no edge in $V(A)$.
  Let $\R$ contain the partial solution at $(A,B)$ for every solution. In time $f(r,z)\cdot n^{O(1)}$, we can either
\begin{itemize}
\item find a representative set $\R'\subseteq \R$ of partial solutions at $(A,B)$ with $|\R'|\le \Rm$,
\item return an induced matching of size $r$ in $H$, 
\item return an induced skew biclique on $r+r$ in $H$, or
\item return a clique of $r$ in $H$.
\end{itemize}
\end{lemma}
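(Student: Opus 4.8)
The plan is to classify the partial solutions at $(A,B)$ into boundedly many \emph{types} according to their structure on the separator, and then, within each type, to apply the abstract representative-set statement for vectors (Lemma~\ref{lem:vectorrepbound}). Write $S=V(A)\cap V(B)$, so $|S|\le z$ and $S\cap T=\emptyset$. Since every terminal has degree $1$ and $H$ has no edge inside $V(A)$, every valid path has at most one terminal endpoint in $V(A)$; in particular a solution has at most $k$ valid paths meeting $V(A)$ in a terminal, and no vertex of $S$ is joined through $V(A)$ to two terminals of $V(A)$. For a solution $P$ with partial solution $\Pi$ at $(A,B)$, the graph $\Pi$ is a union of paths, each internal vertex lying in $V(A)\setminus S$ and each component meeting $S$ (a component avoiding $S$ would be a valid path inside $V(A)$). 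Contracting each component of $\Pi$ to a single edge and marking the terminal endpoints yields the \emph{type} $\theta(\Pi)$: a disjoint union of paths on the vertex set $S\cup\{\sigma_1,\dots,\sigma_d\}$ with $d\le|S|\le z$ extra \emph{slot} vertices, in which every $\sigma_i$ is a degree-one endpoint whose neighbour lies in $S$, together with the record of the $P$-degree ($0$, $1$ or $2$) of every vertex of $S$. The number of types is bounded by a function of $z$ alone.

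Fix a type $\theta$ with slots $\sigma_1,\dots,\sigma_d$. A partial solution $\Pi\in\R$ with $\theta(\Pi)=\theta$ is specified by $\theta$ together with the tuple $\mathbf a(\Pi)=(a_1,\dots,a_d)$ of terminals of $V(A)$ occupying the slots; let $\R_\theta\subseteq V(H)^d$ be the set of these tuples. The crucial equivalence is: if $P$ is a solution whose partial solution at $(A,B)$ has type $\theta$, then tracing, for each $i$, the $P$-path out of the $S$-vertex at $\sigma_i$ into $B$ and following it (alternating between $B$ and the fixed $\theta$-structure on $S$) reaches a terminal $b_i\in V(B)$; this defines a tuple $\mathbf b^{\theta}_P=(b_1,\dots,b_d)$ depending only on $\theta$ and on the part of $P$ inside $B$, and for any $\mathbf a'\in\R_\theta$ and any $\Pi'\in\R$ with $\theta(\Pi')=\theta$ and $\mathbf a(\Pi')=\mathbf a'$, replacing $\Pi'$ into $P$ yields again a union of $k$ pairwise-disjoint valid paths if and only if $\mathbf a'$ and $\mathbf b^{\theta}_P$ are compatible in the sense of Definition~\ref{def:repvector}. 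Indeed, since $\theta$ is unchanged the $S$-degrees and the (acyclic) path structure of $P$ are preserved, no two endpoints of a resulting path lie in $V(A)$, and the endpoint pair of each resulting path equals that of $P$ except that the pair through $\sigma_i$ becomes $\{a'_i,b_i\}$; validity of that path is exactly $a'_i\sim_H b_i$. In particular $\mathbf a(\Pi_P)$ is compatible with $\mathbf b^{\theta}_P$, so $\R_\theta\neq\emptyset$.

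We now run Lemma~\ref{lem:vectorrepbound} on $\R_\theta\subseteq V(H)^d$ with $d\le z$: either it outputs an induced matching of size $r$, an induced skew biclique of size $r+r$, or an $r$-clique in $H$, in which case we return it; or it outputs a representative subset $\R'_\theta\subseteq\R_\theta$ with $|\R'_\theta|\le\Rmvec\le R^{\textup{vec}}_{z,r}$. The query procedure it needs --- given $A_1,\dots,A_d\subseteq V(H)$, produce a tuple in $(A_1\times\dots\times A_d)\cap\R_\theta$ --- is implemented by guessing the at most $z$ concrete vertices of $S$ realising $\theta$, contracting each set $A_i\cap T\cap V(A)$ to a single vertex so that the slots acquire fixed endpoints and no enumeration of terminals is needed, and reducing the existence of a compatible solution to boundedly many instances of \disjointpath solved via Theorem~\ref{thm:kdisjointpath} (the tuple is recovered by repeatedly shrinking the $A_i$); this runs in $f(k,r,z)\cdot n^{O(1)}$ time. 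For each $\mathbf a'\in\R'_\theta$ pick, by one more such query, some $\Pi_{\theta,\mathbf a'}\in\R$ with $\theta(\Pi_{\theta,\mathbf a'})=\theta$ and $\mathbf a(\Pi_{\theta,\mathbf a'})=\mathbf a'$, and set $\R'=\bigcup_\theta\{\Pi_{\theta,\mathbf a'}:\mathbf a'\in\R'_\theta\}$. Then $|\R'|\le(\text{number of types})\cdot R^{\textup{vec}}_{z,r}=:\Rm$, a constant depending only on $z$ and $r$. To see $\R'$ is representative, take any solution $P$, let $\theta$ be the type of its partial solution and $\mathbf b^{\theta}_P$ the associated tuple; since $\mathbf a(\Pi_P)\in\R_\theta$ is compatible with $\mathbf b^{\theta}_P$, the representative property of $\R'_\theta$ gives $\mathbf a'\in\R'_\theta$ compatible with $\mathbf b^{\theta}_P$, and then $\Pi_{\theta,\mathbf a'}\in\R'$ is replaceable into $P$ by the equivalence above.

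The step I expect to be the main obstacle is calibrating the notion of \emph{type} precisely: it must forget the identities of the terminals in $V(A)$ --- so that within a type the partial solutions correspond to candidate compatible vectors --- while retaining exactly enough of the separator structure that ``replaceable into $P$'' collapses to coordinatewise $H$-adjacency of $\mathbf a'$ and $\mathbf b^{\theta}_P$; the routine but error-prone part is verifying that replacing a same-type partial solution preserves acyclicity, preserves the count $k$, and never manufactures a valid path with both endpoints in $V(A)$. A second technical point is the terminal-enumeration-free implementation of the query procedure feeding Lemma~\ref{lem:vectorrepbound}, which is where the \disjointpath algorithm (and hence the tower-of-exponentials dependence on $k$) enters.
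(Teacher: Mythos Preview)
Your proposal is essentially the paper's proof: classify partial solutions by a bounded number of ``types'' recording the combinatorial structure on the separator $S$, within each type apply Lemma~\ref{lem:vectorrepbound} to the tuples of $V(A)$-terminal endpoints, implement the query with a \disjointpath instance via Theorem~\ref{thm:kdisjointpath}, and take the union over types. The paper's type is the triple $(S_0,J,M)$ (isolated $S$-vertices, join vertices, and the $S$-matching from the two-$S$-endpoint paths), which is exactly the information your contracted graph on $S\cup\{\sigma_1,\dots,\sigma_d\}$ records; your additional $P$-degree field is redundant (it is determined by $P$ outside $A$ and does not change under replacement), and your notation $\theta(\Pi)$ is slightly inconsistent since it then depends on $P$, but this only over-partitions and is harmless. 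Two small implementation points: the ``guessing the concrete vertices of $S$ realising $\theta$'' is unnecessary since $S$ is already fixed in your type, and ``contracting $A_i\cap T\cap V(A)$'' runs into trouble when the $A_i$ overlap---the paper instead attaches a fresh vertex $s_j$ adjacent to $A_j$, which avoids this and also lets you read off the tuple directly from the \disjointpath solution without your ``repeatedly shrinking'' step.
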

\begin{proof}
  Let $S=V(A)\cap V(B)$. Let $P$ be a solution and let $\Pi$ be the
  partial solution of $P$ at $(A,B)$.  As $H$ has no edge in $V(A)$,
  every path of the partial solution contains a vertex of $S$, hence
  there are at most $z$ paths in the partial solution.  Each path $P$
  can be classified into exactly one of the following three classes (recall that $S\cap T=\emptyset$); see Figure~\ref{fig:partial}:
\begin{enumerate}[label=(C\arabic*),start=0]
\item \label{t:0} $P$ consists of single vertex of $S$.
\item \label{t:1} $P$ has length at least one and has one endpoint in $T$ and one endpoint in $S$.
\item \label{t:2} $P$ has length at least one and has both endpoints in $S$.
\end{enumerate}
\begin{figure}
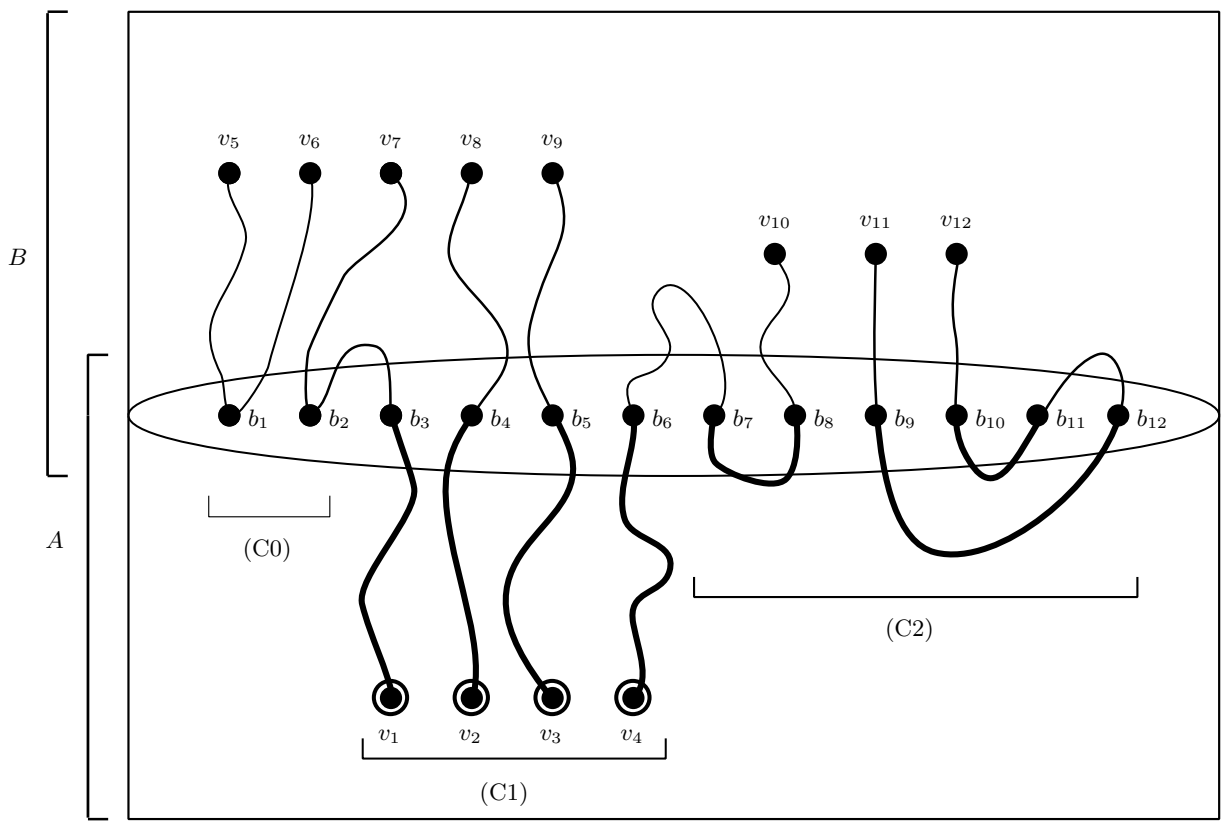

\begin{center}
{\footnotesize \svg{\linewidth}{partial}}
\caption{A partial solution at $(A,B)$
  (Lemma~\ref{lem:repbound}).  Set $S_0=\{b_1,b_2\}$ contains the vertices of the two paths of type \ref{t:0}. There are four paths of class
  \ref{t:1}, connecting $\{v_1,v_2,v_3,v_4\}$ to
  $J=\{b_3,b_4,b_5,b_6\}$. The three paths of type \ref{t:2} define the matching $M=\{b_7b_8, b_9b_{12},b_{10}b_{11}\}$. Assuming the ordering
  $(b_{3},b_{4},b_{4},b_{6})$ of $J$, the inner vector is
  $(v_1,v_{2},v_{3},v_4)$ and the outer vector is
  $(v_{7},v_{8},v_9,v_{10})$.}\label{fig:partial}
\end{center}
\end{figure}

The paths of class~\ref{t:2} define a (not necessarily perfect) 
matching 
$M$ of $S$ the obvious way. We define the {\em join vertex} of a path
$P$ of class \ref{t:1} to be its endpoint in $S$. 
Let $J\subseteq S$ be the join vertices of
the paths of class~\ref{t:1}. We
define the {\em type} of a partial solution $\Pi_A$ to be the
triple $\tau=(S_0,J,M)$, where
\begin{itemize}
\item $S_0\subseteq S$ is the set of vertices used by paths of class~\ref{t:0}.
\item $J\subseteq S$ is the set of join vertices of paths of class~\ref{t:1}, 
\item $M$ is the matching of $S$ defined above based on the paths of class~\ref{t:2}.
\end{itemize}
Note that the number of types is at most $T:=2^{z}\cdot 2^{z}\cdot z^z$. Let $\R_\tau\subseteq \R$ contain every partial
solution of type $\tau$. For every type $\tau$, we construct a
representative subset $\R'_\tau\subseteq \R_\tau$. It is clear
that the union $\R'$ of $\R'_\tau$ for every type $\tau$ is
representative subset of $\R$.

We construct $\R'_\tau$ for a type $\tau=(S_0,J,M)$ the following way. Let us fix an ordering of $J=(v_1,\dots, v_d)$ (note that $d\le z$). For a partial
solution $\Pi$ of type $\tau$, let $P_j$ be the path of
class~\ref{t:1} whose join vertex is $v_j$. Let $a_j$ be the other endpoint of
$P_j$.  We
define the $d$-tuple $\mathbf{a}=(a_1,\dots,a_d)\in V(H)^d$ as the
{\em inner vector} of the partial solution $\Pi$.
\begin{figure}
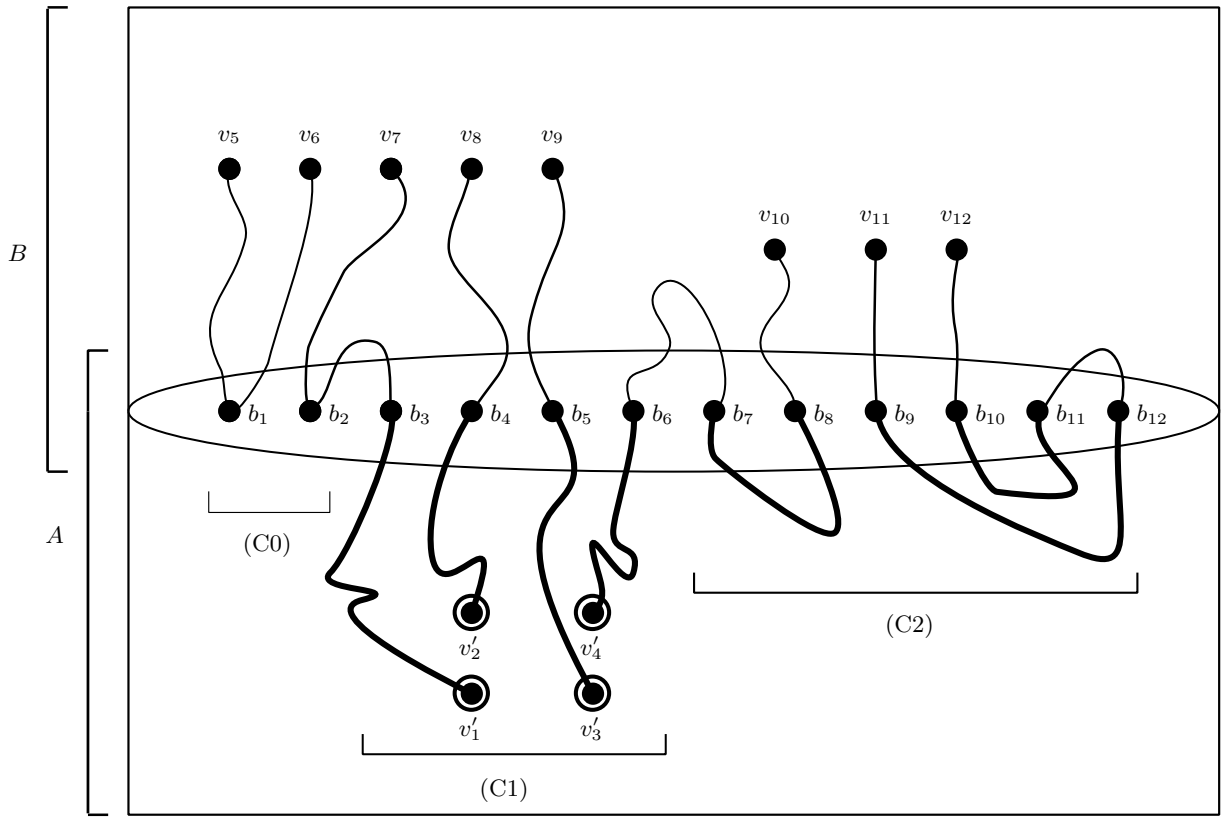

\begin{center}
{\footnotesize \svg{\linewidth}{partial2}}
\caption{A partial solution having the same type as
  the partial solution in Figure~\ref{fig:partial}, and replacing it at
  $(A,B)$. The inner vector is $(v'_1,v'_{2},v'_{3},v'_4)$.
  Assuming $H$ contains the edges $v'_1v_7$, $v'_2v_8$, $v'_3v_9$,
  $v'_4v_{10}$, the
  result of the replacement is a valid solution.
}\label{fig:partial2}
\end{center}
\end{figure}

Let $\bR_\tau$ be the inner vectors of the partial solutions in
$\R_\tau$.  We would like to invoke Lemma~\ref{lem:vectorrepbound} on
the set $\bR_\tau$.  For this purpose, we need to implement the query
procedure. We need to test the existence of a partial solution of type
$\tau$ whose inner vector is in $A_1\times\dots\times A_d$. We reduce
this question to solving an instance of the $k$-disjoint paths
problem.  As we have observed earlier, each partial solution of type
$\tau$ consist of a set of at most $z$ vertex-disjoint paths.  Let us
start with the graph $A-S_0$: we remove the set $S_0$, as it is
reserved for paths of class~\ref{t:0}.  For every pair $(v_1,v_2)$ in
the matching $M$, we introduce a corresponding pair in the constructed
\disjointpath instance: the paths of the solution connecting these
pairs will correspond to the requested paths of class~\ref{t:2} in the
partial solution.  To handle paths of type \ref{t:1}, let us introduce
a vertex $s_j$ adjacent to every vertex of $A_j$ for every $1\le j \le
d$. Then we specify the pairs $(s_j,v_j)$ for every $1\le j \le d$
(recall that the $v_j$'s are the vertices of $J$).  Let us use the
algorithm of Theorem~\ref{thm:kdisjointpath} to find vertex-disjoint
paths with the specified endpoints. If such a collection of disjoint
paths exist, then we obtain, after removing the vertices $s_1$, $\dots$, $s_d$, a
set of disjoint paths in $A$.  These paths form a partial solution of
type $\tau$ whose inner vector is in $A_1\times\dots \times A_d$,
hence the query procedure can return this partial solution. Conversely,
if there exists a partial solution $\Pi$ of type $\tau$ having inner
vector in $A_1\times \dots \times A_d$, then it gives a solution for
the constructed instance of \textsc{Disjoint Paths}.  This implies
that the algorithm of Theorem~\ref{thm:kdisjointpath} finds a solution
for this instance of \textsc{Disjoint Paths}, resulting in a
partial solution $\Pi'$ of type $\tau$ and inner vector in $A_1\times
\dots \times A_d$.

Using the query procedure described in the previous paragraph, we may
invoke Lemma~\ref{lem:vectorrepbound} on the set $\bR_\tau$.  If we
get an induced matching, induced skew biclique, or a clique,
then we are done. Otherwise, we get a representative subset
$\bR'_\tau$ of $\bR_\tau$ having size at most
$\Rmvec$. Note that each vector $\mathbf{a}$ introduced
into $\bR'_\tau$ was returned by the query procedure, which means that
the query procedure found a partial solution of type $\tau$ and inner
vector $\mathbf{a}$; let $\R'_\tau\subseteq \R_\tau$ contain every
such partial solution. Finally, we construct the set $\R'$ as the
union of $\R'_\tau$ for every type $\tau$; as both the number of types
and the size of each $\R'_\tau$ can be bounded by a function of $r$ and $z$ only, the size of $\R'$ can be bounded by a constant
$\Rm$ depending only on $r$ and $z$.

We claim that $\R'$ is also a representative set of partial solutions
at $(A,B)$. Let $P$ be a solution and let $\Pi\in \R$ be its partial solution at $(A,B)$. Suppose that $\Pi$ has type $\tau=(S_0,M,J)$ and let
$\mathbf{a}$ be the inner vector of $\Pi$. Recall that we
fixed an ordering $(v_1,\dots,v_j)$ of $J$, there is a path $P_j$
of type \ref{t:1} with endpoints $a_j$ and $v_j$ for every $1\le j
\le d$, and the inner vector is $(a_1,\dots, a_j)$.  For every $1\le j
\le d$, let $b_j$ be the other endpoint of the path of $a_j$ in the solution $P$. We define $\mathbf{b}=(b_1,\dots,b_d)\in V(H)^d$ as
the {\em outer vector} of the partial solution $\Pi$ in
$P$. Observe that the inner vector $\mathbf{a}$ and the outer
vector $\mathbf{b}$ are compatible.  As $\mathbf{a}\in \bR_\tau$ and
$\bR'_\tau$ is a representative subset of $\bR_\tau$, there is an
$\mathbf{a}'\in \bR_\tau$ that is also compatible with $\mathbf{b}$. Thus
there is a partial solution $\Pi'\in \R'_\tau\subseteq
\R'$ having inner vector $\mathbf{a}'$.

We claim that replacing $\Pi'$ at $(A,B)$ in $P$
gives a valid solution $P'$.  If a path
$P$ of $Q$ has both endpoints outside $V(A)$, then there is a
corresponding valid path after the modification: as the two partial
solutions have the same type, the set $S_0$ and the matching $M$ are the same in both
of them. Therefore, whenever $P$ has an $x-y$ subpath in $A$ for some
$x,y\in S$, then this subpath is a path of class \ref{t:0} or \ref{t:2} in $\Pi$,
hence there is a path with the same endpoints in
$\Pi'$.
If a path of $Q$ has one endpoint in $V(A)$, then the other endpoint is outside $V(A)$ (as $H$ has no edge in $V(A)$). Therefore, 
the endpoint of $Q$ in $V(A)$ is the endpoint of a path of
class~\ref{t:1} of $\Pi$. Suppose that this path connects
$a_j$ to $v_j\in J$ and the other endpoint of the path $Q$ is $b_j$. As the
inner vector $\mathbf{a}'$ of $\Pi'$ is compatible with
outer vector $\mathbf{b}$, we get that $a'_{j}$ and $b_j$ are
adjacent in $H$. It follows that $P'$ contains a valid path from
$a'_j$ to $b_j$ (note that this path may reenter $V(A)$ several times,
thus we need to use again that $S_0$ and $M$ are the same in both partial
solutions).

We have shown that $\Pi'$ is replacable in $P$, resulting in a
solution $P'$. Thus we have shown that $\R'$ is a
representative set of partial solutions.
\end{proof}

We are now able to present the proof of Lemma~\ref{lem:irrelevantsep}.

\restatelemma{\restateirrelevantsep*}
\begin{proof}
Let $r^*=\max\{r, 10k^2\}$ and let $\smsep:=k\cdot \Rmx$, where $\Rmx$ is the constant in
  Lemma~\ref{lem:repbound}.  We invoke the algorithm of
  Lemma~\ref{lem:repbound} on the separation $(A,B)$.  If it returns
  an induced matching of size $r^*$ or an induced skew biclique on $r^*+r^*$ vertices, then we are done (as $r^*\ge r$). If
  Lemma~\ref{lem:repbound} returns a clique of size $r^*\ge 10k^2$, then we invoke
  Lemma~\ref{lem:cliquereduce}, which either returns $k$-disjoint
  valid paths or an irrelevant terminal; we are done in both
  cases.  Otherwise, let $\R$ be the representative set of size at
  most $\Rmx$ returned by the algorithm of
  Lemma~\ref{lem:repbound}. Each partial solution of $\R$ uses at most $k$ terminals of $V(A)\cap T$ as
  endpoints. Therefore, if we let $T^*$ contain every terminal
  that is an endpoint of a path in one of the partial solutions in $\R$, then we
  have $|T^*|\le k|\R|\le k\cdot \Rmx=\smsep$. The assumption $|V(A)\cap
  T|>\smsep$ implies that there is a $t\in (V(A)\cap T)\setminus T^*$. We
  claim that removing $t$ from the set of terminals does not change
  the solvability of the instance.  Let $P$ be a solution and let
  $\Pi$ be its partial solution at $(A,B)$. If $t$ is not the endpoint
  of path in $P$, then the solution remains a valid even after
  removing $t$ from the set of terminals.  Otherwise, as $\R$ is
  representative, there is a partial solution $\Pi'\in \R$ that is
  replacable into $P$; let $P'$ be the resulting solution. By the
  definition, if $P'$ has a path ending in $V(A)\cap T$, then this
  terminal is endpoint of a path in $\Pi'$ and hence in
  $T^*$. Therefore, $t\not\in T^*$ is not the endpoint of any of the
  paths in $P'$.  This means that $P'$ is a valid solution after
  removing $t$ from the set of terminals and hence $t$ is an
  irrelevant terminal.
\end{proof}

\subsection{Reducing the number of components}
\label{sec:boundcomponents}

With repeated applications of Lemma~\ref{lem:irrelevantsep}, we can
reduce the number of terminals in each component to at most a constant
$\smsep$. The final step of the algorithm is to reduce the number of
components that contain terminals. (We remark that it would be possible to
reduce also the number of components {\em not} having any terminals at all, as their only role is to provide connectivity to $Z$, but we do not need this stronger claim here.)

\newcommand{\tb}{q}
\begin{lemma}[Reducing the number of components of $G-Z$]\label{lem:boundcomp}
  Let $Z\subseteq V(G)$ be a set of vertices disjoint from $T$ such
  that for every component $C$ of $G-Z$, we have $|T\cap V(C)|\le \tb$ and the set $T\cap V(C)$ is
  independent in $H$. If $|T|>100|Z|^4\tb^2$, then we can identify an irrelevant
  terminal in polynomial time.
\end{lemma}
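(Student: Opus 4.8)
The plan is to argue that if $G-Z$ has many components containing terminals, then most of these components are "redundant" in the sense that they provide connectivity options to $Z$ that are duplicated many times over, so one terminal in one of these components can be removed. The key observation is that every valid path either lies entirely within one component of $G-Z$ (impossible, since $T\cap V(C)$ is independent in $H$) or passes through $Z$. Hence every valid path uses at least one vertex of $Z$, and in fact the portion of a valid path inside any single component $C$ of $G-Z$ is a collection of subpaths each of which connects either a terminal of $C$ to $N(C)\subseteq Z$ or two vertices of $N(C)$. So each component $C$ "interacts" with a solution only through the set $N(C)\subseteq Z$ of size at most $|Z|$, and a solution uses at most $k$ paths, so it touches at most $2k$ vertices of $Z$ total.

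**The marking argument.**

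First I would set up a marking scheme. Consider a hypothetical solution $P$ consisting of $k$ disjoint valid paths. For each component $C$ of $G-Z$ that is used by $P$, the restriction of $P$ to $V(C)$ is characterized by (i) the subset $N'\subseteq N(C)$ of boundary vertices touched, (ii) how these boundary vertices and the terminals of $C$ used are grouped into subpaths, and (iii) which terminals of $C$ are used. Since $|N(C)|\le |Z|$ and $|T\cap V(C)|\le q$, the number of possible "interaction types" of a single component — ignoring the identity of which $Z$-vertices, but recording which terminals among the $\le q$ in $C$ are used and how everything is paired up — is bounded by a function $g(|Z|,q)$, in fact something like $2^{|Z|}\cdot (|Z|+q)^{|Z|+q}\cdot 2^q$, a constant. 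Now, for each subset $W\subseteq Z$ with $|W|\le 2k$ and each "interaction pattern" relative to $W$, I would mark up to $k$ components $C$ that realize that pattern (i.e. such that $G[V(C)\cup W]$ contains the relevant system of disjoint subpaths with endpoints in $W$ and in $T\cap V(C)$, using the algorithm of Theorem~\ref{thm:kdisjointpath} to test realizability). The number of marked components is at most $\binom{|Z|}{\le 2k}\cdot g(|Z|,q)\cdot k$, a constant depending only on $|Z|$, $k$, $q$.

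**Finding the irrelevant terminal.**

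Since $|T|>100|Z|^4q^2$ and each component has at most $q$ terminals, there are more than $100|Z|^4q$ components containing a terminal; choosing the constants so this exceeds the number of marked components, there is an unmarked component $C_0$ with $T\cap V(C_0)\neq\emptyset$; pick any terminal $t\in T\cap V(C_0)$. I claim $t$ is irrelevant. Given a solution $P$ using $t$, the path $P_t$ through $t$ enters and leaves $C_0$ through $N(C_0)\subseteq Z$; let $W=Z\cap V(P)$, so $|W|\le 2k$, and let $\sigma$ be the interaction pattern of $C_0$ in $P$ relative to $W\cap N(C_0)$. Since $C_0$ is unmarked, fewer than $k$ components were marked for $(W\cap N(C_0), \sigma)$ — but actually the cleaner statement is: at least one of the $k$ components marked for $(W\cap N(C_0),\sigma)$ is disjoint from $V(P)$, because $P$ meets at most $k$ components other than $C_0$ (each of its $k$ paths meets components only along $Z$, but the number of distinct components a single path visits could be larger — this is the subtle point). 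To handle that, I would instead mark, for each relevant $(W,\sigma)$, enough components that at least one avoids any solution: a solution's $k$ paths together visit at most $k$ "runs" outside $Z$... no — one must count components hit, which is bounded by (number of maximal subpaths in $G-Z$) $\le (k + |Z|)$ roughly, since each excursion out of $Z$ is a subpath between two consecutive $Z$-vertices on some path. So marking $k+|Z|+1$ components per pattern suffices; then some marked component $C'$ with pattern $\sigma$ relative to $W\cap N(C_0)$ avoids $V(P)$ entirely, and we reroute the part of $P$ inside $C_0$ into $C'$ (matching the pattern, which records which terminals are used and how the $W$-vertices are connected), obtaining a solution avoiding $t$. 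Conversely any solution avoiding $t$ is already a solution, so $t$ is irrelevant.

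**Main obstacle.**

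The delicate point is correctly bounding the number of components of $G-Z$ that a fixed solution can touch and ensuring the interaction pattern carries exactly the combinatorial data needed to transplant the solution's behavior from $C_0$ to a fresh component $C'$ — in particular that $C'$ was marked precisely for realizability of that pattern with those boundary vertices, tested via Theorem~\ref{thm:kdisjointpath}. Getting the constant $100|Z|^4q^2$ to dominate the count $\binom{|Z|}{\le 2k}\cdot g(|Z|,q)\cdot(k+|Z|+1)$ is then just bookkeeping, and here one uses that $k\le |Z|$ up to constants (since $Z$ hits every valid path, a set of $k$ disjoint valid paths forces $|Z|\ge k$), so the whole marked count is bounded by a polynomial in $|Z|$ and $q$ that fits under $100|Z|^4q^2$ for the chosen threshold.
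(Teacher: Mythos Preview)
Your rerouting argument has a genuine gap: the ``interaction pattern'' you record for a component $C_0$ does not carry the $H$-adjacency information needed to make the transplanted solution valid. Concretely, suppose the path through $t\in C_0$ has its other endpoint at a terminal $u$ in some other component. When you move the $C_0$-segment to a fresh component $C'$ realizing the same pattern, the terminal $t'\in C'$ that replaces $t$ must be adjacent to $u$ in $H$---but your pattern only records ``which of the $\le q$ terminals of the component are used and how the boundary vertices are paired,'' which says nothing about $H$-neighbourhoods. Two components can realize identical patterns while their terminals have completely different demand-graph neighbourhoods, so the rerouted path need not be valid. (Encoding $H$-neighbourhoods into the pattern would blow the count up to something depending on $|T|$, defeating the purpose.)

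The paper's argument is organized differently precisely to sidestep this. It works not with per-component patterns but with ordered pairs $(z_1,z_2)\in Z^2$ and, for each such pair, with the set $\T_{(z_1,z_2)}$ of $H$-adjacent terminal pairs $(t_1,t_2)$ where $t_i$ reaches $z_i$ without touching $Z$. The marking then either (i) finds $b=2|Z|q+1$ disjoint such pairs, so that some pair $(t_1,t_2)$ lies in components untouched by the solution and we can reroute \emph{both} endpoints simultaneously to $t_1,t_2$; or (ii) finds a small hitting set $X$, in which case the other endpoint $u$ of the path through $t$ must lie in $X$, and we have pre-marked $b$ candidates $t^*$ with $(t^*,u)\in\T_{(z_1,z_2)}$---i.e.\ $t^*$ is already known to be $H$-adjacent to $u$. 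In both cases $H$-adjacency is built into what was marked. This also keeps the count polynomial: only $|Z|^2$ pairs $(z_1,z_2)$, and $O(b^2)=O(|Z|^2q^2)$ marks per pair, matching the stated $100|Z|^4q^2$ bound, whereas your pattern count $2^{|Z|}\cdot(|Z|+q)^{|Z|+q}$ is exponential and cannot fit under a polynomial in $|Z|,q$.
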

\begin{proof}
 For every ordered pair $(z_{1},z_{2})$ of vertices in $Z$ (possibly with
  $z_{1}=z_{2}$), we mark some of the terminals. We
  proceed the following way for the pair $(z_{1},z_{2})$.  Let
  $\T_{(z_{1},z_{2})}$ contain every ordered pair $(t_1,t_2)$ of terminals with
  the following properties:
\begin{itemize}
\item $t_1$ and $t_2$ are adjacent in $H$.
\item There is a $t_1-z_1$ path whose internal vertices are disjoint from $Z$.
\item There is a $t_2-z_2$ path whose internal vertices are disjoint from $Z$.
\end{itemize}
Clearly, the collection $\T_{(z_{1},z_{2})}$ can be constructed in
polynomial time. Note that by the requirement that $t_1$ and $t_2$ are
adjacent in $H$, we have that $t_1$ and $t_2$ are in different
components of $G-Z$ for every $(t_1,t_2)\in \T_{(z_{1},z_{2})}$.

 Let $b=2|Z|\tb+1$.  First, let us select greedily a maximal
collection of pairs from $\T_{(z_{1},z_{2})}$ such that every
terminal appears in at most one select pair.  If we find $b$ such pairs,
then we mark the (exactly) $2b$ terminals appearing in these pairs
and we are done with processing $(z_{1},z_{2})$. If we do not find
$b$ such pairs, then this means that we can find a set $X$ of at most
$2(b-1)$ terminals such that every pair of $\T_{(z_{1},z_{2})}$
contains a terminal from $X$ (either at the first or second
coordinate). Let us mark every terminal in $X$. Furthermore, for
every $u\in X$, let us mark $b$ terminals $t^*$ such that $(t^*,u)\in
\T_{(z_{1},z_{2})}$ (or all of them if there are less than $b$
such terminals). This completes the description of the marking procedure. We are considering $|Z|^2$ pairs $(z_{1},z_{2})$ and for each pair, we mark at most $\max\{2b, 2(b-1)\cdot (b+1)\}=2(b-1)\cdot (b+1)$ terminals.  Therefore, if there are more than $100|Z|^4\tb^2>|Z|^2\cdot 2(b-1)(b+1)$ terminals, then there is a unmarked terminal. We claim that any unmarked terminal is irrelevant.

Let $t$ be an unmarked terminal and consider a
solution to the instance where $t$ is the endpoint of a
path $P$ of the solution; let $u$ be the other endpoint of $P$. By assumption, $G-Z$ has no valid path and $Z$ is disjoint from $T$, thus path $P$ contains at least one vertex of $Z$. 
Starting at
$v$, let $z_1$ and $z_2$ be the first and last vertices of $P$ in $Z$,
respectively (it is possible that $z_1=z_2$).  Then path $P$ shows
that $(t,u)$ appears in the collection $\T_{(z_1,z_2)}$. Consider
first the case when the marking procedure for $(z_1,z_2)$ found $b$
pairs not sharing any terminals. Observe that the paths of the
solution intersect at most $2|Z|$ components of $G-Z$: each path contains at
least one vertex of $Z$ and these $|Z|$ vertices can break the paths
of the solution into at most $2|Z|$ subpaths. This means that there
are at most $2|Z|\tb$ terminals that are in a component of $G-Z$ intersected by
the solution. Therefore, as we have found $b=2|Z|\tb+1$ pairs, there
is a pair $(t_1,t_2)$ among them such that the components of $t_1$ and
$t_2$ in $G-Z$ are disjoint from the solution. As $(t_1,t_2)\in
\T_{(z_1,z_2)}$, the definition of $\T_{(z_1,z_2)}$ implies that $t_1$
and $t_2$ are adjacent in $H$ (which means that they are in different components of $G-Z$). Furthermore, for $i=1,2$, we can choose a $t_i-z_i$
path $P_i$ whose internal vertices are disjoint from $Z$. This means that the internal vertices of $P_i$ are in the  same component of $G-Z$ as $t_i$, implying that they are disjoint from the solution. We modify the solution:
we replace the $v-z_1$ subpath of $P$ with the $t_1-z_1$ path $P_1$
and the $z_2-u$ subpath of $P$ with the $z_2-t_2$ path $P_2$. This gives 
a valid $t_1-t_2$ path that is disjoint from every other path in the
solution.  Therefore, we have found a solution not involving the
terminal $t$.

Consider now the case when the marking procedure did not find $b$
pairs and hence found a set $X$ of at most $2(b-1)$ terminals. As
$(t,u)\in \T_{(z_1,z_2)}$, either $t$ or $u$ is in $X$. If $t$ is in
$X$, then we marked $t$, thus let us assume that $u$ is in $X$. Then
we marked some terminals $t^*$ for which $(t^*,u)$ is in
$\T_{(z_1,z_1)}$. If $t$ itself was not marked this way, then we
marked $b=2|Z|+1$ such terminals $t^*$. As the solution intersects at
most $2|Z|$ components of $G-Z$ and each component contains at most
$\tb$ terminals, there is a marked terminal $t^*$ whose component is
disjoint from the solution and $(t^*,u)$ is in $\T_{(z_1,z_2)}$. By
the definition of $\T_{(z_1,z_2)}$, this means that $t^*$ and $u$ are
adjacent and the component of $t^*$ is adjacent to $z_1$. Let us
choose a $t^*-z_1$ path $P^*$ whose internal vertices are in the
component of $t^*$ in $G-Z$ (and hence disjoint from the solution). Let us
modify the path $P$ by replacing the $t-z_1$ subpath with the
$t^*-z_1$ subpath $P$. This way, we obtain a solution not involving
the terminal $t$ also in this case, showing that $t$ is indeed
irrelevant.
\end{proof}

We are now ready to prove Lemma~\ref{lem:irrelevantmain}.

\restatelemma{\restateirrelevantmain*}
\begin{proof}
Let $\sm:=100z^4(\smsep)^2$.
Suppose first that a component $C$ of $G-Z$ contains more than $\smsep$ terminals. Then let $(A,B)$ be the separation of $G$ with $V(A)\setminus V(B)=C$
and let us invoke the algorithm of Lemma~\ref{lem:irrelevantsep}. It either returns an irrelevant terminal, a solution with $k$ paths, an induced matching in $H$, or an induced skew biclique in $H$; in all cases, we are done. Assume therefore that every component $C$ of $G-Z$ contains at most $\smsep$ terminals. Then the algorithm of Lemma~\ref{lem:boundcomp} gives an irrelevant terminal.
\end{proof}

\fi

\iffull
\section{Hardness results: matchings}
\label{sec:hardmatching}

In this section, we prove the \classWone-hardness of \maxdisjoint if there is no restiction on the demand pattern.  We are reducing from the following problem:

\defparproblemu{\gridtiling}{For every $1\leq i,j\leq k$, a subset $S_{i,j}\subseteq [n]\times [n]$.}{$k$}{A pair $s_{i,j}\in S_{i,j}$ for every $1\le i,j\le k$ such that
\begin{itemize}[noitemsep]
\item[(i)] for every $1\le i\le k$ and $1\le j < k$, if $s_{i,j}=(x,y)$ and $s_{i,j+1}=(x',y')$, then $x=x'$, and
\item[(ii)] for every $1\le i< k$ and $1\le j \le k$, if $s_{i,j}=(x,y)$ and $s_{i+1,j}=(x',y')$, then $y=y'$.
\end{itemize}
}

\gridtiling is known to be \classWone-hard and reduction from it is a
standard technique for proving \classWone-hardness results for planar
problems (see, e.g.,
\cite{DBLP:conf/icalp/Marx12,DBLP:conf/stacs/MarxP14,DBLP:conf/soda/ChitnisHM14}).

\begin{theorem}\label{th:maxdisjointhard}
\maxdisjoint is \classWone-hard with combined parameters $k$ (the number of paths to be found) and $w$ (where $w$ is the treewidth of $G$), even on planar graphs.\end{theorem}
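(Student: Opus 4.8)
The plan is a parameterized reduction from \gridtiling, which is \classWone-hard parameterized by $k$ and is the standard source for planar \classWone-hardness; the extra care here is to make the supply graph also have treewidth bounded by a function of $k$. Given sets $S_{i,j}\subseteq[n]\times[n]$ for $1\le i,j\le k$, I would build a planar graph $G$ laid out as a $k\times k$ array of \emph{cell gadgets} $D_{i,j}$, together with $2k$ demand pairs $(s^h_i,t^h_i)$ and $(s^v_j,t^v_j)$, so that the demand graph is a matching with $2k$ edges; this is exactly the demand pattern needed for the unrestricted statement, and, via the simple reduction sketched in the introduction, for Theorem~\ref{th:matchinghardintro}. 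The target is $2k$ pairwise vertex-disjoint valid paths. The intended semantics is that the path realizing $(s^h_i,t^h_i)$ runs left to right through $D_{i,1},\dots,D_{i,k}$ and carries, in its internal routing, the common first coordinate $x_i$ of row $i$ of a tiling, while the path realizing $(s^v_j,t^v_j)$ runs top to bottom through $D_{1,j},\dots,D_{k,j}$ and carries the common second coordinate $y_j$ of column $j$.

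Each cell gadget $D_{i,j}$ has four groups of \emph{ports}: west and east ports indexed by the possible values of $x$, north and south ports indexed by the possible values of $y$. It is designed so that (a) any left-to-right path through $D_{i,j}$ entering at the west port for value $x$ must leave at the east port for the same $x$, and symmetrically for north--south and $y$; and (b) such a horizontal path for value $x$ and such a vertical path for value $y$ admit a vertex-disjoint routing inside $D_{i,j}$ if and only if $(x,y)\in S_{i,j}$. Consecutive gadgets are wired port-to-port so that the value carried by the horizontal path of row $i$ is passed unchanged from $D_{i,j}$ to $D_{i,j+1}$, and likewise for columns; finally $s^h_i$ is joined to the whole group of west input ports of $D_{i,1}$, so that $x_i$ may be chosen freely but is then pinned for the entire row, and $t^h_i,s^v_j,t^v_j$ are treated analogously.

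Correctness then runs in the two standard directions. From a solution $s_{i,j}=(x^*_i,y^*_j)$ of \gridtiling, route the horizontal path of row $i$ for value $x^*_i$ and the vertical path of column $j$ for value $y^*_j$: since $(x^*_i,y^*_j)=s_{i,j}\in S_{i,j}$, property (b) lets these two paths pass disjointly through $D_{i,j}$, and since paths of distinct rows (respectively columns) meet only inside cells, all $2k$ paths are pairwise disjoint and valid. Conversely, $2k$ pairwise disjoint valid paths must realize each of the $2k$ demand pairs exactly once; the port-to-port wiring forces the horizontal path of row $i$ to use one value $x_i$ throughout the row and the vertical path of column $j$ one value $y_j$ throughout the column; and disjointness inside each $D_{i,j}$ together with property (b) forces $(x_i,y_j)\in S_{i,j}$, so $(x_i,y_j)_{i,j}$ is a valid tiling.

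Planarity is immediate: each $D_{i,j}$ is planar and the cells sit in a planar grid joined only through their ports, and the number of paths is $2k=O(k)$. The main obstacle is to bound $\tw(G)$ by a function of $k$, whereas the obvious grid-based cell gadget of side $n$ turns $G$ into essentially a $\Theta(kn)\times\Theta(kn)$ grid, of treewidth $\Theta(kn)$. The fix has two parts. First, a value $x\in[n]$ is encoded not as ``which of $n$ parallel tracks a path uses'' but as ``which of $n$ internally disjoint arcs a path takes through a generalized theta-gadget'' (treewidth $O(1)$), and it is propagated from each cell to the next by a matching of linking edges that forces the path onto the arc for the same value (again treewidth $O(1)$). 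Second, the ``$(x,y)\in S_{i,j}$'' test inside $D_{i,j}$ must be realized by a gadget that is long but \emph{thin}, contributing only boundedly to the treewidth, rather than by a two-dimensional grid; designing this test gadget, checking that it has exactly the behaviour demanded by (a) and (b) (no unintended routing, and neither path can cheat by leaving its cell row or column), and verifying that the whole assembly of $k^2$ such gadgets together with the $O(k)$-width frame has treewidth bounded by some $g(k)$ is where essentially all the technical work of the section lies. Granting this, since both the number $2k$ of paths and the treewidth $g(k)$ of the constructed instance are bounded by functions of the \gridtiling parameter $k$, an $f(k+w)\cdot n^{O(1)}$ algorithm for \maxdisjoint — even restricted to planar inputs — would solve \gridtiling in time $g'(k)\cdot n^{O(1)}$, placing it in \FPT, a contradiction; hence \maxdisjoint is \classWone-hard with combined parameters $k$ and $w$ even on planar graphs.
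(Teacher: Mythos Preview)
Your outline has a genuine gap at precisely the point you yourself flag: you never construct the cell gadget, and the mechanism you sketch for it does not work. You want a single horizontal path to enter $D_{i,j}$ at one of $n$ west ports and be forced to leave at the matching east port, with $D_{i,j}$ of constant treewidth. But any constant-width tree decomposition of $D_{i,j}$ contains a bag $B$ of size $O(1)$ separating the west ports from the east ports; the horizontal path meets $B$ in only $O(1)$ vertices, and there are only $O(1)$ possible ``signatures'' describing how it crosses $B$. Hence many distinct west ports share a signature, and nothing prevents the path from entering at port $x$ and exiting at some $x'\ne x$. Your theta-gadget idea does not escape this: once the path reaches a hub of the theta, the choice of arc is forgotten; and if you instead link arc $x$ of one theta directly to arc $x$ of the next, you are back to $n$ parallel tracks, and the horizontal and vertical bundles must cross somewhere inside the cell, recreating an $n\times n$ grid. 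With only two paths per cell there is no other traffic to pin the value down; a bounded-treewidth gadget traversed by $O(1)$ paths can transmit only $O(1)$ bits of state across any separator, so it cannot faithfully propagate a free choice from $[n]$.

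The paper avoids this obstacle by a quite different design. Each cell gadget $G_{i,j}$ is simply a cycle of length $\Theta(n^2)$ (treewidth~$2$), but the reduction asks for $k'=4k^2+2k(k-1)$ paths rather than $2k$: four demand pairs live \emph{inside} each gadget, and one demand pair lives between each pair of adjacent gadgets, realizable only through a single connector vertex $h_{i,j}$ or $v_{i,j}$. The four internal paths in $G_{i,j}$ together cover the cycle except for four designated $b$-vertices, and \emph{which} four vertices are left uncovered is what encodes the chosen $(x,y)\in S_{i,j}$; the connector demand pairs then read those uncovered vertices and enforce coordinate agreement between neighbouring cells. A hitting set of size exactly $k'$ (the four $c$-vertices of each gadget together with all connectors) meets every valid path, so any solution with $k'$ paths uses each of these vertices exactly once, forcing the intended structure. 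Treewidth $O(k^2)$ follows because deleting the $O(k^2)$ connectors leaves a disjoint union of cycles. The point to take away is that the value is encoded by which vertices many paths jointly leave free, not by the route of a single path through the gadget---this is what reconciles bounded treewidth with $n$ choices per cell.
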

\begin{proof}
  The proof is by reduction from \gridtiling; let $S_{i,j}\subseteq [n]\times[n] $
  ($1\le i,j\le k$) be the set of pairs in the \gridtiling
  instance. We construct an equivalent instance of \maxdisjoint with parameter
  $k'=4k^2+2k(k-1)$ and treewidth bounded by a function of $k$; this proves the \classWone-hardness of the problem with combined paramters $k$ and $w$. Let us fix an arbitrary bijection $\iota\colon [n]\times [n]\to [n^2]$, e.g., $\iota(x,y)=(x-1)n+y$.

  For each set $S_{i,j}$, we construct a gadget $G_{i,j}$ that is a
  cycle of $4(2n^2+2)$ vertices. The vertices of the cycle are denoted
  by $a_{1,0}$, $b_{1,1,}$, $a_{1,1}$, $b_{1,2}$, $\dots$,
  $b_{1,n^2}$, $a_{1,n^2}$, $c_1$, $a_{2,0}$, $\dots$, $a_{2,n^2}$,
  $c_2$, $a_{3,0}$, $\dots$, $a_{3,n^2}$, $c_3$, $a_{4,0}$, $\dots$,
  $a_{4,n^2}$, $c_4$ (in clockwise order; see
  Figure~\ref{fig:cyclegadget}). For every $1\le i \le k$, $1\le j <
  k$, we introduce a horizontal connector vertex $h_{i,j}$ and make it
  adjacent to vertices $b_{1,1}$, $\dots$, $b_{1,n^2}$ of $G_{i,j}$
  and vertices $b_{3,1}$, $\dots$, $b_{3,n^2}$ of $G_{i,j+1}$.  For
  every $1\le i < k$, $1\le j \le k$, we introduce a vertical
  connector vertex $v_{i,j}$ and make it adjacent to vertices
  $b_{2,1}$, $\dots$, $b_{2,n^2}$ of $G_{i,j}$ and vertices $b_{4,1}$,
  $\dots$, $b_{4,n^2}$ of $G_{i+1,j}$ (see
  Figure~\ref{fig:matrix}). This completes the construction of the
  graph $G$.  It is easy to see that the treewidth of $G$ is bounded
  by a function of $k$: removing the $O(k^2)$ vertices: $h_{i,j}$,
  $v_{i,j}$ results in a graph with treewidth 2 (as every component is
  a cycle), which implies that the treewidth of $G$ is $O(k^2)$. In
  fact, with a bit more effort, one can show that the $G$ has
  treewidth $O(k)$ (details omitted).

\begin{figure}
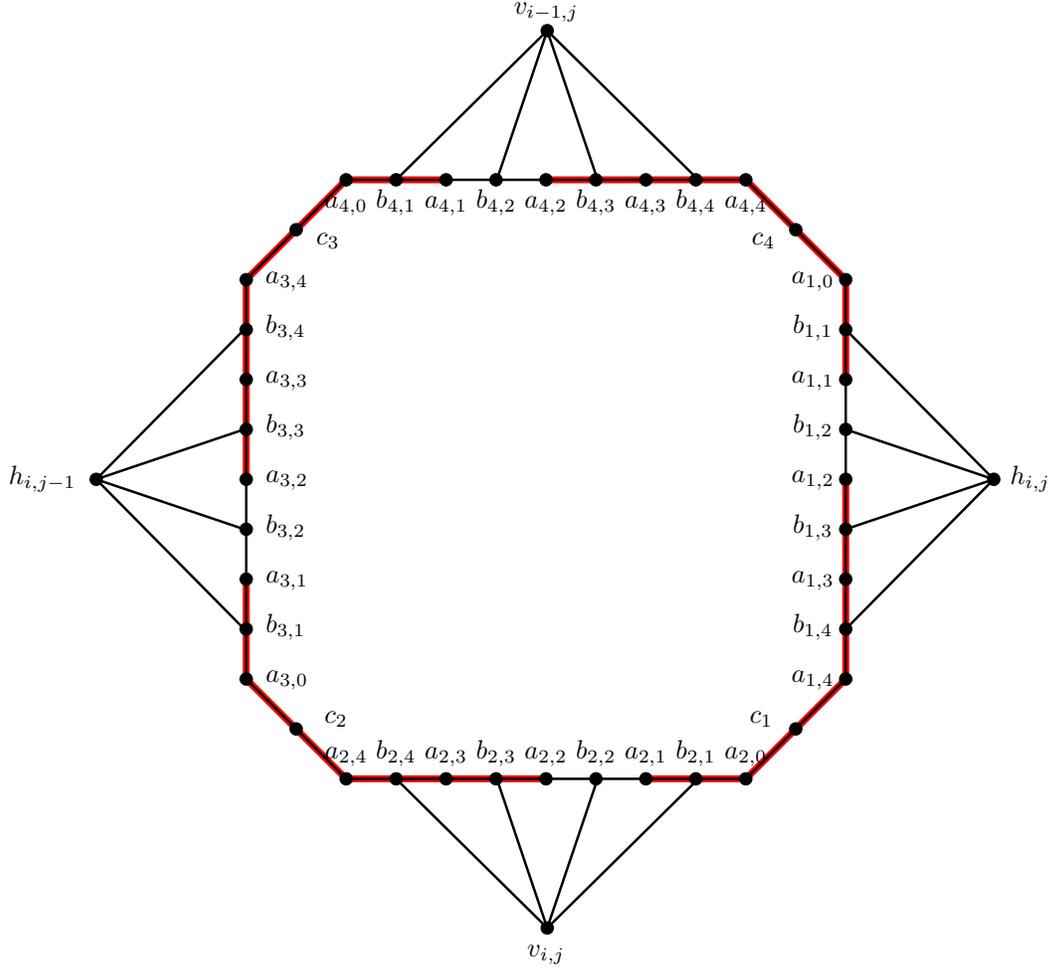

\centering
{\small \svg{0.82\linewidth}{cyclegadget}}
\caption{Proof of Theorem \ref{th:maxdisjointhard}: The gadget $G_{i,j}$ representing the set $S_{i,j}$ with the adjacent two horizontal and two vertical connector vertices. The four red paths show a valid way of realizing 4 disjoint paths in the gadget. }\label{fig:cyclegadget}
\end{figure}

\begin{figure}
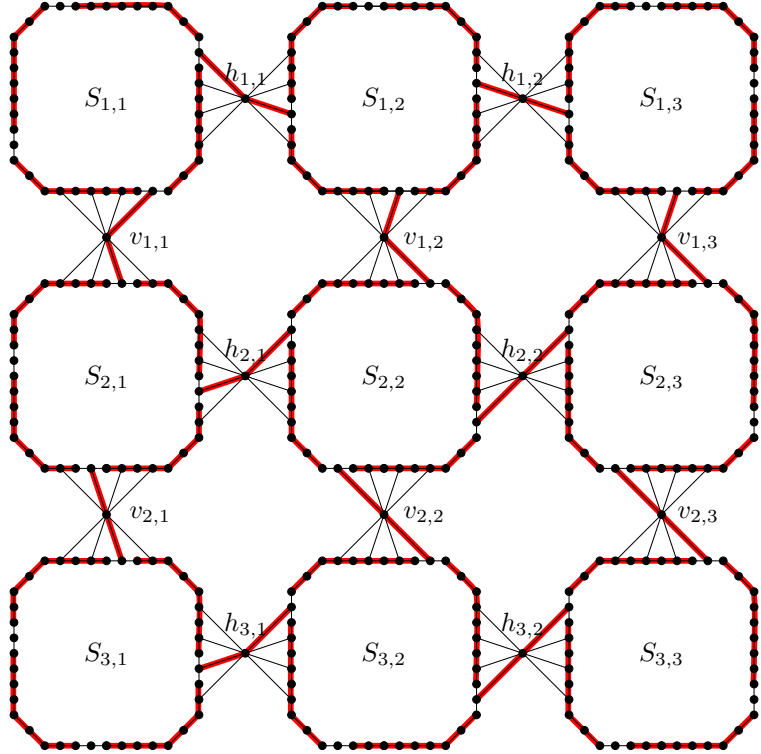

\centering
{\small \svg{0.6\linewidth}{matrix}}
\caption{Proof of Theorem \ref{th:maxdisjointhard}: Connecting the gadgets representing the sets $S_{i,j}$.}\label{fig:matrix}
\end{figure}

The demand pairs of the \maxdisjoint instance are defined the following way.
\begin{itemize}
\item For every $1\le i,j \le k$, $1\le s\le 4$, and every $(x,y)\in S_{i,j}$, we introduce the pair $(a_{s,\iota(x,y)},a_{s+1,\iota(x,y)-1})$ in gadget $G_{i,j}$ (where addition in the first subscript is modulo 4).
\item For every $1\le i \le k$, $1\le j < k$, $(x,y)\in S_{i,j}$, and $(x',y')\in S_{i,j+1}$ with $x=x'$, we introduce the pair consisting of vertex $b_{1,\iota(x,y)}$ of $G_{i,j}$ and vertex $b_{3,\iota(x',y')}$ of $G_{i,j+1}$.
\item For every $1\le i < k$, $1\le j \le k$, $(x,y)\in S_{i,j}$, and $(x',y')\in S_{i,j+1}$ with $y=y'$, we introduce the pair consisting of vertex $b_{2,\iota(x,y)}$ of $G_{i,j}$ and vertex $b_{4,\iota(x',y')}$ of $G_{i+1,j}$.\end{itemize}
This completes the description of the constructed instance of \maxdisjoint. 

Suppose that \gridtiling has a solution $s_{i,j}\in S_{i,j}$, $1\le i,j\le k$. Then we can define $k'$ disjoint paths the following way:
\begin{itemize}
\item For every $1\le i, j \le k$ and $1\le s\le 4$, we select a path in $G_{i,j}$ that goes from $a_{s,\iota(s_{i,j})}$ to $a_{s+1,\iota(s_{i,j})-1}$ clockwise on the cycle.
\item For every $1\le i \le k$ and $1\le j < k$, we select the path that consist of three vertices: vertex $b_{1,\iota(s_{i,j})}$ of $G_{i,j}$, vertex $h_{i,j}$, and vertex $b_{3,\iota(s_{i,j+1})}$ of $G_{i,j+1}$.
\item For every $1\le i < k$ and $1\le j \le k$, we select the path that consist of three vertices: vertex $b_{2,\iota(s_{i,j})}$ of $G_{i,j}$, vertex $v_{i,j}$, and vertex $b_{4,\iota(s_{i+1,j})}$ of $G_{i+1,j}$.
\end{itemize}
It is easy to see that these paths are vertex disjoint and for each path, the endpoints form a pair listed in the \maxdisjoint instance. For example, 
we know that $s_{i,j}$ and $s_{i,j+1}$ have the same first coordinate (by the definition of \gridtiling), hence there is a demand pair consisting of $b_{1,\iota(s_{i,j})}$ of $G_{i,j}$ and $b_{3,\iota(s_{i,j+1})}$ of $G_{i,j+1}$.

For the proof of the reverse direction, suppose that the \maxdisjoint
instance has a solution with $k'$ paths.  Let $S$ be the following set
of vertices: vertices $c_1$, $c_2$, $c_3$, $c_4$ from every gadget
$G_{i,j}$, every horizontal connector $h_{i,j}$, and every vertical
connector $v_{i,j}$. Observe that $S$ has size exactly
$4k+2k(k-1)=k'$ and there is no valid path in $G-S$: no component of
$G-S$ contains the two endpoints of some demand pair. Therefore, each
of the $k'$ paths of the solution has to go through $S$ and hence every
path goes through exactly one vertex of $S$ and each vertex of $S$ is
used by a path of the solution.

Consider the path of the solution that goes through vertex $c_1$ of $G_{i,j}$. As it does not go through any other vertex of $S$ (in particular, it does not go thorough $c_2$ and $c_4$ of $G_{i,j}$, the horizontal connector $h_{i,j}$, and the vertical connector $v_{i,j}$), the endpoints of this path have to be $a_{1,t}$ and $a_{2,t-1}$ of $G_{i,j}$ for some $1\le t \le n^2$. Similarly, for every $1\le s \le 4$, the solution contains a path going from $a_{s,t_s}$ to $a_{s+1,t_s-1}$ of $G_{i,j}$ in clockwise direction on the cycle. As these paths are vertex disjoint, we have, for example, $t_1\le t_{2}$, as otherwise the two paths would both contain the vertex $a_{2,t_1}$. Therefore, we get the cycle of equalities $t_1\le t_2 \le t_3 \le t_4 \le t_1$, implying that all these four numbers are equal. This means that there is a $1\le t_{i,j} \le n^2$ such that the solution selects the four paths with endpoints $(a_{1,t_{i,j}},a_{2,t_{i,j}-1})$,  $(a_{2,t_{i,j}},a_{3,t_{i,j}-1})$, $(a_{3,t_{i,j}},a_{4,t_{i,j}-1})$, and $(a_{4,t_{i,j}},a_{1,t_{i,j}-1})$ in $G_{i,j}$.
The existence of these demands pair imply that $t_{i,j}=\iota(s_{i,j})$ for some $s_{i,j}\in S_{i,j}$. We claim that these values $s_{i,j}$ define a solution of the \gridtiling instance.

Observe that the 4 paths of the solution in $G_{i,j}$ leave only the 4
vertices $b_{1,t_{i,j}}$, $b_{2,t_{i,j}}$, $b_{3,t_{i,j}}$, and
$b_{4,t_{i,j}}$ unoccupied on the cycle of $G_{i,j}$. Therefore, the
path of the solution that goes through $h_{i,j}$ consists of 
vertex $b_{1,t_{i,j}}$ of $G_{i,j}$, vertex $h_{i,j}$, and vertex $b_{3,t_{i,j+1}}$ of $G_{i,j+1}$. The fact that vertex $b_{1,t_{i,j}}$ of $G_{i,j}$ and vertex $b_{3,t_{i,j+1}}$ of $G_{i,j+1}$ form a demand pair in the \maxdisjoint instance implies that the first coordinate of $s_{i,j}$ and the first coordinate of $s_{i,j+1}$ are the same. In a similar way, by looking at the path of the solution going through vertex $v_{i,j}$, we can deduce that the second coordinate of $s_{i,j}$ and the second coordinate of $s_{i+1,j}$ are the same. Thus the $s_{i,j}$'s indeed form a solution of the \gridtiling instance.
\end{proof}

\section{Hardness results: skew bicliques}
\label{sec:hardskew}

In the section, we prove the \classWone-hardness of the following specific disjoint path problem.

\defparproblemu{\skewdisjoint}
{
A graph $G$ with terminals $s_1$, $\dots$, $s_n$, $t_1$, $\dots$, $t_n$, an integer $k$.
}{}
{
A set of $k$ pairwise vertex-disjoint paths such that each path connects some $s_i$ and some $t_j$ with $i\le j$.
}

The \classWone-hardness of \skewdisjoint clearly implies that \maxhhdisjoint is \classWone-hard if $\cH$ contains every skew biclique.
 
For technical reasons, it will be convenient to define \skewdisjoint in a slightly different way:

\defparproblemu{\skewdisjointx}
{
A graph $G$, a set $T\subseteq V(G)$, and a labeling function $\mu:T\to \mathbb{Z}\setminus \{0\}$.
}{}
{
A set of $k$ pairwise vertex-disjoint paths such that if $u$ and $v$ are the endpoints of a path, then we have
\begin{itemize}[noitemsep]
\item $\mu(u)\mu(v)<0$ and
\item $\mu(u)+\mu(v)\le 0$.
\end{itemize}
}

Note that in \skewdisjointx, the labeling $\mu$ is not necessarily injective, i.e., two terminals can have the same label.
It is easy to see that the two versions of \skewdisjoint are
equivalent. 
\begin{lemma}\label{lem:skewequiv}
The are parameterized reductions between \skewdisjoint and \skewdisjointx.
\end{lemma}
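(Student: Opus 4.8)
The plan is to exhibit parameterized reductions in both directions, keeping the graph and the parameter $k$ essentially unchanged. The direction \skewdisjoint~$\to$~\skewdisjointx\ is immediate; the reverse direction needs a small pendant gadget whose only purpose is to turn the (possibly non-injective) labelling $\mu$ into two ordered lists of distinct terminals, together with a careful choice of the order. The main obstacle is precisely this ordering step: two terminals carrying the same label $\mu$-value must be made to behave identically in the skew order, which is impossible if they sit at adjacent positions (consecutive $s$-positions differ in exactly one valid partner), so one has to separate consecutive label classes by blocks that may be padded with dummies. Once the ordering is set up correctly, verifying equivalence of the two instances is routine.

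For \skewdisjoint~$\to$~\skewdisjointx: given $G$ with distinct terminals $s_1,\dots,s_n,t_1,\dots,t_n$ and parameter $k$, keep $G$ and $k$, put $T=\{s_1,\dots,s_n,t_1,\dots,t_n\}$, and set $\mu(s_i)=i$ and $\mu(t_j)=-j$ (well defined and nowhere $0$). Every $s$-label is positive and every $t$-label negative, so the requirement $\mu(u)\mu(v)<0$ forces a valid pair of the \skewdisjointx\ instance to consist of one $s_i$ and one $t_j$, and then $\mu(s_i)+\mu(t_j)=i-j\le 0$ if and only if $i\le j$, i.e. exactly the skew validity condition. Hence the two instances have literally the same feasible families of paths.

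For \skewdisjointx~$\to$~\skewdisjoint: given $(G,T,\mu,k)$, let $M=\max_{v\in T}|\mu(v)|$ and build $G'$ by attaching to every $u\in T$ with $\mu(u)>0$ a private pendant vertex $x^u$ (adjacent only to $u$), and to every $w\in T$ with $\mu(w)<0$ a private pendant vertex $x^w$. Declare the vertices $x^u$ (for $\mu(u)>0$) to be the $s$-terminals and the vertices $x^w$ (for $\mu(w)<0$) to be the $t$-terminals. Group the $s$-terminals into blocks $S_1,\dots,S_M$, with $x^u\in S_i$ when $\mu(u)=i$, and the $t$-terminals into blocks $T_1,\dots,T_M$, with $x^w\in T_i$ when $\mu(w)=-i$. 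Now choose the index ranges so that, as subsets of $\{1,\dots,n\}$, the blocks appear in the interleaved order $S_1,T_1,S_2,T_2,\dots,S_M,T_M$; positions of a list not occupied by a real terminal are filled with fresh isolated dummy vertices, so that the $s$-list and the $t$-list both have the common length $n=|T|$. Writing $\gamma_i=\sum_{j<i}(|S_j|+|T_j|)$, the block $S_i$ occupies the indices $(\gamma_i,\gamma_i+|S_i|]$ and $T_i$ occupies $(\gamma_i+|S_i|,\gamma_{i+1}]$; a one-line check with these prefix sums shows that for $x^u\in S_i$ and $x^w\in T_{i'}$ the index of $x^u$ is $\le$ the index of $x^w$ exactly when $i\le i'$, i.e.\ exactly when $\mu(u)+\mu(w)\le 0$.

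Finally I will check equivalence. Given a \skewdisjointx\ solution $Q_1,\dots,Q_k$ with endpoints $u_r$ (positive label) and $w_r$ (negative label) — all distinct because the $Q_r$ are vertex-disjoint — the paths $x^{u_r}\!-u_r-Q_r-w_r-x^{w_r}$ are pairwise disjoint (the $x$'s are fresh and private) and valid by the indexing. Conversely, in any \skewdisjoint\ solution each path has both endpoints among the real terminals, since a dummy is isolated and cannot be an endpoint of a path with two distinct endpoints; moreover every $x^u$ and $x^w$ has degree $1$ in $G'$, so no real terminal can be an internal vertex of a path, and deleting the two endpoint pendants from a solution path leaves a path of $G$ from some $u_r$ to some $w_r$ with opposite labels and $\mu(u_r)+\mu(w_r)\le 0$; these paths are pairwise disjoint, giving a \skewdisjointx\ solution. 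Both constructions are polynomial and do not change $k$, so they are parameterized reductions, and together with Lemma~\ref{lem:skewequiv}'s forward direction above this proves the claimed equivalence. As noted, the delicate point is only the interleaving $S_1,T_1,S_2,T_2,\dots$ (rather than $S_1,\dots,S_M,T_1,\dots,T_M$), which is what lets equally-labelled terminals be indistinguishable in the skew order.
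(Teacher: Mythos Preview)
Your proof is correct. The forward direction matches the paper's (modulo a small caveat below). For the reverse direction you take a genuinely different route from the paper, and it is worth recording the contrast.

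The paper handles \skewdisjointx~$\to$~\skewdisjoint\ purely arithmetically: it replaces $\mu$ by the injective labelling $\mu'(v_i)=2\mu(v_i)|T|-(i-1)$ (with the $v_i$ sorted by $\mu$), observes that this preserves both sign and the sign of pairwise sums, then compresses the range and pads with isolated vertices so that the image becomes exactly $[-m,-1]\cup[1,m]$; the original terminal set becomes the $s$- and $t$-lists directly, and the supply graph is untouched apart from isolated dummies. Your construction is structural rather than arithmetic: you attach private pendants to make fresh terminals, then use the interleaving $S_1,T_1,S_2,T_2,\dots$ of label-blocks (with dummies) to realise the skew order. Both arguments are elementary and polynomial; the paper's version has the mild advantage that it leaves the graph unchanged, while yours makes the index bookkeeping more transparent and avoids any relabelling subtleties. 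Your observation that the interleaved order (not the concatenated order $S_1,\dots,S_M,T_1,\dots,T_M$) is what makes equally-labelled terminals interchangeable is exactly the point, and corresponds in the paper's proof to the fact that ties in $\mu$ are broken consistently by the tie-breaking term $-(i-1)$.

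One small remark on your forward direction: you assume the $s_i,t_j$ are distinct vertices of $G$, but the paper's formulation of \skewdisjoint\ does not guarantee this, and its proof explicitly first attaches pendants to force distinctness. Since you already use the pendant trick in the other direction, adding it here as well is a one-line fix.
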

\begin{proof}
  To transform an instance of \skewdisjoint to \skewdisjointx, we
  first modify the instance so that every vertex is used as at most
  one $s_i$ or $t_i$: this can be achieved by attaching sufficiently
  many degree-1 vertices to each vertex and then replacing each $s_i$
  and $t_i$ with an adjacent degree-1 vertex that was not used before.
  Then we define $T$ to be the set of terminals and set the labels as
  $\mu(s_i)=i$ and $\mu(t_i)=-i$ for every $1\le i \le n$. The sign of
  the labels ensure that every valid path connects some $s_i$ with
  some $t_j$, and the condition $\mu(s_i)+\mu(t_j)\le 0$ ensures $i\le
  j$.

  For the other direction, when transforming an instance of
  \skewdisjointx to \skewdisjoint, we first ensure that the labeling
  is injective.  Let $T=\{v_1,\dots,v_{|T|}\}$, ordered by increasing
  order of labels. We define $\mu'(v_i)=2\mu(v_i)|T|-(i-1)$. Note that
  $\mu(v_i)$ and $\mu'(v_i)$ have the same sign and
  $\mu(v_i)+\mu(v_j)\le 0$ if and only if $\mu'(v_i)+\mu'(v_j)\le 0$,
  thus replacing $\mu$ with $\mu'$ does not change the problem.

Next we want to ensure that $|\mu'(v_i)|\le 2|T|$ for every $v_i$. If
this is not true, then there has to be an $1\le x \le 2|T|$ such that
neighter $x$ nor $-x$ appears in the image of $\mu'$. Then let us
decrease the value of $\mu'(v_i)$ by one if it is greater than $x$ and
let us increase the value of $\mu'(v_i)$ by one if it is less than
$-x$. Again, this transformation does not change the instance. Let us
repeat this step until we get a labelling $\mu''$ with $m:=\max
|\mu'(v_i)|\le 2|T|$.

Next we ensure that the image of the labeling function is exactly $[-m,-1]\cup [1,m]$: if there is an integer $x$ in this range that does not appear in the image of $\mu''$, the let us introduce a new isolated vertex $v$ and let us define $\mu''(v)=x$. If the image of $\mu''$ is $[-m,-1]\cup [1,m]$, then we interpret the problem as a \skewdisjoint instance by defining $s_i$ (resp., $t_i$) to be the unique $v\in
T$ with $\mu''(v)=i$ (resp., $-i$). It is clear that the resulting \skewdisjoint instance is equivalent to the original \skewdisjointx instance.
\end{proof}

As in Section~\ref{sec:hardmatching}, \classWone-hardness is proved by reduction from \gridtiling. To reduce \gridtiling to \skewdisjointx, we construct certain
gadgets. Formally, a {\em gadget} is a graph $G$ with a set $B\subseteq
V(G)$ of boundary vertices, a $T\subseteq V(G)$ of terminals, and an
injective function $\mu\colon T\to \mathbb{Z}\setminus \{0\}$. We often
describe the boundary vertices as an ordered tuple $(b_1,\dots,b_r)$
of vertices. We assume that $B\cap T=\emptyset$, that is, the boundary
vertices are not labeled. Given two gadgets, we can join them by
identifying some of their boundary vertices; the set of terminals
becomes the union of the two sets and the function $\mu$ is defined
the obvious way on the union. 

A {\em partial solution} in a gadget is a set of pairwise vertex-disjoint paths, where every path is either
\begin{itemize}[noitemsep]
\item a {\em complete path} connecting two vertices $u,v\in T$ and satisfying $\mu(u)\mu(v)<0$ and $\mu(u)+\mu(v)\le 0$, or
\item a {\em partial path} connecting a vertex $u\in T$ and a vertex $v\in B$. 
\end{itemize}
If the boundary of $G$ is $(b_1,\dots, b_r)$, then we say that a
partial solution represents the tuple $(x_1,\dots, x_r)$ if for every
$1\le i\le r$, the partial solution contains a partial path with
endpoints $b_i$ and $v_i\in T$ with $\mu(v_i)=x_i$. Note that this implicitly 
implies that the partial solution contains exactly $r$ partial
paths.

Our reduction is based on the existence of gadgets defined by the following lemma.

\begin{restatable}{\relemma}{restatemaingadget}\label{lem:maingadget}
Let $n$ and $B> 8n^2$ be integers. Given a subset $S\subseteq [n]\times [n]$, one can construct in polynomial time a {\em positive gadget} $G$ such that the following holds:
\begin{enumerate}[noitemsep]
\item $G$ is a planar graph of constant treewidth and the boundary vertices $(b_1,\dots, b_8)$ appear in this order on a single face.
\item For every $(x,y)\in S$, gadget $G$ has a partial solution containing $6$ complete paths and $8$ partial paths representing
\[
t_{(x,y)}:=(B+x,B-x,
B+y,B-y,
B+x,B-x,
B+y,B-y).\]
\item Every partial solution of $G$ contains at most $6$ complete paths. 
\item If a partial solution of $G$ contains exactly $6$ complete paths and 8 partial paths, then it represents the tuple 
$t_{(x,y)}$ for some $(x,y)\in S$.
\end{enumerate}
The definition of the {\em negative gadget} is the same except that we require  $B <-n$.
\end{restatable}
As the boundary vertices of the gadget given by
Lemma~\ref{lem:maingadget} are on a single face, we may embed the gadget
in such a way that the boundary vertices appear on the infinite face. Intuitively, we call $b_1$
and $b_2$ as the right boundary vertices, $b_3$ and $b_4$ as the bottom
boundary vertices, $b_5$ and $b_6$ as the left boundary vertices,
and $b_7$ and $b_8$ as the top boundary vertices. That is, $b_1$,
$\dots$, $b_8$ appear in clockwise order around the gadget (see
Figure~\ref{fig:posneg}).
\begin{figure}
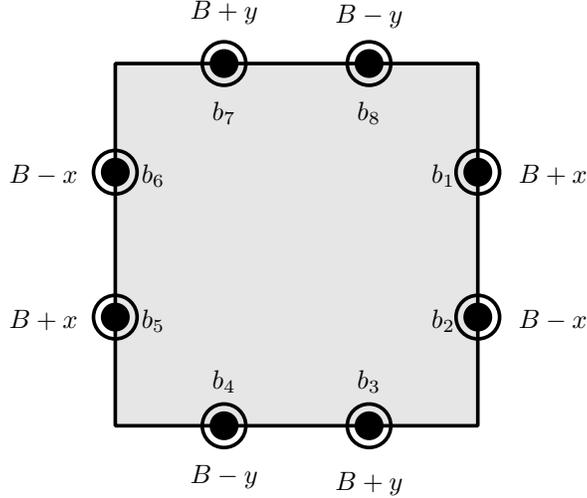

\begin{center}
{\small \svg{0.55\linewidth}{posneg}}
\caption{A gadget constructed by Lemma~\ref{lem:maingadget}.}\label{fig:posneg}
\end{center}
\end{figure}
We prove Lemma~\ref{lem:maingadget} in Section~\ref{sec:constructing-gadgets}. 
Assuming the existence of such gadgets, we prove the \classWone-hardness of \skewdisjointx.
\begin{theorem}\label{th:skewhard}
\skewdisjointx is \classWone-hard with combined parameters $k$ (the number of paths to be found) and $w$ (where $w$ is the treewidth of $G$), even on planar graphs.
\end{theorem}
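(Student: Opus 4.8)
The plan is to prove Theorem~\ref{th:skewhard} by a parameterized reduction from \gridtiling, using the positive and negative gadgets of Lemma~\ref{lem:maingadget} as building blocks, much as Theorem~\ref{th:maxdisjointhard} is proved from \gridtiling with simple cycle gadgets. Starting from a \gridtiling instance with sets $S_{i,j}\subseteq[n]\times[n]$ and parameter $k$, I would set $B:=8n^2+1$ and, for each cell $(i,j)$, create a copy $G_{i,j}$ of the gadget of Lemma~\ref{lem:maingadget} for $S_{i,j}$: a positive gadget with parameter $B$ when $i+j$ is even and a negative gadget with parameter $-B$ when $i+j$ is odd (note $-B<-n$, as the lemma requires). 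I would then glue the copies into a $k\times k$ grid by identifying, for horizontally adjacent $G_{i,j},G_{i,j+1}$, vertex $b_1$ of $G_{i,j}$ with $b_6$ of $G_{i,j+1}$ and $b_2$ of $G_{i,j}$ with $b_5$ of $G_{i,j+1}$, and, for vertically adjacent $G_{i,j},G_{i+1,j}$, vertex $b_3$ of $G_{i,j}$ with $b_8$ of $G_{i+1,j}$ and $b_4$ of $G_{i,j}$ with $b_7$ of $G_{i+1,j}$. Because the eight boundary vertices of each gadget lie on a single face in the cyclic order $b_1,\dots,b_8$, all identifications can be performed simultaneously in the plane, so the resulting graph $G$ is planar; and since each gadget has constant treewidth and the grid glues along constant-size interfaces, a column-by-column decomposition yields $\tw(G)=O(k)$. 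Finally, to each of the $8k$ boundary vertices of a grid-boundary gadget that is not identified with a vertex of a neighbor, I would attach a new pendant terminal, labelled $-(B+n)$ if the adjacent gadget is positive and $B-n$ if it is negative, and set the target to $k':=10k^2+4k$; as $k'$ and $\tw(G)$ are bounded by functions of $k$, it then remains to prove that $G$ admits $k'$ pairwise vertex-disjoint valid paths if and only if the \gridtiling instance is solvable.

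For the forward direction I would take a \gridtiling solution $s_{i,j}=(x_{i,j},y_{i,j})\in S_{i,j}$; the two \gridtiling constraints say $x_{i,j}=x_{i,j+1}$ and $y_{i,j}=y_{i+1,j}$. In each $G_{i,j}$ I would use the partial solution of property~2 of Lemma~\ref{lem:maingadget}, with $6$ complete and $8$ partial paths representing $t_{(x_{i,j},y_{i,j})}$. The $6k^2$ complete paths are valid as is. At each of the $4k(k-1)$ interior identified vertices the two partial paths meeting there merge into one path whose two endpoint labels, thanks to $B+(-B)=0$ and the two equalities above, are negatives of one another --- hence that path satisfies $\mu(u)\mu(v)<0$ and $\mu(u)+\mu(v)\le 0$. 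Each of the remaining $8k$ partial paths merges with its pendant terminal into a valid path, by the choice of pendant labels (which is valid for every coordinate value in $[n]$). This produces $6k^2+4k(k-1)+8k=k'$ pairwise vertex-disjoint valid paths.

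The reverse direction is where the real work lies. Given $k'$ pairwise vertex-disjoint valid paths in $G$, I would first invoke the structural guarantee built into Lemma~\ref{lem:maingadget}: the restriction of the solution to each $G_{i,j}$ is a partial solution in the sense of the definition above --- in particular, no solution path passes through a terminal, and no solution path enters and leaves a gadget through two of its boundary vertices without meeting a terminal of that gadget. Granting this, every solution path either lies inside a single $G_{i,j}$ --- where property~3 caps the number of such paths at $6$ per gadget, so at most $6k^2$ overall --- or it occupies exactly one of the $4k(k-1)$ interior identified vertices or one of the $8k$ pendant terminals, each of which lies on at most one path. Since $6k^2+4k(k-1)+8k=k'$, all these bounds are attained, forcing each $G_{i,j}$ to have exactly $6$ complete paths and to have all $8$ of its boundary vertices carry a partial path; property~4 then says the restriction to $G_{i,j}$ represents $t_{(x_{i,j},y_{i,j})}$ for some $(x_{i,j},y_{i,j})\in S_{i,j}$. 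Finally I would read off the \gridtiling constraints from the interfaces: at a horizontal interface between $G_{i,j}$ and $G_{i,j+1}$ (of opposite parity, parameters $\pm B$), the solution path through $b_1/b_6$ has endpoint labels summing to $x_{i,j}-x_{i,j+1}$, so validity forces $x_{i,j}\le x_{i,j+1}$, while the path through $b_2/b_5$ has labels summing to $x_{i,j+1}-x_{i,j}$ and forces the reverse inequality; hence $x_{i,j}=x_{i,j+1}$. Symmetrically, vertical interfaces give $y_{i,j}=y_{i+1,j}$, so the $(x_{i,j},y_{i,j})$ solve \gridtiling.

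The main obstacle will be exactly the structural input used at the start of the reverse direction: ruling out that a solution path ``tunnels through'' a gadget between two of its boundary vertices without touching any of its terminals, which if allowed would let a gadget's boundary vertices serve as extra routing capacity and break the counting. Establishing this, together with properties~2--4 and the bound of property~3, is the technical core of the construction and is the content of Lemma~\ref{lem:maingadget} (proved in Section~\ref{sec:constructing-gadgets}). A secondary point requiring care is the arithmetic: the common magnitude $B$, its negation in negative gadgets, and the pendant labels must be tuned so that interface paths are valid precisely when the corresponding \gridtiling equalities hold, and so that the pendants never create spurious valid paths --- the choices $B=8n^2+1$, parameter $-B$ for negative gadgets, and pendant labels $-(B+n)$ and $B-n$ achieve this. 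Since all instances produced are planar with $\tw(G)=O(k)$ and $k'=O(k^2)$, this reduction establishes that \skewdisjointx is \classWone-hard with combined parameters $k$ and $w$ on planar graphs.
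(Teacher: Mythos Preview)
Your construction matches the paper's almost exactly: checkerboard of positive/negative gadgets from Lemma~\ref{lem:maingadget}, identification of facing boundary pairs, target $k'=10k^2+4k=6k^2+4k(k+1)$, and the same treewidth/planarity observations. The pendant terminals versus the paper's direct labeling of the $8k$ outer boundary vertices is an immaterial variation, as is the specific choice of $B$.

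There is, however, a logical gap in your reverse direction. You first assert that the restriction of the solution to each $G_{i,j}$ is a partial solution---in particular that no path ``tunnels'' boundary-to-boundary and no path passes through a terminal---and you attribute this to Lemma~\ref{lem:maingadget}. The lemma says no such thing: its properties 3 and 4 are statements \emph{about} partial solutions, not guarantees that an arbitrary collection of disjoint valid paths restricts to one. The claim that no path passes through a terminal is also not needed and not generally true. In the paper the no-tunneling conclusion is not an input to the counting argument but an \emph{output} of it. Let $X$ be the set of all $4k(k+1)$ boundary vertices (interior identified vertices together with the $8k$ outer ones). Any path avoiding $X$ lies in a single gadget interior and is a complete path there, so at most $6$ per gadget by property~3; any other path meets $X$, and vertex-disjointness gives at most $|X|$ such paths. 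Hitting $k'$ forces both bounds to be tight, and tightness on the $X$-side means each vertex of $X$ lies on a \emph{distinct} path, so each such path uses exactly one vertex of $X$. Only then does the restriction to each gadget become a partial solution with $6$ complete and $8$ partial paths, and property~4 applies. So reverse the order: count first with the cut set $X$, derive ``one $X$-vertex per path'' from tightness, and then invoke property~4. With that fix your argument goes through and coincides with the paper's.
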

\begin{proof}
  The proof is by reduction from \gridtiling; let $S_{i,j}\subseteq [n]\times [n] $
  ($1\le i,j\le k$) be the set of pairs in the \gridtiling
  instance. We construct an instance of \skewdisjoint with parameter
  $k'=O(k^2)$ and treewidth bounded by a function of $k$.

For each set $S_{i,j}$, we use
  Lemma~\ref{lem:maingadget} to construct a gadget $G_{i,j}$ corresponding to the set $S_{i,j}$ as follows.
Let $Z=10n^2$.  If $i+j$ is even, then $G_{i,j}$ is a positive gadget with
  parameters $n$ and $B=Z$.
  If $i+j$ is odd, then $G_{i,j}$ is a negative gadget with parameters
  $n$ and $B=-Z$. After constructing these $k^2$ gadgets, we join the gadgets the following way: for $1\le i \le
  k$ and $1\le j<k$, we identify the right boundary vertices of
  $G_{i,j}$ with the left boundary vertices of $G_{i,j+1}$; and for
  $1\le i < k$ and $1\le j \le k$, we identify the bottom boundary
  vertices of $G_{i,j}$ with the top boundary vertices of $G_{i+1,j}$
  (see Figure~\ref{fig:main}). This way, the $8k^2$ boundary vertices of the $k^2$ gadgets
  are identified into a set $X$ of $4k(k+1)$ vertices. 
  The set $X$ contains $8k$ vertices that came from a single gadget,
  that is, they were not identified with other boundary vertices. For
  example, the top boundary vertices of the gadgets $G_{1,1}$,
  $\dots$, $G_{1,k}$ are such vertices. Let $X_1\subseteq X$ be this set of $8k$ vertices and let $X_2=X\setminus X_1$. 
  We label with $-Z-n$ each vertex $v\in X_1$ that appears in a
  positive gadget and we label with $Z-n$ each vertex $v\in X_1$
  that appears in a negative gadget.  This completes the description
  of the constructed graph $G$. It is easy to observe that the treewidth of the
  $G$ is $O(k^2)$: after removing the set $X$ (which
  has size $O(k^2)$), the instance falls apart into components whose
  treewidth is bounded by a constant (property 1 of
  Lemma~\ref{lem:maingadget}). It is possible to prove that treewidth is actually $O(k)$ (details omitted).
\begin{figure}
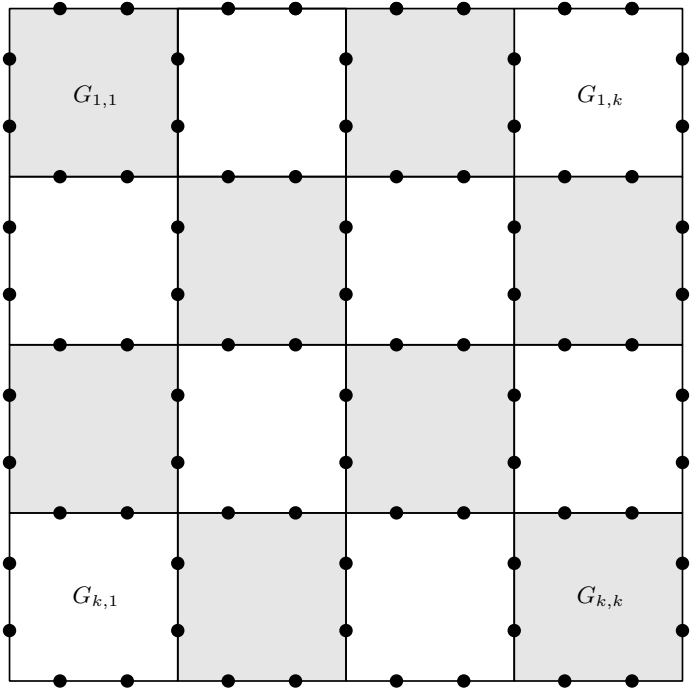

\begin{center}
{\footnotesize \svg{0.55\linewidth}{main1}}
\caption{Connecting the gadgets in the proof of Theorem~\ref{th:skewhard}. The shaded boxes are positive gadgets, then white boxes are negative gadgets.}\label{fig:main}
\end{center}
\end{figure}

Set $k'=4k(k+1)+6k^2$. We claim that the \skewdisjointx instance has a
solution with $k'$ paths if and only if the \gridtiling instance has a
solution. Suppose first that $s_{i,j}\in S_{i,j}$ ($1\le i,j\le k$) is
a solution of \gridtiling; we construct a solution for \skewdisjointx
with $k'$ paths as follows (see Figure~\ref{fig:main2}). For each gadget,
property 3 of Lemma~\ref{lem:maingadget} gives a partial solution with
6 complete paths and 8 partial paths going to the boundary
vertices, representing the 8-tuple $t_{s_{i,j}}$. Suppose that $s_{i,j}=(x,y_1)$ and $s_{i,j+1}=(x,y_2)$;
recall that, by the definition of \gridtiling, they have to agree on
the first coordinate. Suppose that $i+j$ is even. Then the right
boundary vertex $b_1$ of positive gadget $G_{i,j}$ was identified with
the left boundary vertex $b_6$ of the negative gadget $G_{i,j+1}$; let
$v\in X$ be this identified vertex. Therefore, in the partial solution
of $G_{i,j}$, vertex $v$ is connected to vertex with label $Z+x$,
while the partial solution of $G_{i,j+1}$ connects $x$ to a
vertex with label $-Z-x$. Thus the two partial paths create a valid
path. Similarly, we can verify in all other cases that whenever two
boundary vertices were identified, the two partial paths of the two
gadgets together form a valid path. Finally, let $v\in X_1$ be one of
the $8k$ boundary vertices that are contained only in a single
gadget. If $v$ appears in a positive gadget $G_{i,j}$ and a partial
path connects $x$ to vertex labeled $Z+ z$ for some $-n \le z
\le n$, then this partial path is actually a valid path, as $v$ was
labeled $-Z-n$ in the construction of the instance.  Similarly, if
$v$ is in a negative gadget, then $v$ has label $Z-n$, making the
partial path a valid path. Therefore, we get 6 paths in each of the
$k^2$ gadgets and a separate path going through each of the $4k(k+1)$
vertices of $X$, giving $k'=4k(k+1)+6k^2$ paths in total, as required.
\begin{figure}
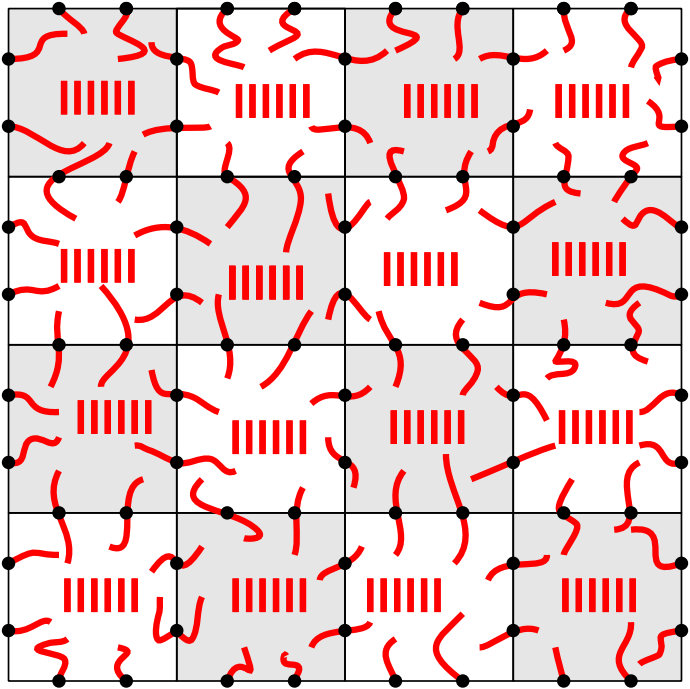

\begin{center}
{\footnotesize \svg{0.55\linewidth}{main2}}
\caption{A solution with $k'=4k(k+1)+6k^2$ paths. There are 6 paths in each gadget and one path going through each boundary vertex.}\label{fig:main2}
\end{center}
\end{figure}

For the reverse direction, consider a solution consisting of $k'$
paths. By property 4 of Lemma~\ref{lem:maingadget}, at most 6 paths
can be fully contained in each of the $k^2$ gadgets. Additionally, at most
$|X|=4k(k+1)$ paths can go through $X$. Therefore, having $k'$ paths
is only possible if each vertex of $X$ is used by a separate path and
exactly 6 paths are fully contained in each gadget. This means that
for every $G_{i,j}$, the solution induces a partial solution of
$G_{i,j}$ with 6 complete paths and 8 partial paths. By property 5 of
Lemma~\ref{lem:maingadget}, this partial solution has to represent a
tuple $t_{(x,y)}$ for some $(x,y)\in S_{i,j}$; let us define $s_{i,j}$
to be this pair $(x,y)\in S_{i,j}$. We claim that these $s_{i,j}$'s
form a solution of \gridtiling. Consider first the case when $i+j$ is
even, we have $s_{i,j}=(x,y)$, $s_{i,j+1}=(x',y')$, and suppose for contradiction that $x>x'$. The
right boundary vertex $b_1$ of positive gadget $G_{i,j}$ was
identified with the left boundary vertex $b_6$ of negative gadget
$G_{i,j+1}$; let $v$ be this identified vertex. We know that
$G_{i,j}$ contains a partial path connecting $v$ to a vertex
labeled $Z + x$ and $G_{i,j}$ contains a partial path connecting
$x_\alpha$ to a vertex labeled $-Z -x'$. Now $x>x'$ implies that $(Z+x)+(-Z-x')>0$ and hence these
two partial paths together do not create valid path, a
contradiction. Suppose now $x<x'$ and let $v$ be the vertex arising
from the identification of the right boundary vertex $b_2$ of
$G_{i,j}$ with the left boundary vertex $b_5$ of $G_{i,j+1}$. Now the
two endpoints of the path going through $x_\alpha$ are labeled $Z-x $
(in $G_{i,j}$) and $-Z +x'$ (in $G_{i,j+1}$), hence $x<x'$ gives a contradiction
again. The situation is similar if $i+j$ is odd, i.e., $G_{i,j}$ is a
negative gadget. Finally, in a similar way, we can show that if
$s_{i,j}=(x,y)$ and $s_{i+1,j}=(x',y')$, then $y=y'$ has to hold by looking at the
paths going through the vertices arising from the identification of
the bottom boundary vertices of $G_{i,j}$ and the top boundary vertices
of $G_{i+1,j}$.
\end{proof}
\subsection{Constructing the gadgets}
\label{sec:constructing-gadgets}

The first step in the consturction of the gadgets required by Lemma~\ref{lem:maingadget} is a selector gadget that has $m$ possible states.
\begin{lemma}\label{lem:selectorgadget}
Given a positive integer $m$, one can construct in polynomial time a gadget $G$ such that the following holds:
\begin{enumerate}[noitemsep]
\item $G$ is an embedded planar graph of constant treewidth with the boundary vertices $(b^+,b^-)$ appearing on a single face.
\item For every $1\le i \le m$, gadget $G$ has a partial solution containing 4 complete paths and two partial paths representing $(6m+i,-m-i)$.
\item Every partial solution of $G$ contains at most 4 complete paths.
\item If a partial solution of $G$ contains exactly 4 complete paths and represents the pair $(x,y)$, then $x=6m+i^+$ and $y=-m-i^-$ for some $1\le i^-\le i^+ \le m$.
\item All the labels in $G$ are in the range $[-5m,7m]$.
\end{enumerate}
\end{lemma}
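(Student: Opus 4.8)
The plan is to exhibit an explicit planar gadget and verify the five requirements by hand; three of them are essentially free — (1) and (2) are read off from the construction, and (5) from the choice of labels — and the work is concentrated in (3) and, above all, in the inequality of (4).

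\emph{Labels.} Use $m$ \emph{plus outputs} $p_1,\dots,p_m$ with $\mu(p_t)=6m+t$, $m$ \emph{minus outputs} $q_1,\dots,q_m$ with $\mu(q_t)=-m-t$, and exactly four \emph{positive fillers} and four \emph{negative fillers}, the former with distinct labels in $(2m,5m]$ and the latter with distinct labels in $[-5m,-2m]$. Every label lies in $[-5m,7m]$ and $\mu$ is injective, so (5) holds. The ranges are chosen so that, in the skew sense, no plus output can be the positive endpoint of a complete path (its label exceeds $5m$, which upper-bounds the absolute value of every negative label), and no minus output can be the negative endpoint of a complete path (its label has absolute value at most $2m$, below every positive filler label). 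Hence every complete path of $G$ joins a positive filler to a negative filler, and since there are only four positive fillers, a partial solution has at most four complete paths, which is (3).

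\emph{The gadget.} We build $G$ around a planar \emph{core} of bounded pathwidth (a suitably padded grid of constant width and length polynomial in $m$), drawn inside a disk, and attach $b^+$, $b^-$, and all $2m+8$ terminals on the outer face; this gives (1), and bounded pathwidth of the core together with the $O(m)$ pendant vertices gives bounded treewidth. The internal wiring is designed to enjoy three features. (i) For each $t$ there is a path from $b^+$ to $p_t$ that, inside the core, sweeps past the anchors of $q_{t+1},\dots,q_m$; symmetrically a path from $b^-$ to $q_t$ sweeping past the anchors of $p_1,\dots,p_{t-1}$. (ii) The eight filler terminals can be joined by four vertex-disjoint complete paths whose union is a ``barrier'' inside the core. (iii) Whenever four vertex-disjoint complete paths are present in $G$, the only terminals still reachable from $b^+$ (resp.\ $b^-$) by a disjoint path are the plus outputs (resp.\ minus outputs). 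Given these features, (2) is immediate: for a fixed $i$, take the $b^+$-to-$p_i$ path, the $b^-$-to-$q_i$ path, and the four barrier paths; the label ranges guarantee $\mu(u)+\mu(v)\le 0$ for each complete path $uv$, and one checks the six paths are pairwise vertex-disjoint, so this partial solution has four complete and two partial paths and represents $(6m+i,-m-i)$.

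\emph{The inequality (4), and the main obstacle.} Let a partial solution have exactly four complete paths and two partial paths representing $(x,y)$ (there are only two boundary vertices, so at most two partial paths occur anyway). The four complete paths occupy all eight fillers, so by feature (iii) the path from $b^+$ ends at some $p_{i^+}$ and the path from $b^-$ ends at some $q_{i^-}$, whence $x=6m+i^+$ and $y=-m-i^-$ with $i^+,i^-\in[m]$. The remaining and central claim $i^-\le i^+$ is argued topologically: because the core is a disk and the $b^+$-to-$p_{i^+}$ path, the $b^-$-to-$q_{i^-}$ path, and the four complete paths are pairwise vertex-disjoint, the $b^+$-path together with the barrier separates $b^-$ from the anchors of $q_{i^++1},\dots,q_m$, so $q_{i^-}$ lies among $q_1,\dots,q_{i^+}$, i.e.\ $i^-\le i^+$; conversely the state-$i^+$ solutions from (2) realize $i^-=i^+$, so the condition is exactly $i^-\le i^+$. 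The main obstacle of the whole proof is to design the wiring of the core so that features (i)--(iii) hold simultaneously — in particular so that no routing with $i^->i^+$ can coexist with a full set of four complete paths — and then to carry out the planar (non-crossing linkage) argument establishing the separation claim once the wiring has been fixed.
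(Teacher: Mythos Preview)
Your submission is expressly a plan, not a proof: you name the explicit wiring of the core as ``the main obstacle'' and leave it undone. The paper, by contrast, writes down a completely concrete gadget (a small ladder-like planar graph with roughly $6m$ labeled terminals, drawn in its Figure for the selector) and verifies each of the five properties by direct inspection and short planar non-crossing arguments. Your counting argument for (3) --- only four positive fillers, so at most four complete paths --- is clean and actually simpler than the paper's cut argument (remove four specific vertices and no complete path survives), but (3) is the easy part.

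The real issue is a structural tension in your plan for (4) that suggests features (i)--(iii) cannot coexist as you phrase them. Feature (iii) says that after removing any four disjoint complete paths, $b^+$ reaches only plus outputs and $b^-$ only minus outputs; by transitivity of reachability this forces $b^+$ and $b^-$ into different components of $G$ minus the barrier. But then the partial path from $b^+$ to $p_{i^+}$ lies wholly in the $b^+$-component, and deleting it cannot change which $q_j$ are reachable from $b^-$ --- so your separation claim (``the $b^+$-path together with the barrier separates $b^-$ from $q_{i^++1},\dots,q_m$'') cannot hold. A static barrier that already disconnects the two sides carries no information about $i^+$ across to the $b^-$-side. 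The paper avoids this by a genuinely different mechanism: instead of eight fixed fillers it uses length-$m$ arrays of filler terminals, so that two of the four complete paths ($P_2$ and $P_3$) have \emph{variable} endpoints. Planarity (non-crossing with $P^+$ and the fixed path $P_1$) forces $P_2$'s endpoint index to be at most $i^+$; symmetrically $P^-$ and $P_4$ force $P_3$'s index to be at least $i^-$; and then the disjointness of $P_2$ and $P_3$ yields $i^-\le i^+$. The complete paths are thus not a static wall but a sliding linkage that transports the index across the gadget, and this device --- or something playing the same role --- is precisely what your eight-filler plan lacks.
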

\begin{proof}
The gadget is demonstrated in Figure~\ref{fig:selector}. Properties 1 and 5 are obvious by inspection. For property 2, consider the following set of complete paths and partial paths:
\begin{itemize}[noitemsep]
\item a partial path $P^+$ connecting $b^+$ and $6m+i$,
\item a partial path $P^-$ connecting $b^-$ and $-m-i$,
\item a complete path $P_1$ connecting $3m-1$ and $-3m+1$,
\item a complete path $P_2$ connecting $4m-i$ and $-4m+i$,
\item a complete path $P_3$ connecting $5m-i$ and $-5m+i$, and
\item a complete path $P_4$ connecting $5m+1$ and $-5m-1$.

\end{itemize}
As shown in Figure~\ref{fig:selector}, these endpoints can be connected by vertex-disjoint paths.
\begin{figure}
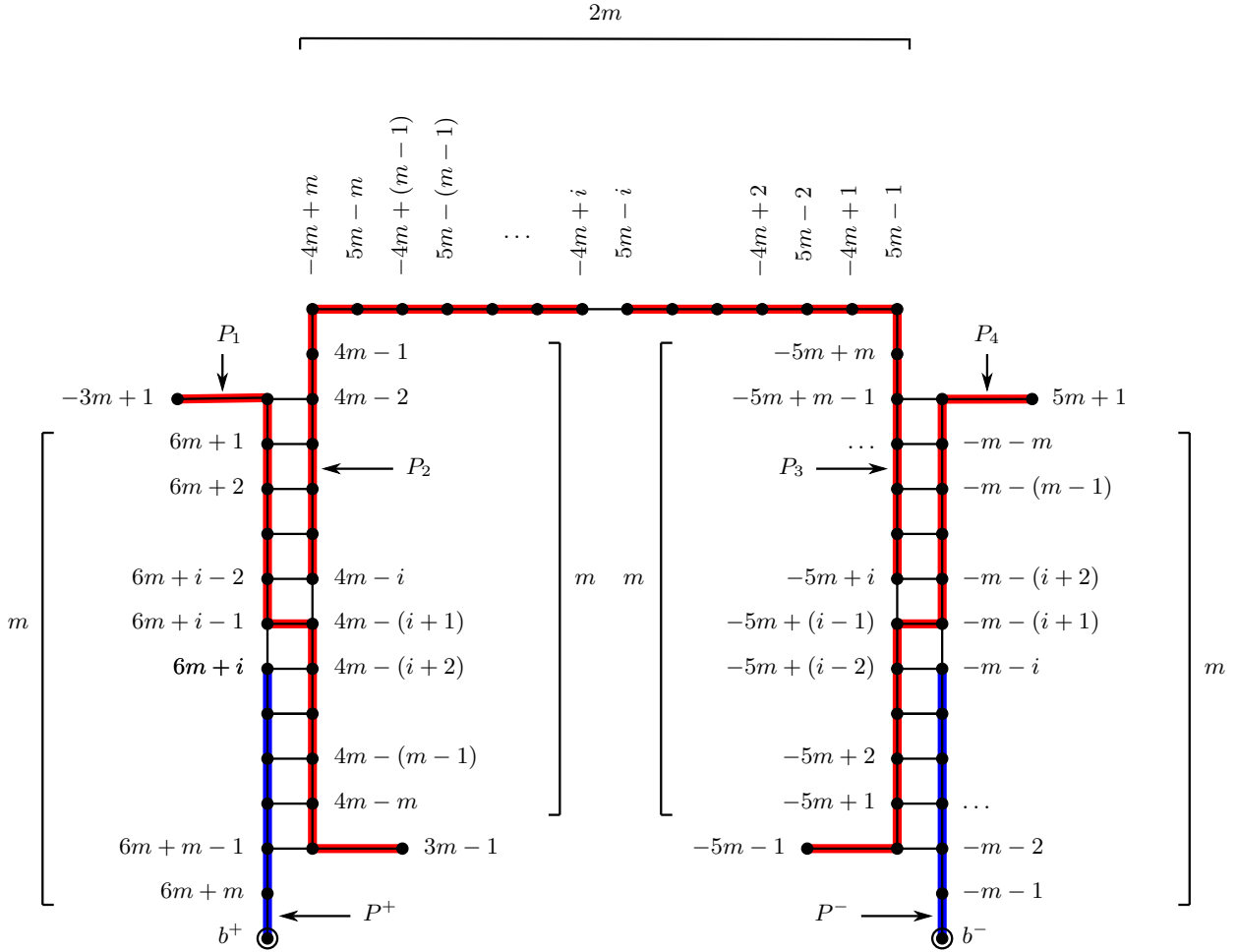

\begin{center}
{\footnotesize \svg{\linewidth}{selector}}
\caption{The selector gadget and a partial solution representing $(6m+i,-m-i)$. The red path are the complete paths, the blue paths are the partial paths.}\label{fig:selector}
\end{center}
\end{figure}

For property 3, observe that if we remove the 4 vertices labeled
$-3m+1$, $4m-1$, $-5m+m$, $5m+1$, then no complete path can be
created in the remaining components of the gadget. This shows that there are
at most 4 complete paths in any partial solution.

For property 4, suppose that
there are exactly 4 complete paths and 2 partial paths in a partial
solution. Then one of the complete paths, call it $P_1$, has to go through the vertex
labeled $-3m+1$; in fact, as this vertex has degree 1, it is the
endpoint of $P_1$. Observe that $P_1$ cannot go through
$4m-1$ (otherwise the number of complete paths is at most
3). Therefore, vertex $3m-1$ is the only vertex with positive label at
most $3m-1$ that is reachable from $-3m+1$, and hence it has to be the
other endpoint of $P_1$.

Let $P^+$ be the partial path with one endpoint in $b^+$. The label of
the other endpoint is in $[6m+1,7m]$, otherwise $P^+$ would
separate $-3m+1$ and $3m-1$, the endpoints of $P_1$. Suppose
therefore that $6m+i^+$ is the other endpoint of $P^+$.  Consider now
the complete path $P_2$ going through vertex $4m-1$. As $P_2$ cannot
go through vertex $-5m+m$, the negative endpoint has label from
$[-4m+1,-4m+m]$ and hence the positive endpoint has a label from
$[4m-m,4m-1]$. Suppose that the label of the positive endpoint is
$4m-j$. We claim that $j\le i^+$. Otherwise (i.e., when $j\ge
i^++1$), consider vertices $6m+i^+$ and $4m-(i^++1)\ge 4m-j$;
note that second vertex is ``one step to the right and above'' to the
first in Figure~\ref{fig:selector}. These two vertices separate the
endpoints of $P_1$.
Now all three of the paths $P^+$, $P_1$, $P_2$ contain at
least one of these two vertices, contradicting that the paths are
disjoint. Thus $j\le i^+$ holds, and the negative endpoint of $P_2$
has label at most $-4m+j \le -4m+i^+$.

Similar arguments show that
\begin{itemize}[noitemsep]
\item there is a complete path $P_4$ connecting $5m+1$ and $-5m-1$,
\item there is a partial path $P^-$ connecting $b^-$ and $-m-i^-$ for some $1\le i^- \le m$, and
\item there is a partial path $P_3$ going through $-5m+m$ whose positive endpoint is at most $5m-i^-$.
\end{itemize}
Summarizing, the negative endpoint of $P_2$ has to be at or to the right of $-4m+i^+$ and the positive endpoint of $P_3$ has to be at or to the left of $5m-i^-$. As $P_2$ and $P_3$ are disjoint, this is only possible if $i^+\ge i^-$, what we had to show.
\end{proof}

The following lemma is a generic construction of a gadget that has 8 outputs and can represent only a prescribed set of 8-tuples on these outputs.
\begin{lemma}\label{lem:generalgadget}
  Let $m$ be a positive integers. Let $t_1,\dots,t_m\in \mathbb{Z}^8$
  be a sequence of 8-tuples where every coordinate is an integer
  greater than $7m$.  Then one can construct in polynomial time a
  gadget $G$ such that the following holds:
\begin{enumerate}[noitemsep]
\item $G$ is an embedded planar graph of constant treewidth and the boundary vertices $(b_1,\dots, b_8)$ appear in this order around a single face.
\item For every $1\le i \le m$, gadget $G$ has partial solution containing $6$ complete paths and $8$ partial paths representing $t_i$.
\item Every partial solution of $G$ contains at most $6$ complete paths. 
\item If a partial solution of $G$ contains exactly $6$ complete paths and 8 partial paths, then it represents $t_i$ for some $1\le i \le m$.
\item Every label appearing on a vertex of $G$ is either in $[-7m,7m]$ or one of the coordinates of some $t_i$. 
\end{enumerate}
\end{lemma}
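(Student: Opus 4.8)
The plan is to build $G$ out of a bounded number of copies of the selector gadget of Lemma~\ref{lem:selectorgadget}, wired together by identifying boundary vertices, arranged so that in any partial solution attaining the maximum number of complete paths all the selectors are forced into one common state $i\in[m]$, and in that state the eight partial paths leaving $G$ reach terminals labelled $(t_i)_1,\dots,(t_i)_8$.

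Concretely, I would lay out a cyclic chain of selectors (plus possibly one ``master'' selector encoding the choice of $i$) in the plane so that consecutive selectors in the chain share a single boundary vertex and the eight free boundary vertices $b_1,\dots,b_8$ appear in order on the outer face. This layout makes property~1 immediate: after deleting the $O(1)$ shared vertices, what remains is a disjoint union of selectors, each of constant treewidth. At each shared vertex $v$, the two partial paths that --- in the ``honest'' state-$i$ solutions of the two incident selectors --- end at $v$ are forced to merge into one path of the combined gadget running from a terminal of one selector through $v$ to a terminal of the other. I would choose the labels of these two terminals so that this merged path satisfies $\mu(u)\mu(v)<0$ and $\mu(u)+\mu(v)\le 0$ \emph{exactly} when the state index arriving from one side is no larger than the one arriving from the other; this is where the hypothesis that every coordinate of every $t_j$ exceeds $7m$ (while all internal selector labels lie in $[-5m,7m]$ by Lemma~\ref{lem:selectorgadget}(5)) is used, both to make these ``connector'' terminals extreme enough and to keep them from disturbing the internal logic of the selectors. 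Reading the constraints around the cycle yields $i_1\le i_2\le\cdots\le i_1$, hence all indices coincide; and I would relabel the output terminals of the selectors --- preserving the order of the labels on each selector, which is all the selector's internal logic needs, so the relabelling $i\mapsto (t_i)_\ell$ need not be monotone --- so that in the common state $i$ exactly eight of the partial paths escape to $b_1,\dots,b_8$, ending at terminals of the prescribed labels $(t_i)_\ell$.

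The steps, in order: (a) fix the layout and the labels, and verify planarity and constant treewidth (property~1) and property~5, which follows from Lemma~\ref{lem:selectorgadget}(5) together with the chosen relabellings; (b) produce, for each $i$, the honest partial solution by combining the state-$i$ partial solutions of all the selectors (Lemma~\ref{lem:selectorgadget}(2)) and checking that the designated partial paths merge into valid complete paths at the shared vertices, which gives the required $6$ complete and $8$ partial paths (property~2); (c) prove the upper bound of $6$ complete paths (property~3) by deleting, inside each selector, the fixed $O(1)$-size bottleneck set whose removal kills all complete paths there (as in the proof of Lemma~\ref{lem:selectorgadget}(3)) together with the shared vertices, leaving no component with a valid complete path; (d) prove property~4 by showing that a partial solution with exactly $6$ complete and $8$ partial paths must make every selector tight, extracting its indices $i^-\le i^+$ from Lemma~\ref{lem:selectorgadget}(4), using the forced merges to obtain the reverse inequalities around the cycle, concluding all indices equal some $i$, and reading off that the represented tuple is $t_i$.

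The main obstacle will be step~(d), together with the complete-path bookkeeping needed to land on exactly $6$: the connector labels have to be chosen delicately enough that the honest merge at each shared vertex is the \emph{only} way to keep all selectors simultaneously tight, so that a partial solution cannot buy extra freedom by sacrificing one complete path in some selector and thereby break out of the cyclic chain of inequalities. Making the arithmetic $\mu(u)\mu(v)<0$, $\mu(u)+\mu(v)\le 0$ encode precisely the intended inequality between neighbouring states, while simultaneously relabelling the eight escaping paths to the arbitrary prescribed values $(t_i)_\ell$, is the technical heart of the argument, and the $>7m$ hypothesis on the coordinates is exactly what makes this possible.
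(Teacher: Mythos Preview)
Your plan has a concrete arithmetic obstruction that cannot be patched within the lemma as stated: the complete-path count cannot come out to $6$. Each copy of the selector gadget from Lemma~\ref{lem:selectorgadget} already contributes $4$ complete paths in its honest state-$i$ partial solution, so any construction using $s\ge 2$ selectors has at least $4s\ge 8$ complete paths in the honest solution \emph{before} you count the extra complete paths created by merging partial paths at shared boundary vertices. Hence property~2 (a solution with exactly $6$ complete paths) and property~3 (at most $6$) cannot both hold. With $s=1$ there is no cyclic chain and only two boundary vertices rather than eight, so the cyclic-inequality mechanism you describe never gets off the ground. A construction along your lines might prove a variant of the lemma with larger constants in place of $6$ and $8$ (which would still suffice downstream), but it does not prove the lemma as written, and the bookkeeping you would need to make the merges valid while simultaneously relabelling eight outputs to the arbitrary values $(t_i)_\ell$ is considerably more delicate than your sketch acknowledges.

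The paper takes a different and simpler route that avoids all of this. It uses a \emph{single} selector and attaches its two boundary vertices $b^+,b^-$ to the top-left and bottom-left corners of a $10\times 10m$ grid; the remaining eight vertices of the leftmost grid column become $b_1,\dots,b_8$. The top row of the grid consists of terminals, split into $m$ blocks of ten: block $i$ carries the labels $-6m-i,\ t_{i,1},\dots,t_{i,8},\ m+i$. In the honest state-$i$ solution, the selector's two partial paths (reaching terminals labelled $6m+i$ and $-m-i$) are extended through the grid to the top-row terminals labelled $-6m-i$ and $m+i$, becoming two further complete paths (so $4+2=6$), while eight partial paths from $b_1,\dots,b_8$ thread between them to $t_{i,1},\dots,t_{i,8}$. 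For property~4, the selector forces indices $i^-\le i^+$ at its two exits; the validity of the two extended complete paths then places their top-row endpoints in blocks $\ge i^+$ and $\le i^-$ respectively, and planarity of the grid (the two long paths and the eight partial paths must be pairwise disjoint) forces $i^-=i^+=:i$ with the eight partial paths hitting exactly $t_{i,1},\dots,t_{i,8}$ in order. The hypothesis $t_{i,j}>7m$ is used only to ensure that no top-row terminal other than $-6m-i$ or $m+i$ can serve as an endpoint of a complete path.
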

\begin{proof}
Let $t_i=(t_{i,1},\dots,t_{i,8})$ for every $1\le i \le m$.
  The construction of the gadget starts with an $10\times 10m$ grid
  (see Figure~\ref{fig:gadget}).  Every vertex of the top row is a
  terminal and they are labeled the following way. The columns are
  divided into $m$ blocks of 10 columns each. In block $i$ (where $1\le i \le m$),
  the first vertex is labeled $-6m-i$, the last vertex is labeled
  $m+i$, and the 8 vertices in between are labeled using the components
  of $t_i$, that is, by $t_{i,1}$, $\dots$, $t_{i,8}$ (see
  Figure~\ref{fig:gadget}). 
\begin{figure}
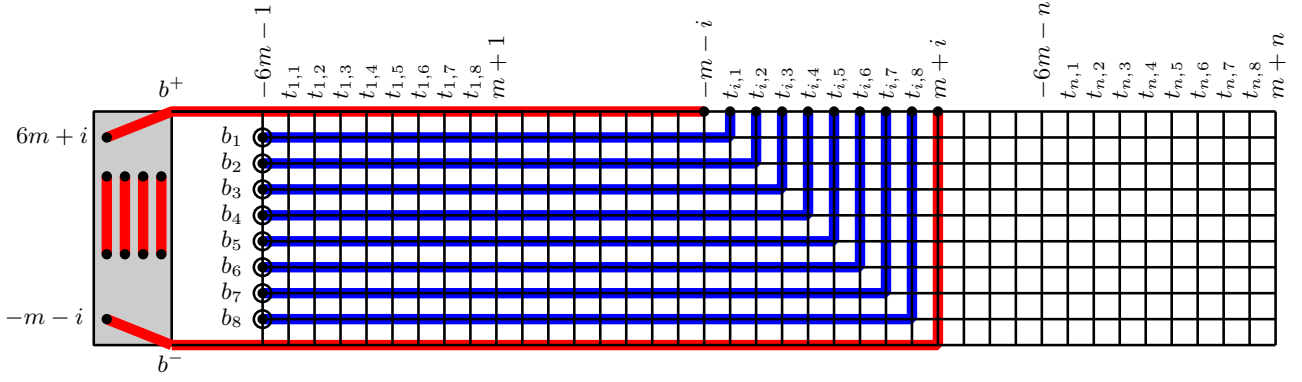

\begin{center}
{\footnotesize \svg{\linewidth}{gadget}}
\caption{Gadget construction of Lemma~\ref{lem:generalgadget}. The figure shows a partial solution representing $(t_{i,1},\dots, t_{i,8}$ and having 6 complete paths (red) and 8 partial paths (blue). Note that 4 of the complete paths are in the selector gadget (shaded box). }\label{fig:gadget}
\end{center}
\end{figure}

Using Lemma~\ref{lem:selectorgadget}, we construct a selector gadget with parameter $m$ and connect boundary vertex $b^+$ of the
selector with the top left vertex of the grid and boundary vertex
$b^-$ of the selector with the bottom left vertex of the grid. The
remaining 8 vertices of the leftmost column are the boundary vertices
of the gadget. 

Property 1 and 5 are clear from the way the gadget is defined (note that the selector gadget uses labels only in the range $[-5m,7m]$.
For property 2, we construct a partial solution the following way. By property 4 of Lemma~\ref{lem:selectorgadget}, the selector gadget has a partial solution with 4 complete paths and two partial paths, one connecting $b^+$ and $6m+i$, the other connecting $b^-$ and $-m-i$. We extend these partial paths by connecting $b^+$ with $-6m-i$ in the top row and $b^-$ with $m+i$ in the top row (see Figure~\ref{fig:gadget}). Then we can connect the terminals between $-6m-i$ and $m+i$ in the top row, that is, the terminals $t_{i,1}$, $\dots$, $t_{i,8}$ to the boundary vertices $b_1$, $\dots$, $b_8$.

To see property 3, observe that no complete path can have both of its endpoints on the top row (here we use that every $t_{i,j}$ is at least $7m+1$). Therefore, every complete path is either inside the selector gadget or uses a boundary vertex of the selector gadget. Property 4 of Lemma~\ref{lem:selectorgadget} implies that there are at most 4 complete paths inside the selector gadget, thus there can be at most 6 complete paths in the gadget we are constructing.

For property 4, consider a partial solution with 6 complete paths and
8 partial paths. As we have seen in the previous paragraph, this is
possible only if there are 4 complete paths completely contained in
the selector gadget and there are two complete paths each connecting a
vertex of the selector gadget to the top row. Let $P^+$ (resp., $P^-$) be the complete path connecting a vertex of the selector gadget to the top row via boundary vertex $b^+$ (resp., $b^-$) of the selector gadget. By property 4 of the selector gadget, we may assume that the positive endpoint of $P^+$ is at a vertex labeled $6m+i^+$ and the negative endpoint of $P^-$ is at a vertex labeled $-6m-i^-$ for some $1\le i^-\le i^+ \le m$. This means that the negative endpoint of $P^+$ is at or to the right of the vertex labeled $-6m-i^+$ on the top row, and the positive endpoint of $P^-$ is at or to the left of the vertex labeled $m+i^-$ on the top row (here we use again that every $t_{i,j}$ is at least $7m+1$). By planarity, this is only possible if $i^+=i^-=i$ and the two endpoints are exactly $-6m-i$ and $m+i$. Again by planarity, this implies that the 8 partial paths connect the boundary vertices $b_1$, $\dots$, $b_8$ to the vertices between 
$-6m-i$ and $m+i$, that is, to the vertices $t_{i,1}$, $\dots$, $t_{i,8}$ and exactly in this order. In other words, the partial solution represents the tuple $t_i$, what we had to show.
\end{proof}

It is quite straightforward to constuct the positive gadget of
Lemma~\ref{lem:maingadget} using Lemma~\ref{lem:generalgadget}. Then
we argue that the negative gadget can be obtained from the positive
gadget by simple transformations.

\restatelemma{\restatemaingadget*}
\begin{proof}
  We construct the positive gadgets as follows. Let $m=|S|\le n^2$ and
  let $t_1$, $\dots$, $t_m$ be an ordering of the tuples $t_{(x,y)}$
  for every $(x,y)\in S$ (as defined in the statement of the lemma).
  As $1\le x,y\le n$, the condition $B>8n^2\ge 8m$ 
  implies that the integers appearing in these tuples are greater than $7m$. Therefore, we can invoke
  Lemma~\ref{lem:generalgadget} to construct the required
  gadget. Properties 1--4 of Lemma~\ref{lem:generalgadget} imply that
  Properties 1--4 of Lemma~\ref{lem:maingadget} are satisfied.

  For the construction of the negative gadget, let us set $\Delta=-B+8n^2+1$ and $B^+:=B+\Delta=8n^2+1$.  Let us construct
  as above the positive gadget $G$ for $n$, $S$, and $B^+$; let $\mu^+$ be the
  labeling of the gadget.  Note that every negative label in $\mu^+$
  is at least $-7n^2\le -7m$ (by Property 5 of Lemma~\ref{lem:generalgadget})
  and every positive label is at most $B^++n=8n^2+n+1$. We obtain the negative
  gadget by defining a new labeling $\mu$ the following way: if
  $\mu^+(v)$ is negative, then let $\mu(v)=\mu^+(v)+\Delta$ (which is a
  positive number); if $\mu^+(v)$ is positive, then let
  $\mu(v)=\mu^+(v)-\Delta$ (which is at most $8n^2+n+1-\Delta=B+n<0$ by the assumption $B<-n$).  We claim that any
  path $P$ is a valid path in the labeling $\mu^+$ if and only if it
  is a valid path in the labeling $\mu$. Indeed, for any two vertices
  $u$ and $v$, the sign of $\mu(u)\mu(v)$ is the same as the sign of
  $\mu^+(u)\mu^+(v)$, and if this sign is negative, then
  $\mu(u)+\mu(v)=\mu^+(u)+\mu^+(v)$ holds (as one of the two labels
  were increased by $\Delta$ and the other was decreased by $\Delta$). Therefore,
  the two gadgets have the same set of valid paths and hence the same
  properties. If a partial path has an endpoint labeled with (the
  positive number) $B^++x$ in the positive gadget, then this
  translates to a partial paths with endpoint labeled with $B+x$ in
  the negative gadget.
\end{proof}


\fi

\bibliographystyle{abbrv}
\bibliography{fptmaxdisjoint}
\end{document}